\newif\ifdraft \drafttrue
\newif\iffull \fulltrue
\newif\ifcomment \commentfalse
\begin{document}
\title{Targeted Intervention in Random Graphs
}
%
%
\author{William Brown
\and
Utkarsh Patange 
}
\authorrunning{W. Brown and U. Patange}
%
\institute{Columbia University, New York NY 10027, USA\\
\email{\{w.brown, utkarsh.patange\}@columbia.edu}}

\maketitle

\begin{abstract}
We consider a setting where individuals interact in a network, each choosing actions which optimize utility as a function of neighbors' actions. A central authority aiming to maximize social welfare at equilibrium can intervene by paying some cost to shift individual incentives, and the optimal intervention can be computed using the spectral decomposition of the graph, yet this is infeasible in practice if the adjacency matrix is unknown. In this paper, we study the question of designing intervention strategies for graphs where the adjacency matrix is unknown and is drawn from some distribution. For several commonly studied random graph models, we show that there is a single intervention, proportional to the first eigenvector of the expected adjacency matrix, which is near-optimal for almost all generated graphs when the budget is sufficiently large.  We also provide several efficient sampling-based approaches for approximately recovering the first eigenvector when we do not know the distribution.  On the whole, our analysis compares three categories of interventions: those which use no data about the network, those which use some data (such as distributional knowledge or queries to the graph), and those which are fully optimal.  We evaluate these intervention strategies on synthetic and real-world network data, and our results suggest that analysis of random graph models can be useful for determining when certain heuristics may perform well in practice.


\keywords{random graphs, intervention, social welfare, sampling}
\end{abstract}

\section{Introduction}
\label{intro}

Individual decision-making in many domains is driven by personal as
well as social factors. If one wants to decide a level of time, money,
or effort to exert on some task, the behaviors of one's friends or
neighbors can be powerful influencing factors. We can view these
settings as games where agents in a network are playing some game,
each trying to maximize their individual utility as a function of
their ``standalone value'' for action as well as their neighbors'
actions. 
The actions of agents who are ``central'' in a network can have large ripple effects. 
Identifying and understanding the role of central agents is of high importance for tasks ranging from
microfinance
\cite{Banerjee1236498} and vaccinations \cite{10.1093/restud/rdz008},
to tracking the spread of information throughout a community \cite{banerjee14SSRN}. We view our work as providing theoretical support for heuristic approaches to intervention in these settings.

A model for such a setting is studied in recent work by Galeotti, Golub, and Goyal
\cite{Galeotti2017TargetingII}, where they ask the natural question of
how a third party should ``intervene'' in the network to maximally
improve social welfare at the equilibrium of the game. Interventions are modeled by
assuming that the third party can pay some cost to adjust the
standalone value parameter of any agent, and must decide how to
allocate a fixed budget. This may be interpreted as suggesting that
these targeted agents are subjected to advertizing, monetary
incentives, or some other form of encouragement. For their model, they
provide a general framework for computing the optimal intervention
subject to any budget constraint, which can be expressed in terms of
the spectral decomposition of the graph. For large budgets, the
optimal intervention is approximately proportional to the first
eigenvector of the adjacency matrix of the graph, a common measure of
network centrality.

While this method is optimal, and computable in polynomial time if the
adjacency matrix is known, it is rare in practice that we can hope to
map all connections in a large network. For physical networks, edges representing personal connections may be far harder to map than simply identifying the set of agents, and for large digital networks we may be bound by computational or data access constraints. However, real-world networks
are often well-behaved in that their structure can be approximately
described by a simple generating process. If we cannot afford to map an entire network, is optimal targeted intervention feasible at all? 
A natural target would be to implement interventions which are competitive with the optimal intervention, i.e.\ obtaining almost the same increase in social welfare, without access to the full adjacency matrix. 
Under what conditions can we use knowledge of the distribution a graph is drawn from to compute a near-optimal
intervention without observing the realization of the graph? 
Without knowledge of the distribution, how much information about the graph is necessary to find such an intervention?
Can we ever reach near-optimality with no information about the graph?
These are the questions we address.

\subsection{Contributions}

Our main result shows that for random graphs with independent edges,  
the first eigenvector of the ``expected adjacency matrix'', representing the probability of each edge being included in the graph, constitutes a near-optimal intervention simultaneously for almost all generated graphs,
when the budget is large enough and the expected matrix satisfies basic spectral conditions. 
We further explore graphs with given expected degrees, Erd\H{o}s-R\'enyi graphs, power law graphs, and stochastic block model graphs as special cases for which our main result holds. 
In these cases, the first eigenvector of the expected matrix can often be characterized by a  
simple expression of parameters of the graph distribution.


Yet in general, this approach still assumes a fair amount of knowledge about the distribution, and that we can map agents in the network to their corresponding roles in the distribution.
We give several sampling-based methods 
for approximating the first eigenvector
of a graph in each of the aforementioned special cases, which do not assume knowledge of agent identities or distribution parameters, other than group membership in the stochastic block model. 
These methods assume different query models for accessing information about the realized graph, such as the ability to query the existence of an edge or to observe a random neighbor of an agent.
Using the fact that the graph was drawn from {\it some} distribution, we can reconstruct an approximation of the first eigenvector more efficiently than we could reconstruct the full matrix. The lower-information settings we consider can be viewed as assumptions about qualitative domain-specific knowledge, such as a degree pattern which approximately follows an (unknown) power law distribution, or the existence of many tight-knit disjoint communities.

We evaluate our results experimentally on both synthetic and real-world networks for a range of parameter regimes. We find that our heuristic interventions can perform quite well compared to the optimal intervention, even at modest budget and network sizes. These results further illustrate the comparative efficacies of interventions requiring varying degrees of graph information under different values for distribution parameters, budget sizes, and degrees of network effects.

On the whole, our results suggest that explicit mapping of the connections in a network is unnecessary to implement near-optimal targeted interventions in strategic settings, and that distributional knowledge or limited queries will often suffice.

\subsection{Related Work}
Recent work by Akbarpour, Malladi, and Saberi
\cite{Akbarpour:2018:DSV:3219166.3219225} has focused on the challenge
of overcoming network data barriers in targeted interventions under a diffusion model of social influence.
In this setting,
for $\gnp$ and power law random graphs, they derive bounds on the
additional number of ``seeds'' needed to match optimal targeting when
network information is limited. A version of this problem where network information can be purchased is studied in \cite{eckles2019seeding}.
Another similar model was employed by Candogan, Bimpikis, and Ozdaglar \cite{Candogan2012} where they
study optimal pricing strategies to maximize profit of a monopolist
selling service to consumers in a social network where the consumer
experiences a positive local network effect, where notions of centrality play a key role. 
Similar targeting strategies are considered in \cite{Demange2017}, where the planner tries to maximize
aggregate action in a network with complementarities.
\cite{Huang2019TheVO} studies the efficacy of blind interventions in a pricing game for the special case of Erd\H{o}s-R\'enyi graphs.
In \cite{parise2018graphon}, targeted interventions are also studied for ``linear-quadratic games'', quite similar to those from \cite{Galeotti2017TargetingII}, in the setting of infinite-population graphons, where a concentration result is given for near-optimal interventions. 

Our results can be viewed as qualitatively similar findings to the above results
in the model of
\cite{Galeotti2017TargetingII}.  
While they have showed
that exact optimal interventions can be constructed on a graph with
full information, we propose that local information is enough to
construct an approximately optimal intervention for many 
distributions of random graphs. It is argued in
\cite{Breza2017} that collecting data of this kind 
(aggregate relational data)
is easier in real networks compared to obtaining full network
information.
We make use of concentration inequalities for the spectra of random adjacency matrices; there is a great deal of work studying various spectral properties of random graphs (see e.g.\ \cite{Chung2003,chungradcliffe,Chung2002,10.1145/335305.335326,RePEc:arx:papers:1709.10402}). Particularly relevant to us is \cite{RePEc:arx:papers:1709.10402}, which characterizes the asymptotic distributions of various centrality measures for random graphs. There is further relevant literature for studying centrality in graphons, see e.g. \cite{DBLP:journals/corr/Avella-MedinaPS17}.
Of relevance to our sampling techniques, a method for estimating eigenvector centrality via sampling is given in \cite{ruggeri2019sampling}, and the task of finding a ``representative'' sample of a graph is discussed in \cite{10.1145/1150402.1150479}. 

\section{Model and Preliminary Results}\label{prelims}

Here we introduce the ``linear-quadratic'' network game setting from \cite{Galeotti2017TargetingII}, also studied in e.g. \cite{parise2018graphon}, which captures the dynamics of personal and social motivations for action in which we are interested.
\subsection{Setting}
Agents are playing a game on an undirected graph with adjacency matrix $A$.
Each agent takes an action $a_i \in \mathbb{R}$ and obtains individual utility given by:
\begin{align*}
    u_i(a, A) =&\; b_i a_i - \frac{1}{2}a_i^2 + \beta \sum_{j} A_{ij}a_{i}a_{j}
\end{align*}    
Here, $b_i$ represents agent $i$'s ``standalone marginal value'' for action. The parameter $\beta$ controls the effect of strategic dynamics, where a positive sign promotes complementary behavior with neighbors and a negative sign promotes acting in opposition to one's neighbors. In this paper we focus on the case where each value $A_{ij}$ is in $\{0,1\}$ and $\beta > 0$. 
The assumption that $\beta>0$ corresponds to the case where agents' actions are complementary, meaning that an
increase in action by an agent will increase their neighbors'
propensities for action.
\footnote{When $\beta < 0$, neighbors' actions act as substitutes, and one obtains less utility when neighbors increase levels of action. In that case, the optimal intervention for large budgets is approximated by the last eigenvector of the graph, which measures its ``bipartiteness''.}
We assume that $b_i \geq 0$ for each agent as well. 

The matrix $M = (I - \beta A)$ can be used to determine the best response for each agent given their opponents' actions. The best response vector $a^*$, given current actions $a$, can be computed as:  
\begin{align*}
a^* = b + \beta A a.
\end{align*}
Upon solving for $a^* = a$, we get that
$a^* = (I - \beta A)^{-1} b = M^{-1} b$, giving us the Nash
equilibrium for the game as all agents are simultaneously best
responding to each other. We show in \Cref{best-response} that when agents begin with null action values, repeated best responses will converge to equilibrium, and further that the new equilibrium is likewise reached after intervention.

Our results will apply to cases where all eigenvalues of $M$ are
almost surely positive, ensuring invertibility.
\footnote{If $\beta > 0$ and $M$ is not invertible, equilibrium actions will be infinite for all agents in some component of the graph.}
The social welfare of the game $W = \sum_i u_i$ can be computed as a function of the equilibrium actions:
\begin{align*}
    W = \frac{1}{2} (a^*)^{\top} a^*
\end{align*}
Given the above assumptions, equilibrium actions $a^*_i$ will always be non-negative. 
    
\subsection{Targeted Intervention}
In this game, a central authority has the ability to modify agents'
standalone marginal utilities from $b_i$ to $\hat{b}_i$ by paying a
cost of $(b_i - \hat{b}_i)^2$, and their goal is to maximize social
welfare subject to a budget constraint $C$:
$$\max \sum_i u_i \quad  \text{subject to}  \quad \sum_i (b_i - \hat{b}_i)^2 \leq C.$$
Here, an {\it intervention} is a vector $y = \hat{b}- b$ such that
$\norm{y}^2 \leq C$. Let $W(y)$ denote the social welfare at
equilibrium following an intervention $y$.  
\footnote{Unless specified otherwise, $\norm{\cdot}$ refers to the $\ell_2$
norm. When the argument is a matrix, this denotes the associated spectral norm.}  
It is shown in \cite{Galeotti2017TargetingII}
that the optimal budget-constrained intervention for any $C$ can
be computed using the eigenvectors of $A$, and that in the
large-budget limit as $C$ tends to infinity, the optimal
intervention approaches $\sqrt{C} \cdot v_1(A)$.  Throughout, we
assume $v_i(A) $ is the unit $\ell_2$-norm eigenvector associated
with $\lambda_i$, the $i$th largest eigenvalue of a matrix $A$. We
also define $\alpha_i = \frac{1}{(1-\beta\lambda_i)^2}$, which is
the square of the corresponding eigenvalue of $M^{-1}$.  Note that
we do not consider eigenvalues to be ordered by absolute value;
this is done to preserve the ordering correspondence between
eigenvalues of $A$ and $M^{-1}$. $A$ may have negative
eigenvalues, but all eigenvalues of $M^{-1}$ will be positive when
$\beta \lambda_1 < 1$, as we will ensure throughout. 

The key result we use from \cite{Galeotti2017TargetingII} states that when $\beta$ is positive, 
as the budget increases
the cosine similarity between the optimal intervention $y^*$ and the first eigenvector of a graph, which we denote by $\rho(v_1(A), y^*)$,\footnote{
The cosine similarity of two non-zero vectors $z$ and $y$ is $\rho(z, y) = \frac{z \cdot y}{\norm{z}\norm{y}}$.
For unit vectors $x, y$, by the law of cosines,
    $\norm{x - y}^2 =\; 2(1 - \rho(x, y))$,
and so
    $1 - \frac{\norm{x-y}^2}{2} = \rho(x, y)$. 
Thus $\norm{x-y} < \epsilon$ for $\epsilon > 0$ if and only if $\rho(x, y) > 1 - \epsilon^2/2$.} approaches 1 at a 
rate 
depending on the (inverted) spectral gap of the adjacency matrix.\footnote{
It will sometimes be convenient for us to work with what we call the {\it inverted spectral gap} of a matrix $A$, which is the smallest value $\kappa$ such that $\abs{\lambda_i(A) \leq \kappa \cdot \lambda_1(A)}$.
}


Our results will involve quantifying the {\it competitive ratio} of an intervention $y$, which we define as
$\frac{W(y)}{W(y^*)}$, where $W(\cdot)$ denotes the social welfare at
equilibrium after an intervention vector is applied, and where $y^* = \arg\max_{x
  ~:~ \norm{x} = \sqrt C} W(x)$.
This ratio is at most 1, and maximizing it will be our objective for evaluating interventions.

\subsection{Random Graph Models}

We introduce several families of random graph distributions which we consider throughout. All of these models generate graphs which are undirected and have edges which are drawn independently. 

\begin{definition}[Random Graphs with Independent Edges]
A distribution of random graphs with independent edges is specified by a symmetric matrix $\ba \in [0,1]^{n \times n}$.
A graph is sampled by including each edge $(i,j)$ independently with probability $\ba_{ij}$. 
\end{definition}

Graphs with given expected degrees ($\gw$, or Chung-Lu graphs) and stochastic block model graphs, 
often used as models of realistic ``well-behaved'' networks,
are notable cases of this model which we will additionally focus on.


\begin{definition}[$G(w)$ Graphs]
  A $\gw$ graph is an undirected graph with an expected degree
  sequence given by a vector $\w$, whose length (which we denote by
  $n$) defines the number of vertices in the graph. For each pair of
  vertices $i$ and $j$ with respective expected degrees $w_i$ and
  $w_j$, the edge $(i, j)$ is included independently with probability
  $\ba_{ij} = \frac{w_i w_j}{\sum_{k \in [n]} w_k}$.
\end{definition}
Without loss of generality, we impose an ordering on $w_i$ values so
that $w_1\geq w_2\geq\ldots\geq w_n$. 
To ensure that each edge probability as described above is in $[0,1]$,
we assume throughout that for all vectors $\w$ we have 
that $w_1 \leq \sqrt{\sum_{k \in [n]} w_k}$.

$\gnp$ graphs and power law graphs are well-studied examples of graphs which can be generated by the $\gw$
model.\footnote{There are several other well-studied models of graphs
  with power law degree sequences, such as the BA preferential
  attachment model, as well as the fixed-degree model involving a
  random matching of ``half-edges''. Like the $\gw$ model, the latter
  model can support arbitrary degree sequences. We restrict ourselves
  to the independent edge model described above.} 
For $\gnp$ graphs, $w$ is a uniform vector where $w_i = np$ for each $i$.  Power law graphs are another notable special case where 
$w$ 
is a {\it power law sequence} $\left\lbrace w_i\right\rbrace_{i=1}^n$ such that $w_i = c\brack{i+i_0}^{-\frac{1}{\sigma-1}}$ for $\sigma > 2$, some constant $c>0$, and some integer $i_0\geq 0$. In such a sequence, the number of
elements with value $x$ is asymptotically proportional to $\frac{1}{x^{\sigma}}$.
  
\begin{definition}[Stochastic Block Model Graphs]
    A stochastic block model graph with $n$ vertices is undirected and has $m$ groups for some $m \leq n$. Edges are drawn independently according to a matrix $\ba$, and the probability of an edge between two agents depends only on their group membership. For any two groups $i$ and $j$, there is an edge probability $p_{ij}\in [0,1]$ such that $\ba_{kl} = p_{ij}$ for any agent $k$ in group $i$ and agent $l$ in group $j$.  
\footnote{If $m=n$, the stochastic block model can express any distribution of random graphs with independent edges, but will be most interesting when there are few groups.
}
\end{definition}  

For each graph model, one can choose to disallow self-loops by setting $\ba_{ii}=0$ for $1\leq i\leq n$, as is standard for $\gnp$ graphs. Our results will apply to both cases.

\section{Approximately Optimal Interventions}
\label{approxopt}

The main idea behind all of our intervention strategies is to target
efforts proportionally to the first eigenvector of the expected
adjacency matrix.  Here
we assume that this
eigenvector is known exactly. In \Cref{sbm}, we discuss cases when an
approximation of the eigenvector can be computed with zero or minimal
information about the graph.
Our main theorem for random graphs with independent edges shows conditions under which an intervention
proportional to the first eigenvector of the expected matrix $\ba$ is near-optimal.

We define a property for random graphs which we call {\it $(\epsilon, \delta)$-concentration} which will ensure that the expected first eigenvector constitutes a near-optimal intervention.
In essence, this is an explicit quantification of the asymptotic properties of ``large enough eigenvalues'' and ``non-vanishing spectral gap'' for {\it sequences} of random graphs from \cite{RePEc:arx:papers:1709.10402}. Intuitively, this captures graphs which are ``well-connected'' and not too sparse.
One can think of the first eigenvalue as a proxy for density, and the (inverse) second eigenvalue as a proxy for regularity or degree of clustering (it is closely related to a graph's mixing time). Both are important in ensuring concentration, and they trade off with each other (via the spectral gap condition) for any fixed values of $\epsilon$ and $\delta$.

\begin{definition}[$(\epsilon,\delta)$-Concentration]
A random graph with independent edges specified by $\ba$ satisfies $(\epsilon, \delta)$-concentration for $\epsilon, \delta\in(0,1)$ if:
\begin{enumerate}
    \item\label{dmax-rg-cond} The largest expected degree $d_{\max} = \max_i \sum_{j \in [n]} \ba_{ij}$
    is at least $\frac{4}{9}~\log(2n / \delta)$
    \item\label{li-rg-cond} The inverted spectral gap of $\ba$ is at most $\kappa$
    \item\label{l1-rg-cond} The quantity $\lambda_1(\ba) \cdot(1 - \kappa^2) $ is at least $\frac{1024\sqrt{d_{\max} \log(2n / \delta)}}{\epsilon^2}$
\end{enumerate}  
\end{definition}

\begin{theorem}\label{thm:sbm-main}

If $\ba$ satisfies $(\epsilon,\delta)$-concentration,
then with probability at least $1 - \delta$, 
the competitive ratio of $y = \sqrt{C}v_1(\ba)$ for a graph drawn from $\ba$ is at least $1 - \epsilon$ for a sufficiently large budget $C$ if
the spectral radius of the sampled matrix $A$ is less than $1/\beta$. 
\end{theorem}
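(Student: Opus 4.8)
The plan is to bound the competitive ratio of $y = \sqrt{C}\,v_1(\ba)$ by reducing it to the cosine similarity $\rho(v_1(\ba), v_1(A))$ between the first eigenvector of the expected matrix and that of the realized graph. First I would recall two facts. By \cite{Galeotti2017TargetingII}, for the realized graph the optimal intervention direction $y^*/\norm{y^*}$ converges to $v_1(A)$ as $C\to\infty$, at a rate governed by the inverted spectral gap; this is what lets us absorb the ``sufficiently large budget'' hypothesis. Second, in the large-budget regime the standalone term $b$ is negligible and $W(y)\approx \tfrac12\,y^\top M^{-2}y = \tfrac12\sum_i \alpha_i\,(y\cdot v_i(A))^2$, so expanding $y$ in the eigenbasis of $A$ and using $\alpha_1\geq\alpha_i$ (which holds since $\beta\lambda_1<1$) shows the competitive ratio is at least $\rho(y, v_1(A))^2$ up to lower-order terms. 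Since $y$ is a positive multiple of $v_1(\ba)$, we have $\rho(y,v_1(A))=\rho(v_1(\ba),v_1(A))$, so it suffices to show $\rho(v_1(\ba),v_1(A))\geq 1-\epsilon/2$ with probability at least $1-\delta$, as then $\rho^2 \geq (1-\epsilon/2)^2 \geq 1-\epsilon$.

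The core of the argument is therefore a perturbation bound showing $v_1(A)$ is close to $v_1(\ba)$. I would first invoke a spectral-norm concentration inequality for random symmetric matrices with independent entries, of Chung--Radcliffe type \cite{chungradcliffe}: condition~(\ref{dmax-rg-cond}) of $(\epsilon,\delta)$-concentration, namely $d_{\max}\geq \tfrac49\log(2n/\delta)$, guarantees that with probability at least $1-\delta$ one has $\norm{A-\ba}\leq c\sqrt{d_{\max}\log(2n/\delta)}$ for an explicit constant $c$. The factor $\tfrac49$ and the constant $1024$ in condition~(\ref{l1-rg-cond}) are chosen precisely so the downstream bounds close, and I would track them through rather than re-optimize.

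Next I would apply an eigenvector perturbation theorem (Davis--Kahan). Condition~(\ref{li-rg-cond}) says the inverted spectral gap of $\ba$ is at most $\kappa$, so $\lambda_1(\ba)$ is separated from the rest of the spectrum of $\ba$ by at least $(1-\kappa)\lambda_1(\ba)$. By Weyl's inequality $\abs{\lambda_i(A)-\lambda_i(\ba)}\leq \norm{A-\ba}$, so the separation between $\lambda_1(\ba)$ and the remaining eigenvalues of $A$ is at least $(1-\kappa)\lambda_1(\ba)-2\norm{A-\ba}$, and condition~(\ref{l1-rg-cond}) forces this to remain a constant fraction of $\lambda_1(\ba)$. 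Davis--Kahan then gives $\sin\angle(v_1(A),v_1(\ba))\leq \norm{A-\ba}/g$ with $g$ this realized gap; combining with the concentration bound and using $\rho\geq\sqrt{1-\sin^2\angle}\geq 1-\tfrac12\sin^2\angle$ yields $\rho(v_1(\ba),v_1(A))\geq 1-\epsilon/2$ after fixing the eigenvector sign. The $(1-\kappa^2)$ factor and the $\epsilon^2$ in condition~(\ref{l1-rg-cond}) arise exactly at this step, as $1-\kappa^2$ accounts for the gap appearing squared once the $\sin^2$ is converted to a welfare loss.

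Finally I would chain the two near-alignments: for large enough $C$, $v_1(A)$ lies within a small angle of $y^*$ by \cite{Galeotti2017TargetingII}, and $v_1(\ba)$ lies within a small angle of $v_1(A)$ by the above, so subadditivity of angles on the sphere bounds $\angle(v_1(\ba),y^*)$, and the general relationship between cosine similarity and competitive ratio established earlier converts this into $W(y)/W(y^*)\geq 1-\epsilon$. The invertibility hypothesis that the spectral radius of $A$ is below $1/\beta$ ensures $M^{-1}$, and hence $W$, is well-defined. I expect the main obstacle to be the quantitative bookkeeping rather than any single inequality: spectral-norm concentration, gap-controlled Davis--Kahan, and the cosine-to-welfare conversion each contribute multiplicative constants and error terms, and the delicate part is verifying that the specific constants baked into $(\epsilon,\delta)$-concentration make all three compose to land at exactly $1-\epsilon$ with failure probability at most $\delta$.
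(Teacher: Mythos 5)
Your proposal is correct and follows essentially the same route as the paper's proof: Chung--Radcliffe spectral-norm concentration \cite{chungradcliffe}, an eigenvector perturbation bound exploiting the spectral gap of $\ba$, Proposition 2 of \cite{Galeotti2017TargetingII} to align $y^*$ with $v_1(A)$ at large budgets, a triangle inequality chaining the two alignments, and a cosine-to-welfare conversion (the paper's \Cref{cosineutil}). The only substantive difference is that you invoke off-the-shelf Davis--Kahan plus Weyl where the paper proves its own self-contained perturbation bound (\Cref{thm:concentration}) by decomposing $v_1(A)$ into components along and orthogonal to $v_1(\ba)$ --- the two are interchangeable here, and your route if anything gives slightly better constants.
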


The concentration conditions are used to show that the relevant spectral properties of generated graphs are almost surely close to their expectations, and the constraint on $\beta$ 
is necessary to ensure that actions and utilities are finite at equilibrium.\footnote{
The spectral radius condition holds with probability $1 - \delta$ when ${1}/{\beta}$ is at least ${\lambda_1(\ba) + \sqrt{4 d_{\max} \log(2n / \delta)}}$ (follows from e.g.\ \cite{chungradcliffe}, see \Cref{sec:imports} for details).
}
The sufficient budget will depend on the size of the {\it spectral
  gap}
of $\ba$, as well as the standard marginal values.
For example, if $\lambda_1 > 2\lvert \lambda_i \rvert$ holds in the realized graph for all $i>1$, then a budget of $C = 256 \cdot \norm{b}^2 / (\epsilon \beta \lambda_1(\ba) )^2$ will suffice. Intuitively, a large $\beta$ would mean more correlation between neighbors' actions at equilibrium. A large $\lambda_1\brack{\ba}$ would mean a denser graph (more connections between agents) in expectation and a large $\epsilon$ would mean that the realized graph is more likely to be close to expectation. All of these conditions reduce the required budget because a small intervention gets magnified by agent interaction. Further, the smaller the magnitude of initial $b$, the easier it is to change its direction.

The proof of \Cref{thm:sbm-main} is deferred to \Cref{omittedproofs}.
At a high level, our results proceed by first showing that the first
eigenvector is almost surely close to $v_1(\ba)$, then showing that the spectral gap is almost surely large enough such
that the first eigenvector is close to the
optimal intervention for appropriate budgets.  A key lemma for
completing the proof shows that interventions which are close to the
optimal intervention in cosine similarity have a competitive ratio
close to 1.

\begin{lemma}
    \label{cosineutil}
    Let $b$ be the vector of standalone values,
    and assume that $C > \max( \norm{b}^2, 1) $.
    For any $y$ where $\norm{y}^2 = C$ and $\rho(y, y^*) > \gamma$ for some $\gamma$, the competitive ratio of $y$ is at least $1 - 4\sqrt{2(1 - \gamma)}$.
\end{lemma}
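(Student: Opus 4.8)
The plan is to work directly with the closed-form welfare. Since the post-intervention equilibrium is $M^{-1}(b+y)$, we have $W(y) = \frac{1}{2}\norm{M^{-1}(b+y)}^2$, so the competitive ratio of $y$ equals $\left(f(y)/f(y^*)\right)^2$, where I write $f(z) := \norm{M^{-1}(b+z)}$. It therefore suffices to lower bound the ratio of norms $f(y)/f(y^*)$ by something like $1-\sqrt{2(1-\gamma)}$ and then square.

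For the numerator I would use the reverse triangle inequality together with the fact that $f$ is Lipschitz in its argument: since $f(y)-f(y^*) \ge -\norm{M^{-1}(y-y^*)} \ge -\norm{M^{-1}}\,\norm{y-y^*}$, and since $y$ and $y^*$ both have norm $\sqrt C$, the law of cosines (exactly as in the footnote, rescaled by $\sqrt C$) gives $\norm{y-y^*} = \sqrt C\,\sqrt{2(1-\rho(y,y^*))} < \sqrt C\,\sqrt{2(1-\gamma)}$. Hence $f(y) \ge f(y^*) - \norm{M^{-1}}\,\sqrt{2(1-\gamma)}\,\sqrt C$. Here $\norm{M^{-1}}$ is the spectral norm of $M^{-1}$, which equals its top eigenvalue $\frac{1}{1-\beta\lambda_1} = \sqrt{\alpha_1}$ because $M^{-1}$ is symmetric positive definite under the standing assumption $\beta\lambda_1<1$.

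The crux---and the step I expect to be the main obstacle---is a matching \emph{lower} bound on $f(y^*)$ that cancels the $\norm{M^{-1}}\sqrt C$ factor. Since $y^*$ is the budget-constrained maximizer, $f(y^*) \ge f\!\left(\sqrt C\, v_1(A)\right)$, so it is enough to bound the latter. Here I would invoke the nonnegativity structure of the problem: because $A$ is an adjacency matrix ($A\ge 0$ entrywise), $\beta>0$, and the spectral radius of $\beta A$ is below $1$, the Neumann series $M^{-1}=\sum_{k\ge 0}(\beta A)^k$ is entrywise nonnegative; moreover $b\ge 0$ and the Perron eigenvector $v_1(A)$ may be taken entrywise nonnegative. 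Consequently $M^{-1}b$ and $\sqrt C\, M^{-1}v_1(A) = \sqrt C\,\sqrt{\alpha_1}\,v_1(A)$ are both entrywise nonnegative, so adding the first cannot decrease the norm of the second, giving $f(\sqrt C\, v_1(A)) \ge \norm{\sqrt C\,\sqrt{\alpha_1}\,v_1(A)} = \sqrt C\,\sqrt{\alpha_1} = \sqrt C\,\norm{M^{-1}}$. Thus $f(y^*) \ge \sqrt C\,\norm{M^{-1}}$.

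Dividing the two bounds, the $\norm{M^{-1}}$ and $\sqrt C$ factors cancel and I obtain $f(y)/f(y^*) \ge 1-\sqrt{2(1-\gamma)}$. Squaring and using $(1-x)^2 \ge 1-2x$ (the bound holding trivially when $1-\sqrt{2(1-\gamma)}<0$, since the ratio is always nonnegative) yields a competitive ratio of at least $1-2\sqrt{2(1-\gamma)}$, which is stronger than the claimed $1-4\sqrt{2(1-\gamma)}$; the looser constant in the statement leaves room for the alternative, slightly cruder bound $f(y^*)\ge\norm{M^{-1}}(\sqrt C-\norm b)$ obtained by the triangle inequality instead of Perron--Frobenius, for which the hypothesis $C>\max(\norm b^2,1)$ ensures $\sqrt C-\norm b>0$ and keeps the effective ratio $\sqrt C/(\sqrt C-\norm b)$ bounded. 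The main obstacle is precisely this lower bound on $W(y^*)$: it is what makes the statement nonvacuous, and it is where the nonnegativity of $A$, $b$, and $v_1(A)$ is essential.
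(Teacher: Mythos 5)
Your proof is correct, and it in fact establishes the stronger bound $1 - 2\sqrt{2(1-\gamma)}$ without ever invoking the hypothesis $C > \max(\norm{b}^2,1)$. The paper's own proof shares your skeleton: both rely on the law-of-cosines bound $\norm{y - y^*} \leq \sqrt{2C(1-\gamma)}$ and, crucially, on the lower bound $W(y^*) \geq \frac{1}{2}\lambda_M^2 C$, which the paper justifies exactly as you do (that welfare is achievable even with $b = \mathbf{0}$ by intervening along the top eigenvector of $M^{-1}$, and welfare is increasing in $b$); your Neumann-series/Perron argument is simply a more explicit rendering of that monotonicity claim. Where you diverge is in how smoothness is exploited. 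The paper expands the quadratic welfare difference $W(y^*) - W(y)$, which produces a cross term $\left(M^{-1}b\right)^{\top} M^{-1}\left(y^* - y\right)$ that must be controlled by Cauchy--Schwarz and then absorbed using $C > \norm{b}^2$; this yields the additive bound $2\lambda_M^2 C \sqrt{2(1-\gamma)}$ and hence the constant $4$. You instead work with the norms $f(z) = \norm{M^{-1}(b+z)}$ and apply the reverse triangle inequality, so the $b$-dependence cancels identically; this is precisely why your route needs no budget hypothesis and halves the constant. One caveat on your closing speculation: the cruder bound $f(y^*) \geq \norm{M^{-1}}\left(\sqrt{C} - \norm{b}\right)$ does \emph{not} keep the factor $\sqrt{C}/\left(\sqrt{C} - \norm{b}\right)$ bounded under $C > \norm{b}^2$ alone (take $\norm{b}$ close to $\sqrt{C}$), so that alternative would not recover the lemma with any fixed constant; since your main argument never relies on it, this is immaterial to correctness, but it is not the actual source of the paper's factor of $4$.
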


The main idea behind this lemma is a smoothness argument for the welfare function. When considering interventions as points on the sphere of radius $\sqrt{C}$, small changes to an intervention cannot change the resulting welfare by too much. 
This additionally implies that when a vector $y$ is close to
$y^*$, 
the exact utility of $y^*$ 
for some budget $C$ 
can be achieved by an intervention proportional to $y$ with a
budget $C'$ which is not much larger than $C$. 

In \Cref{gw}, we give a specialization of \Cref{thm:sbm-main} to the case of $\gw$ graphs. There, the expected first eigenvector is proportional to $w$ when self-loops are not removed.
We give more explicit characterizations of the properties for $\gw$, $\gnp$, and power law graphs which ensure the above spectral conditions (i.e.\ without relying on eigenvalues), as well as a budget threshold for near-optimality.  
We discuss the steps of the proof in greater detail, and they are largely symmetric to the steps required to prove \Cref{thm:sbm-main}.

\section{Centrality Estimation} \label{sbm}


The previous sections show that interventions proportional to $v_1(\ba)$ are often near-optimal simultaneously for almost all graphs generated by $\ba$.
While we often may have domain knowledge about a network which helps characterize its edge distribution, we still may not be able to precisely approximate the first eigenvector of $\ba$ {\it a priori}.   
In particular, even if we believe our graph comes from a power law distribution, we may be at a loss in knowing which vertices have which expected degrees.

In this section, we discuss approaches for obtaining near-optimal interventions without initial knowledge of $\ba$.
We first observe that ``blind'' interventions, which treat all vertices equally in expectation, will fail to approach optimality. We then consider statistical estimation techniques for approximating the first eigenvector which leverage the special structure of $\gw$ and stochastic block model graphs.
In each case, we identify a simple {\it target intervention}, computable directly from the realized graph, 
which is near-optimal when $(\epsilon,\delta)$-concentration is satisfied. We then give efficient sampling methods for approximating these target interventions.
Throughout \Cref{sbm}, our focus is to give a broad overview of these techniques 
rather than to present them as concrete algorithms, and we frequently omit constant-factor terms with asymptotic notation.



\subsection{Suboptimality of Blind Interventions}

Here we begin by showing that when the spectral gap is large, all interventions which are far from the optimal intervention in cosine similarity will fail to be near-optimal even if the budget is very large. 
\begin{lemma}\label{lemma:util-upper-bound}
  Assume
  that $C$ is sufficiently large such that the role of standalone values is negligible. For any $y$ where $\norm{y}^2 = C$
  and $\rho(y, y^*) < \gamma$, the competitive ratio is bounded by
\begin{align*}
  \gamma^2\brack{1-\frac{\alpha_2}{\alpha_1}} +  \frac{\alpha_2}{\alpha_1} + 2\sqrt{\frac{\alpha_2}{\alpha_1}}, 
\end{align*}
where $\alpha_i$ is the square of the $i$th largest eigenvalue of $M^{-1}$.
\end{lemma}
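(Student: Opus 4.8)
The plan is to work with the closed form of the welfare and diagonalize in the common eigenbasis of $A$ and $M^{-1}$. After an intervention $y$ the equilibrium is $M^{-1}(b+y)$, so
\[
W(y) = \frac{1}{2}\norm{M^{-1}(b+y)}^2 = \frac{1}{2}\sum_i \alpha_i\,(b+y)_i^2,
\]
where $(b+y)_i = \langle b+y,\, v_i(A)\rangle$ and $\alpha_i$ is the $i$th eigenvalue of $M^{-2}$. In the regime where $C$ is large enough that the standalone values are negligible (formally $\norm{b} = o(\sqrt{C})$), the optimal direction is $v_1(A)$ and one has $W(y^*) \ge \frac{1}{2}\alpha_1 C$, with approximate equality as $C$ grows. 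Since an upper bound on the ratio needs a lower bound on the denominator, I would use this inequality for $W(y^*)$ and carry the $b$-dependent cross terms in $W(y)$ along as an explicit error to be absorbed at the end.

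First I would bound the numerator $W(y)$ using the spectral gap. Splitting $y$ into its component along $v_1(A)$ and the orthogonal remainder $y_\perp$, and using that every eigenvalue of $M^{-2}$ other than $\alpha_1$ is at most $\alpha_2$, the Pythagorean identity gives
\[
\sum_i \alpha_i y_i^2 = \alpha_1 y_1^2 + \sum_{i\ge 2}\alpha_i y_i^2 \le \alpha_1\rho_1^2 C + \alpha_2(1-\rho_1^2)C,
\]
where $\rho_1 = \rho(y, v_1(A))$, so that $y_1^2 = \rho_1^2 C$ and $\norm{y_\perp}^2 = (1-\rho_1^2)C$. Dividing by $\frac{1}{2}\alpha_1 C$ yields the clean bound $\rho_1^2(1 - \alpha_2/\alpha_1) + \alpha_2/\alpha_1$, which is exactly the claimed expression but phrased in terms of alignment with $v_1(A)$ rather than with $y^*$.

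The remaining and, I expect, hardest step is to replace $\rho_1$ by the hypothesis $\rho(y,y^*) < \gamma$. For a finite budget $y^*$ is not exactly parallel to $v_1(A)$ but tilts away by an angle that shrinks as $C$ grows. Using the law of cosines relating the three similarities, I would write $\rho_1 \le \rho(y,y^*)\,\rho(y^*,v_1(A)) + \sqrt{1-\rho(y,y^*)^2}\,\sqrt{1-\rho(y^*,v_1(A))^2}$ and show that the deviation $1 - \rho(y^*,v_1(A))^2$ is governed by the spectral gap, so that after squaring $\rho_1$ and recombining, the excess over $\gamma^2(1-\alpha_2/\alpha_1)+\alpha_2/\alpha_1$ is at most the additive $2\sqrt{\alpha_2/\alpha_1}$ (the same quantity also dominating the standalone-value cross terms dropped earlier). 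Substituting $\rho(y,y^*)<\gamma$ then gives the stated bound.

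The main obstacle is this last step: quantifying how far $y^*$ can deviate from $v_1(A)$ at a large-but-finite budget and converting that, together with the $b$-cross terms, into exactly the $2\sqrt{\alpha_2/\alpha_1}$ correction. A secondary subtlety is the sign of the cosine similarity: once $b$ is dropped $W$ is even in $y$, so an intervention anti-aligned with $v_1(A)$ is as good as $v_1(A)$ itself; the argument therefore really controls $\rho_1^2$, and I would either state the hypothesis as $\abs{\rho(y,y^*)} < \gamma$ or restrict to the nonnegative-alignment regime that is relevant for ruling out blind interventions.
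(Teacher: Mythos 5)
Your eigenbasis computation and the Pythagorean bound $\rho_1^2\left(1-\frac{\alpha_2}{\alpha_1}\right)+\frac{\alpha_2}{\alpha_1}$ along $v_1(A)$ are correct (in fact tighter than the stated bound), but the proposal as written does not prove the lemma: the step you yourself flag as the main obstacle --- converting alignment with $v_1(A)$ into the hypothesis $\rho(y,y^*)<\gamma$ and extracting the $2\sqrt{\alpha_2/\alpha_1}$ correction --- is asserted, not carried out. Moreover, your conjecture about where that term comes from is wrong: in the paper it is not a correction for the finite-budget tilt of $y^*$ away from $v_1(A)$, nor for dropped $b$-cross terms; it is simply a Cauchy--Schwarz bound on a cross term that arises from the paper's choice of decomposition.

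The paper avoids your obstacle by decomposing along $y^*$ itself rather than along $v_1(A)$: write $y=\phi_1 y^* + \phi_2\sqrt{C}\,u$ with $u\perp y^*$ and $\phi_1^2+\phi_2^2=1$, so that $\phi_1=\rho(y,y^*)<\gamma$ is the hypothesis verbatim and no conversion between similarity measures is ever needed. In the regime where standalone values are negligible, $y^*$ is proportional to $v_1(A)$ (the large-budget limit in Proposition 2 of \cite{Galeotti2017TargetingII}), so $W(y^*)=\norm{M^{-1}y^*}^2=\alpha_1 C$ and $\norm{M^{-1}u}^2\le\alpha_2$; expanding $W(y)=\norm{M^{-1}y}^2$ gives three terms, the cross term is at most $2\phi_1\phi_2 C\sqrt{\alpha_1\alpha_2}$ by Cauchy--Schwarz, and dividing by $\alpha_1 C$ yields $\phi_1^2+\phi_2^2\frac{\alpha_2}{\alpha_1}+2\phi_1\phi_2\sqrt{\frac{\alpha_2}{\alpha_1}}\le\gamma^2\left(1-\frac{\alpha_2}{\alpha_1}\right)+\frac{\alpha_2}{\alpha_1}+2\sqrt{\frac{\alpha_2}{\alpha_1}}$. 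Note also that your own route closes immediately once you use the lemma's hypothesis the same way: with $b$ negligible, $y^*\propto v_1(A)$, hence $\abs{\rho(y,v_1(A))}=\abs{\rho(y,y^*)}<\gamma$, and your Pythagorean bound already implies the claim --- indeed without the $2\sqrt{\alpha_2/\alpha_1}$ term at all, since the eigenbasis expansion has no cross term. Your caveat about the sign of the cosine similarity is legitimate, but it applies equally to the paper's proof (where $\phi_1^2\le\gamma^2$ likewise requires $\abs{\rho(y,y^*)}<\gamma$), so it is a shared imprecision of the statement rather than a defect of either argument.
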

This tells us that if one were to design an intervention without using
any information about the underlying graph,
the intervention is
unlikely to do well compared to the optimal one for the same budget unless eigenvector centrality is uniform, as in the case of $\gnp$ graphs. Thus, there is a need to
try to learn graph information to design a close-to-optimal
intervention. We discuss methods for this next.

\subsection{Degree Estimation in $\gw$ Graphs}
\label{sec:gw-est}
For $\gw$ graphs, we have seen that expected degrees suffice for near-optimal interventions, and we show that  degrees can suffice as well. 

\begin{lemma}
\label{lemma:real-deg-near-opt}

    If a $\gw$ graph specified by $\ba$ satisfies $(\epsilon, \delta)$-concentration, 
    then with probability at least $1 - O(\delta)$,
\begin{align*}
\norm{w- {w^*}} \leq&\; O(\epsilon \norm{w}),
\end{align*}
where $w^*$ is the empirical degree vector, and
the intervention proportional to $w^*$ obtains a competitive ratio of $1 - O(\epsilon)$ when the other conditions for \Cref{thm:sbm-main} are satisfied.
\end{lemma}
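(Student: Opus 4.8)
The plan is to reduce the statement to two facts: that the empirical degree vector $w^*$ concentrates around its mean $w$ in $\ell_2$, and that $v_1(\ba)$ is proportional to $w$, so that an intervention proportional to $w^*$ is essentially an intervention in a near-optimal direction. For the $\gw$ model the expected degree of vertex $i$ is exactly $\sum_j \ba_{ij} = w_i$ (up to the lower-order correction $\ba_{ii}$ when self-loops are excluded), and $\ba = \frac{1}{\norm{w}_1} w w^\top$ is rank one, so $v_1(\ba) = w/\norm{w}$ and $\lambda_1(\ba) = \norm{w}^2/\norm{w}_1$. Each coordinate $w^*_i = \sum_j A_{ij}$ is a sum of independent Bernoulli variables with mean $w_i$, so the first half of the lemma is a concentration statement for $\sum_i (w^*_i - w_i)^2$, and the second half should follow by transferring the guarantee of \Cref{thm:sbm-main} from $v_1(\ba)$ to the nearby direction $w^*$ via \Cref{cosineutil}.

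For the degree concentration I would split the vertices at a threshold $T = \Theta(\log(2n/\delta))$. For the high-degree vertices ($w_i \ge T$) a multiplicative Chernoff bound gives $\abs{w^*_i - w_i} \le O(\sqrt{w_i \log(2n/\delta)})$ with probability $1 - \delta/n$ each, so a union bound yields $\sum_{w_i \ge T}(w^*_i - w_i)^2 \le O(\log(2n/\delta))\sum_i w_i = O(\log(2n/\delta))\norm{w}_1$. Then I would invoke $\norm{w}_1 = \norm{w}^2/\lambda_1(\ba)$ together with condition (3) of $(\epsilon,\delta)$-concentration, which (for $\kappa \approx 0$ in the $\gw$ case) bounds $1/\lambda_1(\ba)$ by $\epsilon^2/(1024\sqrt{d_{\max}\log(2n/\delta)})$, and condition (1), which gives $\log(2n/\delta)/d_{\max} = O(1)$; these combine to show $\log(2n/\delta)\norm{w}_1 = O(\epsilon^2\norm{w}^2)$.

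The main obstacle is the low-degree vertices ($w_i < T$): a naive per-vertex union bound only controls each deviation by $O(\log(2n/\delta))$, producing a term of order $n\log^2(2n/\delta)$ that cannot in general be absorbed into $O(\epsilon^2\norm{w}^2)$ for sparse (e.g.\ power-law) degree sequences. I would avoid materializing this term by bounding the block sum collectively rather than coordinate-wise: writing $\sum_{w_i < T}(w^*_i - w_i)^2 \le 2\sum_{w_i<T}(w^*_i)^2 + 2\sum_{w_i<T}w_i^2$ and using $\sum_{w_i<T}(w^*_i)^2 \le (\max_{w_i<T} w^*_i)\cdot\sum_{w_i<T} w^*_i$. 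A Chernoff upper tail with a union bound shows $\max_{w_i<T} w^*_i = O(\log(2n/\delta))$, while $\sum_{w_i<T} w^*_i$ concentrates around its mean $\sum_{w_i<T} w_i \le \norm{w}_1$; together with $\sum_{w_i<T} w_i^2 \le T\norm{w}_1$, the low-degree block is also $O(\log(2n/\delta)\norm{w}_1) = O(\epsilon^2 \norm{w}^2)$. Combining the two blocks (a union over $O(1)$ events, each failing with probability $O(\delta)$) gives $\norm{w - w^*} \le O(\epsilon\norm{w})$ with probability $1 - O(\delta)$.

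Finally, to obtain the competitive ratio I would convert the $\ell_2$ bound into a cosine-similarity bound. Writing $w^* = w + d$ with $\norm{d} \le O(\epsilon)\norm{w}$, the first-order contribution of $d$ cancels between the inner product and the norms (only the component of $d$ orthogonal to $w$ survives to leading order), so $1 - \rho(w^*, w) = O(\epsilon^2)$ rather than $O(\epsilon)$. Since $w/\norm{w} = v_1(\ba)$, and since the proof of \Cref{thm:sbm-main} already establishes that $v_1(\ba)$ is within cosine similarity $1 - O(\epsilon^2)$ of the optimal intervention $y^*$ at large budgets, the triangle inequality for chord distance on the unit sphere gives $\rho(w^*, y^*) \ge 1 - O(\epsilon^2)$. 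Applying \Cref{cosineutil} with $\gamma = 1 - O(\epsilon^2)$ then yields a competitive ratio of $1 - 4\sqrt{2 \cdot O(\epsilon^2)} = 1 - O(\epsilon)$ for the intervention proportional to $w^*$, which completes the argument.
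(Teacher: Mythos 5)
Your proposal is correct, and its first half takes a genuinely different route from the paper's. The paper's proof of the degree-concentration bound is spectral and very short: it writes $\norm{w - w^*} = \norm{(\ba - A)\mathbf{1}} \le \norm{\ba - A}\cdot\norm{\mathbf{1}}$, applies \Cref{lemma:matrix-norm-deviation} to get $\norm{\ba - A} \le O\left(\sqrt{w_1\log(2n/\delta)}\right)$, and then absorbs the resulting $O\left(\sqrt{n\,w_1\log(2n/\delta)}\right)$ into $O(\epsilon\norm{w})$ using the concentration condition together with the inequality $\norm{w}_1 \le \sqrt{n}\,\norm{w}_2$. You instead control $\sum_i (w^*_i - w_i)^2$ coordinate-wise, splitting at $T = \Theta(\log(2n/\delta))$, handling the high-degree block by per-vertex multiplicative Chernoff bounds and the low-degree block collectively via $\left(\max_i w^*_i\right)\sum_i w^*_i$; both blocks come out to $O(\log(2n/\delta))\norm{w}_1$, which you correctly reduce to $O(\epsilon^2\norm{w}^2)$ from conditions (1) and (3). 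The second half of your argument (the quadratic gain $1-\rho(w^*,w) = O(\epsilon^2)$, the chord-distance triangle inequality through $v_1(\ba) = w/\norm{w}$ and $y^*$, then \Cref{cosineutil}) is the same finish the paper states in a single sentence. Your route buys something real: the spectral route picks up the factor $\norm{\mathbf{1}} = \sqrt{n}$, and the paper's cancellation of it invokes $\norm{w}_1 \le \sqrt{n}\,\norm{w}_2$, which points the wrong way for that purpose --- what is actually needed is $\sqrt{n}\,\norm{w}_2 = O(\norm{w}_1)$, i.e.\ near-tightness of Cauchy--Schwarz, which holds for near-uniform degree sequences such as $\gnp$ but fails for skewed ones (for a power law sequence with constant average degree and $w_1 \to \infty$, one has $\sqrt{n w_1 \log n}\,/\norm{w}_2 = \Theta\left(\sqrt{w_1^{\sigma-2}\log n}\right) \to \infty$, so the paper's intermediate claim breaks there even though the lemma's conclusion still holds). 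Your blockwise bound never incurs the $\sqrt{n}$ factor and goes through for every degree sequence satisfying $(\epsilon,\delta)$-concentration, so it is more robust than the argument in the paper; what the paper's route buys in exchange is brevity and reuse of \Cref{lemma:matrix-norm-deviation}, which is needed elsewhere anyway.
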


Thus, degree estimation is our primary objective in considering statistical approaches. As we can see from the analysis in \Cref{thm:sbm-main,thm:gw-main}, if we can estimate the unit-normalized degree vector $w^*$ to within $\epsilon$ $\ell_2$-distance, our competitive ratio for the corresponding proportional intervention will be $1 - O(\epsilon)$.
Our approaches focus on different query models, respresenting the types of questions we are allowed to ask about the graph; these query models are also studied for the problem of estimating the average degree in a graph \cite{goldreich,10.1145/2566486.2568019}. If we are allowed to query agents' degrees, near-optimality follows directly from the above lemma, so we consider more limited models.

\subsubsection*{Edge Queries.}
Suppose we are allowed to query whether an edge exists between two vertices. We can then reduce the task of degree estimation to the problem of estimating the mean of $n$ biased coins, where for each vertex, we ``flip'' the corresponding coin by picking another vertex uniformly at random to query. By Hoeffding and union bounds, $O\brack{\frac{n}{\epsilon^2} \log\brack{\frac{n}{\delta}} }$ total queries suffice to ensure that 
with probability $1 - \delta$,  
each degree estimate is within $\epsilon n $ additive error. Particularly in the case of dense graphs,
and when $\epsilon$ is not too small compared to $1/n$, this will be considerably more efficient than reconstructing the entire adjacency matrix. In particular, if $\norm{w}_1 = \Theta(n^2)$, the above error bound on additive error for each degree estimate directly implies that the estimated degree vector $\hat{w}$ is within  $\ell_1$ (and thus $\ell_2$) distance of $O(\epsilon \norm{w}_2)$.

\subsubsection*{Random Neighbor Queries.}
Suppose instead we are restricted to queries which give us a uniformly random neighbor of a vertex. 
We give an approach wherein queries are used to conduct a random walk in the graph. 
The stationary distribution is equivalent to the the first eigenvector of the {\it diffusion matrix} $P = AD^{-1}$, where $D$ is the diagonal matrix of degree counts.\footnote{
The stationary distribution of a random walk on a simple connected
graph is $\frac{d_i}{\sum_j d_j}$ for all vertices $i$, where $d_i$ is the degree. While $\gw$ graphs may fail to be connected, in many cases the vast majority of vertices will belong to a single component, and we can focus exclusively on that component.
We show this in \Cref{lemma:gw-mixing-time}.
}
We can then learn estimates of degree proportions by sampling from the stationary distribution via a random walk.

The mixing time of a random walk on a graph determines the number of steps required such that the probability distribution over states is close to the stationary distribution in total variation distance. 
We can see that for $\gw$ graphs satisfying $(\epsilon,\delta)$-concentration with a large enough minimum degree, 
mixing times will indeed be fast.
\begin{lemma}
    \label{lemma:gw-mixing-time}
For $\gw$ graphs satisfying $(\epsilon, \delta)$-concentration and with $w_n \geq \frac{1}{\epsilon}$, the mixing time of a random walk to within $\epsilon$ total variation distance to the stationary distribution is $O(\log(n/\epsilon))$. Further, the largest connected component in $A$ contains $n(1 - \exp\parens{-O(1/\epsilon)}$ vertices in expectation. 
\end{lemma}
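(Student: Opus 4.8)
The plan is to treat the two assertions separately: the mixing-time bound I reduce to a spectral-gap estimate for the walk's transition matrix, and the giant-component bound I obtain from a branching-process domination argument. For the first part, recall that the diffusion matrix $P = AD^{-1}$ is similar to the symmetric normalized adjacency matrix $\hat A = D^{-1/2} A D^{-1/2}$ via $P = D^{1/2}\hat A D^{-1/2}$, so the two share eigenvalues, and the walk is reversible with stationary distribution $\pi_i = d_i/\sum_j d_j$. For such a chain the standard spectral bound gives mixing time to within $\epsilon$ total variation distance at most $\frac{1}{\gamma_*}\log\frac{1}{\epsilon\,\pi_{\min}}$, where $\gamma_* = 1 - \max_{i\ge 2}\lvert\lambda_i(\hat A)\rvert$ is the absolute spectral gap. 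The entire bound then rests on showing that $\gamma_* = \Omega(1)$ and that $\log(1/\pi_{\min}) = O(\log n)$.

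The spectral gap is the crux. For $\gw$ graphs $\ba = \frac{1}{\sum_k w_k}\,w\,w^\top$ is rank one, so the \emph{expected} normalized matrix $\bar D^{-1/2}\ba\,\bar D^{-1/2} = \frac{1}{\sum_k w_k}\sqrt{w}\,\sqrt{w}^\top$, with $\bar D = \mathrm{diag}(w_i)$ and $\sqrt{w}$ the vector of $\sqrt{w_i}$, has top eigenvalue exactly $1$ and every other eigenvalue equal to $0$; the expected walk thus has a perfect gap. It therefore suffices to bound $\norm{\hat A - \bar D^{-1/2}\ba\,\bar D^{-1/2}}$ and apply Weyl's inequality to force $\max_{i\ge 2}\lvert\lambda_i(\hat A)\rvert$ down to $o(1)$, giving $\gamma_* = \Omega(1)$. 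I would obtain this perturbation bound from the normalized spectral concentration of \cite{chungradcliffe} (the same family of estimates cited elsewhere in the paper), which controls the normalized deviation at the scale $O\!\big(\sqrt{\log(2n/\delta)/w_n}\big)$, together with the degree concentration $d_i = \Theta(w_i)$ used to pass between $D$ and $\bar D$. This is exactly where the hypothesis $w_n \ge 1/\epsilon$ is indispensable: the two-sided factor $D^{-1/2}$ amplifies errors on low-degree vertices, and the normalized deviation degrades like $1/\sqrt{w_{\min}}$, so the minimum-degree lower bound is precisely what keeps it $o(1)$ in the regime of interest. Finally, since the $\gw$ degrees concentrate (as in \Cref{lemma:real-deg-near-opt}), with probability $1-O(\delta)$ we have $\pi_{\min} \ge \Omega\!\big(w_n/\sum_j w_j\big) \ge \Omega\!\big(\tfrac{1}{\epsilon\,n\,w_1}\big)$, and because $w_1 \le \sqrt{\sum_k w_k} \le n$ this gives $\log(1/\pi_{\min}) = O(\log n)$. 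Plugging $\gamma_* = \Omega(1)$ and this estimate into the mixing bound yields $O(\log n + \log(1/\epsilon)) = O(\log(n/\epsilon))$.

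For the component bound I would show that the expected number of vertices outside the largest component is at most $n\exp(-\Omega(1/\epsilon))$. Exploring the component of a fixed vertex $v$ by breadth-first search, the discovered set stochastically dominates a supercritical branching process whose mean offspring is at least the minimum expected degree $w_n \ge 1/\epsilon \gg 1$; standard domination and sprinkling arguments for $\gw$ graphs (as in \cite{Chung2003}) then show that $v$ fails to join the giant component only if its local exploration dies out early, an event of probability at most $\exp(-\Omega(w_n)) = \exp(-\Omega(1/\epsilon))$. Summing over $v$ and taking expectations gives the claim, the dominant contribution coming from near-isolated vertices, for which $\Pr[v \text{ isolated}] \le \prod_{j}(1-\ba_{vj}) \le e^{-w_v} \le e^{-1/\epsilon}$.

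The main obstacle is the perturbation estimate for $\norm{\hat A - \bar D^{-1/2}\ba\,\bar D^{-1/2}}$ under the two-sided $D^{-1/2}$ normalization by the \emph{random} degrees. The difficulty is that this normalization couples the randomness in the degrees with that in the off-diagonal entries, so an off-the-shelf matrix-concentration bound does not apply directly; I expect to route around this by first replacing $D$ with the deterministic $\bar D$ at a cost controlled by degree concentration, then bounding $\norm{\bar D^{-1/2}(A-\ba)\bar D^{-1/2}}$ via \cite{chungradcliffe}, with $w_n \ge 1/\epsilon$ ensuring both the replacement error and the normalized deviation are $o(1)$.
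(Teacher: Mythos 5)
Your proposal takes essentially the same approach as the paper's proof: for the mixing time, both reduce to showing that the second eigenvalue of the random-walk matrix is bounded away from $1$ by comparing against the expected normalized matrix (which is rank one and hence perfectly gapped) using matrix concentration together with degree concentration to pass between $D$ and $\bar{D}$; for the giant component, both bound the probability that a fixed vertex is excluded by $\exp(-\Omega(1/\epsilon))$ and sum over vertices. Your rendering via the symmetrized matrix $D^{-1/2}AD^{-1/2}$, Weyl's inequality, and the standard relaxation-time bound is a cleaner, more explicit version of what the paper argues informally for $P = AD^{-1}$ directly (and your branching-process/isolated-vertex argument plays the role of the paper's two-round edge-exposure argument), but the underlying ideas coincide.
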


If a random walk on our graph has some mixing time $t$ to an approximation of the stationary distribution, we can simply record our location after every $t$ steps to generate a sample.
Using standard results on learning discrete distributions (see e.g. \cite{canonne2020short}),
$O\parens{\frac{n + \log(1 / \delta)}{\epsilon^2}}$ samples from $\epsilon$-approximate stationary distributions suffice to approximate $w^*$ within $\ell_1$ distance of $O(\epsilon \norm{w^*})$ with probability $1- \delta$, directly giving us the desired $\ell_2$ bound. Joining this with \Cref{lemma:gw-mixing-time}, our random walk takes a total of $O\brack{\frac{n+ \log(1/\delta)}{\epsilon^2} \log\brack{\frac{n}{\epsilon}}}$ steps (and thus queries) to obtain our target intervention, starting from an arbitrary vertex in the largest connected component.

\subsection{Matrix Reconstruction in SBM Graphs}
\label{subsec:sbm-estimation}
There is a fair amount of literature on estimation techniques for stochastic block model graphs, often focused on cases where group membership is unknown \cite{NIPS2019_9498,JMLR:v18:16-480,tabouy2017variational,schaub2019blind}.  
The estimation of eigenvectors is discussed in
\cite{Medina2018}, where they consider stochastic block model graphs as a
limit of a convergent sequence of ``graphons''.
Our interest is primarily in recovering eigenvector centrality efficiently from sampling, and we will make the simplifying assumption that group labels are visible for all vertices. This is reasonable in many cases where a close proxy of one's primary group identifier (e.g.\ location, job, field of study) is visible but connections are harder to map.

In contrast to the $\gw$ case, degree estimates no longer suffice 
for estimating the first eigenvector. 
We assume that there are $m$ groups and that we know each agent's group. Our aim will be to estimate the relative densities of connection between groups. 
When there are not too many groups, the parameters of a stochastic block model graph can be estimated efficiently with either edge queries or random neighbor queries, 
From here, we can construct an approximation of $\ba$ and compute its first eigenvector directly. In many cases, the corresponding intervention is near-optimal.

A key lemma in our analysis shows that the ``empirical block matrix'' is close to its expectation in spectral norm. We prove this for the case where all groups are of similar sizes, but the approach can be generalized to cover any partition.

\begin{lemma}
\label{lemma:sbm-block-bound}
For a stochastic block model graph generated by $\ba$ with $m$ groups, each of size $O(\frac{n}{m})$, let $\hat{A}$ denote the empirical block matrix of edge frequencies for each group. Each entry per block in $\hat{A}$ will contain the number of edges in that block divided by the size of the block. With probability at least $1 - \delta$,
\begin{align*}
    \norm{\ba - \hat{A}} \leq&\; 
O\left(\max{\left(\frac{m\sqrt{\log(n/\delta)}}{\sqrt{n}}, ~\log^2(n/\delta)\right)}\right).
\end{align*}
\end{lemma}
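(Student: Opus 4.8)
The plan is to exploit the block-constant, low-rank structure of the problem and reduce it to a concentration bound for the full random adjacency matrix. Write $V_1, \dots, V_m$ for the groups, let $p_{ij}$ denote the true block probabilities, and let $B \in \mathbb{R}^{n\times m}$ be the normalized membership matrix with $B_{ki} = 1/\abs{V_i}$ when $k \in V_i$ and $0$ otherwise. If $A$ is the realized adjacency matrix, a direct computation shows $\hat A = B^\top A B$, since $(B^\top A B)_{ij}$ is exactly the number of edges between $V_i$ and $V_j$ divided by $\abs{V_i}\abs{V_j}$, matching the definition of the empirical block matrix. The block version of the expected matrix is $B^\top \ba B$, whose $(i,j)$ entry is precisely $p_{ij}$; this is the matrix the lemma compares against $\hat A$. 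Since $B^\top B$ is diagonal with entries $1/\abs{V_i}$, we have $\norm{B}^2 = \max_i 1/\abs{V_i} = O(m/n)$ under the equal-size assumption.

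The key step is then the factorization $B^\top \ba B - \hat A = B^\top(\ba - A)B$, which gives
\[
\norm{B^\top \ba B - \hat A} \;\le\; \norm{B}^2\,\norm{A - \ba} \;=\; O\brack{\tfrac{m}{n}}\,\norm{A - \ba}.
\]
This reduces the claim to controlling $\norm{A - \ba}$, the deviation of the random adjacency matrix from its mean, and the two terms of the stated bound correspond to two density regimes. In the main (non-sparse) regime, where the largest expected degree satisfies $d_{\max} \ge \frac{4}{9}\log(2n/\delta)$, I would invoke the concentration bound of \cite{chungradcliffe}, which gives $\norm{A - \ba} \le \sqrt{4\,d_{\max}\log(2n/\delta)}$ with probability at least $1-\delta$. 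Bounding the (unknown) $d_{\max}$ by the trivial estimate $d_{\max} < n$ and multiplying by $\norm{B}^2 = O(m/n)$ yields $\norm{B^\top \ba B - \hat A} = O\brack{\frac{m\sqrt{\log(n/\delta)}}{\sqrt n}}$, the first term of the claimed bound. Symmetry of $A$ and the diagonal (self-loop) blocks only affect constants and are absorbed into the same inequality.

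The main obstacle is the sparse regime, $d_{\max} < \frac{4}{9}\log(2n/\delta)$, where the hypothesis of the Chung--Radcliffe bound fails and the spectral norm of $A$ is no longer governed by the average degree but rather by the largest realized degree, itself a random quantity with a heavy upper tail. Here I would fall back on a cruder but universally valid estimate: a Chernoff bound shows every realized degree, and hence both $\norm{A}$ and $\norm{\ba}$ (each at most the maximum degree), is $O(\log(n/\delta))$ with probability $1-\delta$, so that $\norm{A - \ba} = O(\log(n/\delta))$. Carried through the factorization this gives a bound of order $\frac{m}{n}\log(n/\delta)$, which is comfortably below the stated $\log^2(n/\delta)$; thus the second term safely covers this regime without requiring a finer case analysis, and the delicacy lies only in making the degree-tail estimate robust. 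Taking the maximum over the two regimes and a union bound over the $O(m^2)$ block events and the degree events yields the stated bound with probability at least $1 - \delta$.
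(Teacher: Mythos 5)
Your factorization $\hat A = B^\top A B$ is clean, and the contraction step $\norm{B^\top(\ba - A)B} \le \norm{B}^2\norm{A - \ba}$ is correct, but it bounds the wrong object, and this is not a cosmetic issue. The lemma's $\hat A$ is the $n \times n$ matrix in which \emph{every} entry of block $(i,j)$ equals the empirical frequency $\hat p_{ij}$ (``each entry per block\dots''), and it must be $n \times n$ for the lemma to be usable downstream, where $\norm{\ba - \hat A}$ feeds into the eigenvector-perturbation bound (\Cref{thm:concentration}) to conclude that $v_1(A)$, $v_1(\ba)$, and $v_1(\hat A)$ are all close --- a comparison that only makes sense for matrices of the same dimension as $A$ and $\ba$. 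Your $B^\top A B$ is the compressed $m \times m$ matrix. For equal-size groups the two are related by an exact scaling: writing $S$ for the $0$--$1$ membership matrix, we have $\ba - \hat A = S\,\bigl(B^\top(\ba - A)B\bigr)\,S^\top$ with $S^\top S = (n/m)I$, so $\norm{\ba - \hat A} = (n/m)\,\norm{B^\top(\ba - A)B}$. Lifting your bound back to the lemma's matrix therefore yields only $\norm{\ba - \hat A} \le \norm{A - \ba} \le \sqrt{4 d_{\max}\log(2n/\delta)} = O(\sqrt{n\log(n/\delta)})$, which misses the claimed bound by a factor of order $n/m$ (e.g.\ for constant $m$ the claim is $O(\log^2(n/\delta))$ while your argument gives $O(\sqrt{n\log(n/\delta)})$).

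The missing idea is within-block averaging. Each $\hat p_{ij}$ is an average of $\Theta(n^2/m^2)$ independent Bernoulli variables, so by Chernoff it deviates from $p_{ij}$ by only $O\bigl((m/n)\sqrt{\log(m/\delta)}\bigr)$; hence each per-block $n \times n$ error matrix, which is rank one with $\Theta(n^2/m^2)$ equal entries, has spectral norm $O(\sqrt{\log(m/\delta)})$. The paper then sums these $O(m^2)$ \emph{independent} block matrices using a matrix Bernstein inequality (Theorem 5 of \cite{chungradcliffe}) together with a variance bound, which is what produces the two-term maximum in the statement. Your reduction to $\norm{A - \ba}$ discards exactly this structure: the inequality $\norm{B^\top M B} \le \norm{B}^2\norm{M}$ treats $M = A - \ba$ as an arbitrary matrix and cannot see that block-averaging shrinks the noise by an additional factor of roughly $n/m$ beyond the contraction. (Your two-regime handling of $d_{\max}$ and the sparse-case degree argument are fine as far as they go, and your argument would indeed establish the stated inequality if $\ba$ and $\hat A$ were replaced by their $m \times m$ compressions --- but that is a different, and substantially weaker, statement than the one the paper states, proves, and uses.)
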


The same bound will then apply to the difference of the first eigenvectors, rescaled by the first eigenvalues (which will also be close). 
Similar bounds can also be obtained when group sizes may vary, but we stick to this case for simplicity.

\subsubsection*{Edge Queries.}

If we are allowed to use edge queries, we can estimate the empirical edge frequency for each of the $O(m^2)$ pairs of groups by repeatedly sampling a vertex uniformly from each group and querying for an edge. 
This allows reconstruction of the empirical frequencies up to $\epsilon$ error for each group pair, with probability $1 - \delta$, with $O\left(\frac{m^2}{\epsilon^2} \log(m/\delta) \right)$ samples. 
For the block matrix $\hat{A}$ of edge frequencies for all group pairs, \Cref{lemma:sbm-block-bound} implies that this will be close to its expectation when there are not too many groups, and so our estimation will be close to $\ba$ in spectral norm as well. 
If $\ba$ satisfies $(\epsilon,\delta)$-concentration
and the bound from \Cref{lemma:sbm-block-bound} is small compared to the norm of $\ba$, then the first eigenvectors of $A$, $\ba$, and $\hat{A}$ will all be close, and the corresponding intervention proportional to $v_1(\hat{A})$ will be near-optimal. 

When all group pairs may have unique probabilities, this will only provide an advantage over a naive graph reconstruction with $O(m^2)$ queries in the case where $m = o(n)$. 
If we know that all out-group probabilities are the same across groups, our dependence on $m$ becomes linear, as we can treat all pairs of distinct groups as one large group. If in-group probabilities are the same across groups as well, the dependence on $m$ vanishes, as we only have two probabilities to estimate.

\subsubsection*{Random Neighbor Queries.}
We can also estimate the empirical group frequency matrix with random neighbor queries.
For each group, the row in $\ba$ corresponding to the edge probabilities with other groups can be interpreted as a distribution of frequencies for each group.
$O(\frac{m}{\epsilon^2} \log(\frac{m}{\delta}))$ samples per row suffice to get additive error at most $\epsilon$ for all of the 
relative connection probabilities for our chosen group.
This lets us estimate each of the $m$ rows up to scaling, at which point we can use the symmetry of the matrix to recover an estimate of $\ba$ up to scaling by some factor. 
Again, when $(\epsilon, \delta)$-concentration holds and the bound from \Cref{lemma:sbm-block-bound} is small,
the first eigenvector of this estimated matrix will give us a near-optimal intervention.

\section{Experiments}\label{experiments}

Our theoretical results require graphs to be relatively large in order for the obtained bounds to be nontrivial.   
It is natural to ask how well the heuristic interventions we describe will perform on relatively small random graphs, as well as on real-world graphs which do not come from a simple generative model (and may not have independent edges).
Here, we evaluate our described interventions on real and synthetic network data, by adjusting $b_i$ values and computing the resulting welfare at equilibrium, and find that performance can be quite good even on small graphs.
Our experimental results on synthetic networks are deferred to \Cref{sec:gnp_exp}.

\subsection{Real Networks}

To test the usefulness of our results for real-world networks which we expect to be ``well-behaved'' according to our requirements,
we simulate the intervention process using network data collected from villages in South India, for purposes of targeted microfinance deployments,
from \cite{Banerjee1236498}. In this context, we can view actions $a_i$ as indicating levels of economic activity, which we wish to stimulate by increasing individual propensities for spending and creating network effects.
The dataset contains 
many graphs for each village using different edge sets (each representing different kinds of social connections), as well as graphs where nodes are households rather than individuals. We use the household dataset containing the union of all edge sets. These graphs have degree counts ranging from 77 to 365, and our experiments are averaged over 20 graphs from this dataset. We plot competitive ratios while varying $C$ (scaled by network size) and the spectral radius of $\beta A$, fixing $b_i = 1$ for each agent. 

The expected degree intervention is replaced by an intervention proportional to exact degree. We also fit a stochastic block model to graphs using a version of the approach described in \Cref{subsec:sbm-estimation}, using exact connectivity probabilities rather than sampling. Our group labels are obtained by running the Girvan-Newman clustering algorithm \cite{Girvan7821} on the graph, pruning edges until there are either at least 10 clusters with 5 or more vertices or 50 clusters total.
We evaluate the intervention proportional to the first eigenvector of the reconstructed block matrix. All interventions are compared to a baseline, where no change is applied to $b$, for demonstrating the relative degree in social welfare change.

\begin{figure}[ht]
    \begin{center}
    \includegraphics[width=5.5cm]{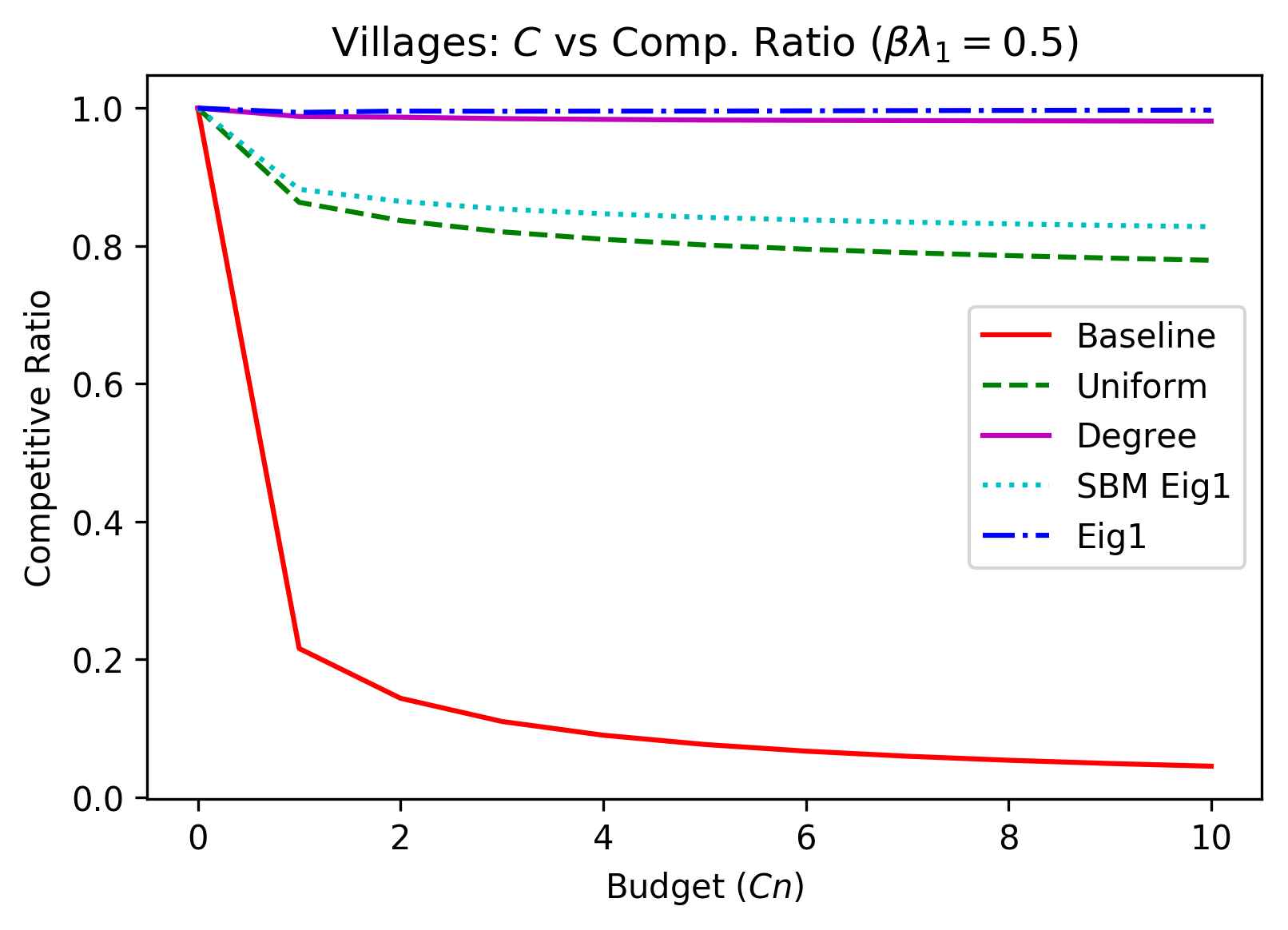}
    \includegraphics[width=5.5cm]{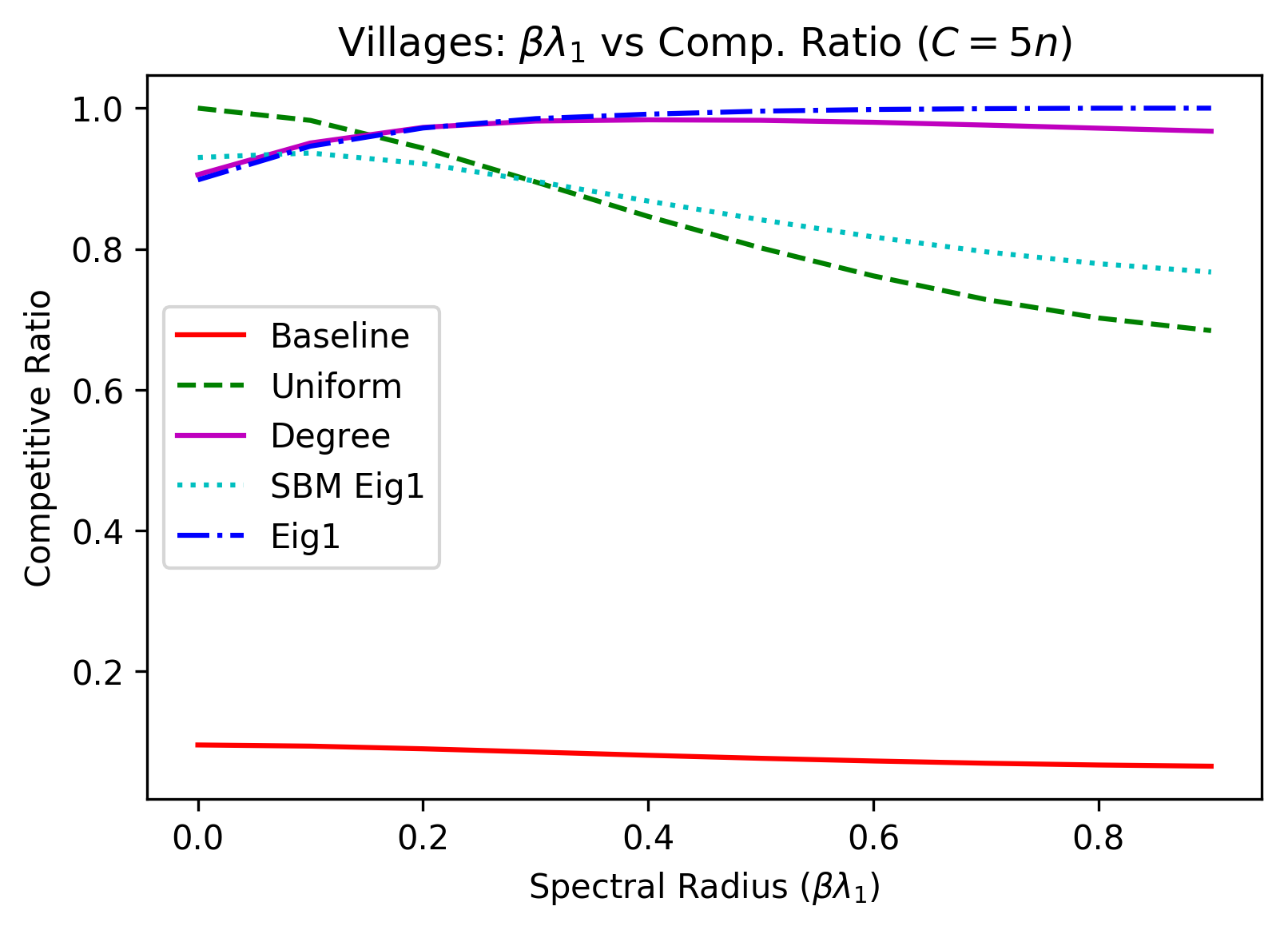}
    \end{center}
    \caption{Intervention in Village Graphs}
    \label{village-plots}
    \centering
\end{figure}

In \Cref{village-plots}, we find that degree interventions perform quite well, and are only slightly surpassed by first eigenvector interventions. 
The stochastic block model approach performs better than uniform when the spectral radius sufficiently large, but is still outperformed by the degree and first eigenvector interventions. 
Upon inspection, the end result of the stochastic block model intervention was often uniform across a large subgraph, with little or no targeting for other vertices, 
which may be an artifact of the clustering method used for group assignment.
On the whole, we observe that minimal-information approaches can indeed perform quite well on both real and simulated networks.

\iffull
\subsubsection*{Acknowledgments.}
We thank Ben Golub, Yash Kanoria, Tim Roughgarden, Christos Papadimitriou, and anonymous reviewers for their invaluable feedback. 
\fi

\bibliographystyle{splncs04}
\bibliography{ref}

\clearpage
\appendix

\section{Graphs with Given Expected Degrees}
\label{gw}


In this section, we show a method for obtaining near-optimal
interventions in graphs generated by the $\gw$ model.  We show that
the first eigenvector of a $\gw$ graph is almost surely close to $\w$,
and our intervention will simply be proportional to $\w$.
This indicates that degree estimates are often sufficient for
near-optimal intervention. 
We assume that $\w$ is sorted in
descending order and that each entry is strictly positive.  
Our main theorem for this section holds for all $\gw$ distributions which satifsy the following specialization of $(\epsilon,\delta)$-concentration. The second condition corresponds to requiring a sufficiently large first eigenvalue, as was the case for general random graphs; $\gw$ graphs do not exhibit clustering on average, and so we do not need an additional condition for the second eigenvalue.


\begin{definition}[$(\epsilon,\delta)$-Concentration for $\gw$ Graphs]
A $\gw$ graph satisfies $(\epsilon, \delta)$-concentration for $\epsilon, \delta \in (0,1)$ if:
\begin{enumerate}
    \item The largest expected degree $\w_1$ is at least $\frac{4}{9} \log(2n / \delta)$ and at most $\frac{\norm{\w}}{6}$   
    \item The second-order average of the expected degree sequence $\tilde{d} =  \frac{\sum_i \w^2_i}{\sum_i w_i}$ is at least    
    $\frac{256 \brack{ \sqrt{4w_1 \log(2n/\delta)} + 1}  }{\epsilon^2}$
\end{enumerate}
\end{definition}

\begin{theorem} \label{thm:gw-main}
    For $\gw$ distributions satisfying $(\epsilon, \delta)$-concentration, and for $C$ at least $\frac{256 \norm{b}^2}{(\epsilon \beta \tilde{d})^2} $, 
    with probability at least $1 - \delta$, 
    \begin{align*}
        \frac{W\left(\sqrt{C}\cdot \frac{\w}{\norm{w}} \right)}{W\left(y^* \right)} \geq 1 - \epsilon,
    \end{align*}
    where $b$ is the vector of standalone marginal values and $y^*$ is the optimal intervention for a budget $C$, if the spectral radius of $A$ is less than $1/\beta$.\footnote{
The spectral radius of $\beta A$ is less than 1 with probability $1- \delta$ when $\frac{1}{\beta} < \frac{1}{\tilde{d} + \sqrt{4 w_1 \log(2n/\delta)}}.$
}
\end{theorem}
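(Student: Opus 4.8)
The plan is to show that the intervention direction $\w/\norm{\w}$ has cosine similarity close to $1$ with the optimal intervention $y^*$, and then invoke \Cref{cosineutil} to convert this into a competitive-ratio bound. The starting observation is that, when self-loops are retained, the expected adjacency matrix factors as $\ba = \frac{\w\w^\top}{\sum_k w_k}$, which is rank one. Hence $v_1(\ba) = \w/\norm{\w}$ exactly, the first eigenvalue is $\lambda_1(\ba) = \frac{\sum_i w_i^2}{\sum_k w_k} = \tilde d$, and every other eigenvalue is $0$; in particular the inverted spectral gap is $\kappa = 0$ and the eigenvalue gap of $\ba$ equals $\tilde d$. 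If self-loops are removed, $\ba$ changes by the diagonal matrix with entries $w_i^2/\sum_k w_k$, whose spectral norm is $w_1^2/\sum_k w_k \le \tilde d/36$ by the assumption $w_1 \le \norm{\w}/6$; I would fold this deterministic perturbation into the random perturbation analyzed below, so that both cases are handled uniformly.

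First I would control the distance between the realized and expected spectra. Using the matrix concentration bound imported from \cite{chungradcliffe} that underlies \Cref{thm:sbm-main}, condition~(1) guarantees that with probability at least $1 - \delta$ we have $\norm{A - \ba} \le \sqrt{4 w_1 \log(2n/\delta)}$, since for $\gw$ graphs the maximum expected degree satisfies $d_{\max} = w_1$. Combining this with the rank-one structure, Weyl's inequality gives $\lambda_1(A) \ge \tilde d - \norm{A-\ba}$ and $\abs{\lambda_i(A)} \le \norm{A-\ba}$ for $i \ge 2$, so the realized graph inherits a large spectral gap, and a Davis--Kahan $\sin\theta$ bound yields
\begin{align*}
\sin\angle\brack{v_1(A),\, \w/\norm{\w}} \;\le\; \frac{\norm{A-\ba}}{\tilde d - \norm{A-\ba}}.
\end{align*}
Condition~(2), namely $\tilde d \ge \frac{256\brack{\sqrt{4 w_1 \log(2n/\delta)} + 1}}{\epsilon^2}$, is exactly what forces $\norm{A-\ba}/\tilde d = O(\epsilon^2)$, and hence $1 - \rho\brack{v_1(A),\, \w/\norm{\w}} = O(\epsilon^2)$.

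Next I would bring in the optimal intervention. Under the stated hypothesis that the spectral radius of $A$ is below $1/\beta$, the matrix $M = I - \beta A$ is invertible with positive eigenvalues, so equilibrium welfare is finite and the \cite{Galeotti2017TargetingII} characterization of $y^*$ applies. Their large-budget result, together with the spectral-gap bound from the previous step, shows that $\rho\brack{v_1(A),\, y^*}$ tends to $1$ at a rate governed by $\norm{b}$ and the gap; tracking constants, the budget threshold $C \ge \frac{256 \norm{b}^2}{(\epsilon \beta \tilde d)^2}$ (which matches the general threshold of \Cref{thm:sbm-main} with $\lambda_1(\ba) = \tilde d$) yields $1 - \rho\brack{v_1(A),\, y^*} = O(\epsilon^2)$. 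Since cosine similarity corresponds to angle on the sphere of radius $\sqrt C$, I would compose the two angle bounds via the triangle inequality for angles to conclude $1 - \rho\brack{\w/\norm{\w},\, y^*} = O(\epsilon^2)$.

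Finally, \Cref{cosineutil} applied with $\gamma = \rho\brack{\w/\norm{\w},\, y^*}$ turns this into a competitive ratio of at least $1 - 4\sqrt{2(1-\gamma)} = 1 - O(\epsilon)$, and choosing the constants hidden in conditions~(1)--(2) and in the budget makes this at least $1 - \epsilon$. The main obstacle I anticipate is bookkeeping these constants across the chain of approximations: both angle bounds are only $O(\epsilon^2)$, and they must survive the square root in \Cref{cosineutil} to still yield $1 - \epsilon$ rather than merely $1 - O(\sqrt{\epsilon})$, which is precisely why the spectral condition and the budget each carry $\epsilon^2$ in their denominators. A secondary subtlety is that the Davis--Kahan gap must be measured against the realized eigenvalue separation, which is why the Weyl step is carried out before the eigenvector-perturbation step.
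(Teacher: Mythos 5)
Your overall route is the same as the paper's: exploit the rank-one structure of $\ba$ (so $v_1(\ba) = \w/\norm{\w}$ and $\lambda_1(\ba) = \tilde d$ with all other eigenvalues $0$), control $\norm{A - \ba}$ via the Chung--Radcliffe concentration bound, deduce a realized spectral gap and an eigenvector-perturbation bound, invoke Proposition 2 of \cite{Galeotti2017TargetingII} to relate the budget to $\rho(y^*, v_1(A))$, and finish with \Cref{cosineutil}. Your use of Weyl plus Davis--Kahan in place of the paper's hand-rolled \Cref{lemma:simple-eigenvector-bound} and \Cref{lemma:2nd-eigenvalue-bound} is a cosmetic difference (Davis--Kahan in fact gives a linear rather than square-root dependence on the perturbation, which is more than enough here).

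There is, however, one step that fails quantitatively: your treatment of the no-self-loop case. You bound the diagonal perturbation $\norm{D} = w_1^2/\sum_k w_k$ by $\tilde d/36$ (via $w_1 \le \norm{\w}/6$) and fold it into $\norm{A - \ba}$. But then the relative perturbation $\norm{A - \ba}/\tilde d$ is only bounded by a constant close to $1/36$, not by $O(\epsilon^2)$, so your Davis--Kahan step yields only a constant angle between $v_1(A)$ and $\w/\norm{\w}$; after passing through \Cref{cosineutil}, the competitive ratio is $1 - \Omega(1)$ rather than $1 - \epsilon$, and the theorem's conclusion for arbitrarily small $\epsilon$ cannot be reached. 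The fix is the bound the paper uses: since edge probabilities lie in $[0,1]$, the model assumes $w_1 \le \sqrt{\sum_k w_k}$, hence $\norm{D} \le 1$; this absolute constant is exactly what the ``$+1$'' inside condition (2), $\tilde d \ge 256\brack{\sqrt{4 w_1 \log(2n/\delta)}+1}/\epsilon^2$, is designed to absorb, giving $\norm{A - \ba}/\tilde d \le \epsilon^2/256$ in both the with- and without-loop cases. A secondary, smaller omission: your ``tracking constants'' step hides the one piece of algebra that actually produces the stated budget threshold, namely converting the realized gap $\lambda_1 \ge 2\lambda_2$ into $\brack{\alpha_2/(\alpha_1 - \alpha_2)}^2 \le 1/(\beta\lambda_1)^2$ (as the paper does before applying Proposition 2 of \cite{Galeotti2017TargetingII}); without it, the dependence of the threshold on $\beta\tilde d$ is asserted rather than proved.
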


Our conditions ensure that the first eigenvalue of the
graph is not too small, and that the other eigenvalues are not too
large in magnitude.
It is worth noting that $\frac{1}{\beta \tilde{d}}$ is small unless the network effects in the game are negligible, in which case we should not expect eigenvector centrality to be important for small budgets.  
In \Cref{gnp}, we consider applications to $\gnp$ and power law
graphs.
Here we assume the vector $\w$ which parameterizes the graph distribution is known, and in fact our intervention will simply be proportional to $\w$. 

In the proof of Theorem \ref{thm:gw-main}, we proceed by observing that the first eigenvector of $\ban$ is proportional to $\w$, and then prove that when the spectral conditions hold, the first eigenvector of $\an$ is nearly proportional to $\w$ with high probability. We then determine sufficient budget sizes such that near-optimality of the intervention follows from \Cref{cosineutil}.

\subsection{Proportionality of Eigenvector Centrality and Degree}
\label{subsec:proportionality}

Here we show that the first eigenvector of the adjacency matrix $A$ is almost surely close to the unit vector rescaling of $w$. 
We first observe that this holds in the standard version of the $\gw$ model which allows for self-loops, and then we show that the first eigenvector does not change by much upon pruning loops.  
\begin{lemma}
    \label{lemma:eigen-proportionality}
    For $\gw$ graph distributions, any eigenvector of $\ba$ corresponding to a non-zero eigenvalue is proportional to $\w$.
\end{lemma}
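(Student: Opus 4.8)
The plan is to exploit the fact that the expected adjacency matrix of a $\gw$ graph (with self-loops retained, as in the standard model) is exactly rank one. Writing $S = \sum_{k \in [n]} w_k$ and viewing $\w$ as a column vector, the defining relation $\ba_{ij} = w_i w_j / S$ is precisely the statement that $\ba = \frac{1}{S}\,\w\,\w^\top$. This outer-product factorization is the entire crux; everything else follows from elementary linear algebra.

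From here the conclusion is immediate. First I would compute the action of $\ba$ on $\w$ directly: $\ba \w = \frac{1}{S}\,\w\,(\w^\top \w) = \frac{\norm{\w}^2}{S}\,\w$, so $\w$ is an eigenvector with eigenvalue $\norm{\w}^2 / S$. Next, for an arbitrary eigenvector $v$ with eigenvalue $\mu \neq 0$, I would expand $\ba v = \frac{1}{S}\,\w\,(\w^\top v) = \frac{\w^\top v}{S}\,\w$. Since $\ba v = \mu v$, this gives $\mu v = \frac{\w^\top v}{S}\,\w$, and because $\mu \neq 0$ we may solve to obtain $v = \frac{\w^\top v}{S\mu}\,\w$, exhibiting $v$ as a scalar multiple of $\w$. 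The scalar is nonzero: if $\w^\top v = 0$ then $\ba v = 0$, forcing $\mu v = 0$ and hence $v = 0$, contradicting that $v$ is an eigenvector. Equivalently, any eigenvector for a nonzero eigenvalue must lie in the one-dimensional column space of $\ba$, which is spanned by $\w$.

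There is no genuine obstacle in this lemma: the whole content is the observation that $\ba$ is an outer product, after which proportionality to $\w$ is forced by dimension counting, and as a byproduct we learn that the unique nonzero eigenvalue is $\norm{\w}^2/S$ with all others equal to zero. The substantive work lies instead in the companion step flagged in the surrounding text—showing that deleting self-loops, which breaks the exact rank-one structure by subtracting the diagonal $\frac{1}{S}\,\mathrm{diag}(w_i^2)$, perturbs the top eigenvector only slightly. That perturbation argument, rather than this rank-one computation, is where the spectral-gap and concentration hypotheses actually come into play.
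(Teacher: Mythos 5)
Your proof is correct, and it takes a cleaner, slightly different route than the paper's. The paper never writes $\ba$ as an outer product; instead it works entrywise, using the ratio relation $\ba_{ij}/\ba_{kj} = w_i/w_k$ together with the componentwise eigenvalue equation $\sum_j \ba_{ij} v_j = \lambda v_i$ to conclude $v_i/v_k = w_i/w_k$ for every pair of coordinates whenever $\lambda \neq 0$. You instead observe the factorization $\ba = \frac{1}{S}\,\w\,\w^\top$ and argue at the matrix level: any eigenvector $v$ with eigenvalue $\mu \neq 0$ satisfies $v = \frac{1}{\mu}\ba v$, hence lies in the range of $\ba$, which is the span of $\w$. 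The two arguments rest on the same structural fact (all rows of $\ba$ are proportional to $\w$), but yours buys more with less: it immediately shows the nonzero eigenvalue is unique, and it produces its value $\norm{\w}^2/S = \frac{\sum_i w_i^2}{\sum_i w_i}$ as a byproduct --- a fact the paper states separately after the lemma ("it is simple to check that the non-zero eigenvalue $\lambda_1(\ba)$ will be equal to the second-order average degree") and then relies on in later proofs. Your closing remark is also well calibrated: the genuinely delicate step is indeed the perturbation argument handling the removal of self-loops (the paper's \Cref{lemma:simple-eigenvector-bound} and \Cref{lemma:1st-eigenvector-close}), not this rank-one computation.
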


It can be checked that up to scaling, $\w$ is the unique vector which satisfies the eigenvalue equation for a non-zero eigenvalue.
As all rows and columns of $\ba$ are proportional to $\w$, there is only one non-zero eigenvalue. 
From the eigenvalue equation it is simple to check that the non-zero eigenvalue $\lambda_1(\ba)$ will be equal to the second-order average degree $\frac{\sum_i w_i^2}{\sum_i w_i}$.

In this formulation of the $\gw$ model, agents are allowed to have self-loops with positive probability. 
We note now that even if we remove the possibility of self-loops by setting the diagonal entries of $\ba$ to 0, which in turn will remove the
rank-deficiency, the spectral norm between the expected matrix (with loops) and the realized matrix (with or without loops) will be small.
This in turn will imply that the first eigenvectors of these matrices are close.


A key tool in this proof is a bound on 
the difference in first eigenvectors of matrices which are close in norm when one of them has a small second eigenvalue. We only make use of this in the case where the second eigenvalue of one matrix is zero, but a more general version of the result, used to prove \Cref{thm:sbm-main}, is included in \Cref{omittedproofs}.

\begin{lemma} \label{lemma:simple-eigenvector-bound}
    Let $A$ be a symmetric $n \times n$ matrix with largest absolute eigenvalue $\lambda_1$ (with multiplicity 1) and all other eigenvalues equal to 0. Let $B$ be a symmetric $n \times n$ matrix with largest absolute eigenvalue $\mu_1$, and suppose $\norm{A - B} \leq \eta$.
Then,
\begin{align*}
    \norm{v_1(A) - v_1(B)} \leq&\; \sqrt{ 2\left(1 - \frac{\mu_1 - \eta}{\lambda_1}\right)}.
\end{align*}
\end{lemma}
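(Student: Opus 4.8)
The plan is to reduce the entire statement to a single cosine-similarity estimate, using the law-of-cosines identity recorded in the paper's footnote: since $v_1(A)$ and $v_1(B)$ are unit vectors, $\norm{v_1(A) - v_1(B)}^2 = 2\bigl(1 - \rho(v_1(A), v_1(B))\bigr)$ once we fix the signs of the two eigenvectors so that their inner product is nonnegative. Consequently it suffices to establish the lower bound $\rho(v_1(A), v_1(B)) \geq \frac{\mu_1 - \eta}{\lambda_1}$, after which the claimed inequality follows by direct substitution. I will take $\lambda_1, \mu_1 > 0$ throughout, as holds in all our applications where $A$ and $B$ are (expected or realized) adjacency-type matrices with a positive Perron eigenvalue.

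First I would exploit the rigid structure of $A$. Because $A$ is symmetric and its only nonzero eigenvalue is $\lambda_1$ with multiplicity one, it is rank one and admits the spectral representation $A = \lambda_1\, v_1(A) v_1(A)^{\top}$. Writing $u = v_1(A)$ and $v = v_1(B)$ for brevity, evaluating the Rayleigh quotient of $A$ at the unit vector $v$ gives exactly $v^{\top} A v = \lambda_1 \langle u, v\rangle^2 = \lambda_1\, \rho(u,v)^2$. On the other side, since $v$ is the top unit eigenvector of $B$ with eigenvalue $\mu_1$, we have $v^{\top} B v = \mu_1$.

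The key step is to link these two Rayleigh quotients through the perturbation bound. As $v$ is a unit vector, $\lvert v^{\top} A v - v^{\top} B v\rvert = \lvert v^{\top}(A - B) v\rvert \leq \norm{A - B} \leq \eta$, whence $v^{\top} A v \geq \mu_1 - \eta$. Combining this with the identity $v^{\top} A v = \lambda_1\, \rho(u,v)^2$ yields $\rho(u,v)^2 \geq \frac{\mu_1 - \eta}{\lambda_1}$. The one remaining wrinkle is that this bounds the \emph{square} of the cosine similarity, whereas we need the cosine similarity itself. Here I would note that $\frac{\mu_1 - \eta}{\lambda_1} \in [0,1]$: it is at most $1$ because Weyl's inequality gives $\mu_1 \leq \lambda_1 + \eta$, and if it were negative the claimed bound would exceed $\sqrt{2} \geq \norm{u - v}$ and hence hold trivially. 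For $x \in [0,1]$ one has $\sqrt{x} \geq x$, so $\rho(u,v) \geq \sqrt{\tfrac{\mu_1 - \eta}{\lambda_1}} \geq \tfrac{\mu_1 - \eta}{\lambda_1}$, which is precisely the cosine-similarity bound sought.

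I expect no deep obstacle here; the only delicate points are bookkeeping. The first is fixing the eigenvector sign conventions so that $\rho(u,v) \geq 0$, which is what validates the law-of-cosines passage from cosine similarity back to Euclidean distance. The second is justifying $\frac{\mu_1 - \eta}{\lambda_1} \in [0,1]$ so that the relaxation $\sqrt{x} \geq x$ is legitimate; this is where the positivity assumptions and Weyl's inequality enter. It is worth observing that the argument actually proves the sharper bound with $\sqrt{\cdot}$ retained inside, and the stated weaker form is obtained solely through this final relaxation, which is the form convenient for chaining with \Cref{cosineutil}.
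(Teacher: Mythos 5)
Your proof is correct, but it takes a genuinely different route from the paper's. The paper decomposes $v_1(B) = \phi_1 v_1(A) + \phi_2 x$ with $x \perp v_1(A)$, lower-bounds the vector image $\norm{A v_1(B)} \geq \norm{B v_1(B)} - \norm{(B-A)v_1(B)} \geq \mu_1 - \eta$ by the triangle inequality, and uses the rank-one structure only to annihilate the orthogonal component, so that $\norm{A v_1(B)} = \lambda_1 \phi_1$ and hence $\phi_1 \geq (\mu_1 - \eta)/\lambda_1$; the conclusion follows from $\norm{v_1(A) - v_1(B)} = \sqrt{2(1-\phi_1)}$. You instead work with scalar quadratic forms: writing $A = \lambda_1 v_1(A) v_1(A)^{\top}$, you compute $v_1(B)^{\top} A\, v_1(B) = \lambda_1 \rho^2$, compare it to $v_1(B)^{\top} B\, v_1(B) = \mu_1$ via $\lvert v_1(B)^{\top}(A-B)v_1(B)\rvert \leq \eta$, and obtain $\rho^2 \geq (\mu_1-\eta)/\lambda_1$. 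Since $\phi_1 = \rho$, your intermediate bound is strictly sharper than the paper's whenever the ratio lies in $(0,1)$: you get $\rho \geq \sqrt{(\mu_1-\eta)/\lambda_1} \geq (\mu_1-\eta)/\lambda_1$, and only the final relaxation $\sqrt{x} \geq x$ brings you back to the stated inequality. The price is extra bookkeeping: you must check $(\mu_1-\eta)/\lambda_1 \leq 1$, though note you do not actually need Weyl's inequality for this, since $\lambda_1 \rho^2 \geq \mu_1 - \eta$ together with $\rho^2 \leq 1$ already gives it; the case $\mu_1 < \eta$ is trivial as you observe. What the paper's formulation buys is that its argument is verbatim the $\kappa = 0$ specialization of its general perturbation lemma (\Cref{thm:concentration}), which is reused when the unperturbed matrix has a nonzero second eigenvalue (as in the proof of \Cref{thm:sbm-main}); your Rayleigh-quotient identity $v^{\top} A v = \lambda_1 \rho^2$ is specific to rank one and would need modification in that setting. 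Both proofs share the same implicit conventions ($\lambda_1, \mu_1 > 0$ and eigenvector signs chosen so that $\rho \geq 0$), which you handle more explicitly than the paper does.
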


This follows from considering a decomposition of the first eigenvector
of $B$ into a component proportional to $v_1(A)$ and one proportional
to some vector orthogonal to $v_1(A)$. Given that $A-B$ has a small
spectral norm, the image of $v_1(B)$ in $A$ will be large, showing
that the orthogonal component is small, which we can use to show that
the eigenvectors are close.

We can then show that norm difference of the expected and realized
matrices is almost surely small using a matrix concentration bound
from \cite{chungradcliffe}, allowing us to apply
\Cref{lemma:simple-eigenvector-bound} to bound the difference in their
eigenvectors, as the second eigenvalue of $\ba$ is 0.

\begin{lemma} \label{lemma:1st-eigenvector-close}
If the above assumptions about the $\gw$ distribution are satisfied, then with probability $1-\delta$ it holds that:
\begin{align*}
    \norm{v_1(A) - \frac{\w}{\norm{\w}}} \leq&\; \epsilon / 8.
\end{align*}
\end{lemma}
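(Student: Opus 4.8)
The statement to prove is Lemma~\ref{lemma:1st-eigenvector-close}: under the $\gw$ concentration assumptions, $\norm{v_1(A) - \w/\norm{\w}} \leq \epsilon/8$ with probability $1-\delta$. The plan is to chain together three facts already assembled in the excerpt. First, by Lemma~\ref{lemma:eigen-proportionality}, the first eigenvector of the expected matrix $\ba$ is exactly $\w/\norm{\w}$, and (as noted just after that lemma) $\ba$ has a single non-zero eigenvalue $\lambda_1(\ba) = \tilde d = \frac{\sum_i w_i^2}{\sum_i w_i}$, with all other eigenvalues equal to $0$. This is precisely the hypothesis required by Lemma~\ref{lemma:simple-eigenvector-bound}, with $A$ in that lemma playing the role of $\ba$ here. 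So I would set up the application with $\lambda_1 = \tilde d$.

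\textbf{Step 2: control the norm perturbation.} The second ingredient is a matrix concentration bound. I would invoke the bound from \cite{chungradcliffe} (imported in \Cref{sec:imports}) to argue that, with probability at least $1-\delta$, the realized adjacency matrix $A$ is close to $\ba$ in spectral norm, say $\norm{\ba - A} \leq \eta$ for $\eta = \sqrt{4 w_1 \log(2n/\delta)}$. Condition~\ref{dmax-rg-cond} of $(\epsilon,\delta)$-concentration ($\w_1 \geq \frac{4}{9}\log(2n/\delta)$) is exactly the regularity hypothesis those concentration bounds require. Since removing self-loops changes $\ba$ only on the diagonal by entries bounded by $w_1^2 / \norm{\w}_1 \le w_1$ (as remarked in the surrounding text about pruning loops), I would fold that small additive change into $\eta$ as well, so the bound applies whether or not self-loops are present.

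\textbf{Step 3: plug into Lemma~\ref{lemma:simple-eigenvector-bound} and bound the right-hand side.} Applying Lemma~\ref{lemma:simple-eigenvector-bound} with $A \to \ba$, $B \to A$, $\lambda_1 = \tilde d$, and $\mu_1 = \lambda_1(A)$, I get
\begin{align*}
\norm{v_1(A) - \tfrac{\w}{\norm{\w}}} \;\leq\; \sqrt{2\Bigl(1 - \tfrac{\mu_1 - \eta}{\tilde d}\Bigr)}.
\end{align*}
Since $\abs{\mu_1 - \tilde d} \le \norm{A - \ba} \le \eta$ by Weyl's inequality, we have $\mu_1 - \eta \ge \tilde d - 2\eta$, so $1 - \frac{\mu_1-\eta}{\tilde d} \le \frac{2\eta}{\tilde d}$, giving the cleaner bound $\sqrt{4\eta/\tilde d}$. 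It then remains to verify that Condition~\ref{li-rg-cond} of the $\gw$ concentration definition — which lower-bounds $\tilde d$ by $\frac{256(\sqrt{4w_1\log(2n/\delta)}+1)}{\epsilon^2}$, i.e.\ essentially by $256(\eta+1)/\epsilon^2$ — forces $4\eta/\tilde d \le \epsilon^2/64$, so that the square root is at most $\epsilon/8$ as claimed. The constants in Condition~\ref{li-rg-cond} are evidently chosen to make this final arithmetic come out exactly.

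\textbf{Main obstacle.} The conceptual core is fully packaged into the cited Lemmas~\ref{lemma:eigen-proportionality} and~\ref{lemma:simple-eigenvector-bound}, so the proof is mostly bookkeeping. The step I expect to require the most care is making the self-loop removal rigorous: I need to confirm that the perturbation introduced by zeroing the diagonal genuinely stays within the $\eta$ budget (using the assumption $w_1 \le \norm{\w}/6$ from Condition~\ref{dmax-rg-cond}, which keeps diagonal entries small relative to $\tilde d$), and that Lemma~\ref{lemma:simple-eigenvector-bound} is being applied to the loop-free $\ba$ whose top eigenvalue is still well-approximated by $\tilde d$ rather than to the original rank-one matrix. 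Getting the two-sided eigenvalue displacement (both $\mu_1 \le \tilde d + \eta$ and $\mu_1 \ge \tilde d - \eta$) and tracking whether the relevant eigenvalue of $A$ is genuinely the \emph{largest} one — rather than a large negative eigenvalue — is the place where the spectral-gap/regularity hypotheses must be used honestly.
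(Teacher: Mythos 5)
Your proposal follows the same route as the paper's proof: identify $v_1(\ba) = \w/\norm{\w}$ and the rank-one structure of $\ba$ with $\lambda_1(\ba)=\tilde d$ via \Cref{lemma:eigen-proportionality}, bound $\norm{A-\ba}$ by the Chung--Radcliffe concentration bound (\Cref{lemma:matrix-norm-deviation}), feed both into \Cref{lemma:simple-eigenvector-bound}, and close the arithmetic using the second condition of $(\epsilon,\delta)$-concentration. The structure, the Weyl/triangle-inequality step relating $\mu_1$ to $\tilde d$, and the final computation $2\sqrt{\eta/\tilde d}\le \epsilon/8$ all match the paper.

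There is, however, one step that fails as written: your treatment of the self-loop (diagonal) perturbation. You bound the diagonal entries of $\ba$ by $w_1^2/\norm{\w}_1 \le w_1$ and propose to fold $w_1$ into $\eta$. But $\tilde d = \sum_i w_i^2/\sum_i w_i$ is a weighted average of the $w_i$, so $\tilde d \le w_1$ always; with $\eta \ge w_1$ the quantity $4\eta/\tilde d$ is at least $4$, and the resulting eigenvector bound is vacuous (any two unit vectors are within distance $2$). Your proposed repair in the ``main obstacle'' paragraph does not work either: the condition $w_1 \le \norm{\w}/6$ only gives $w_1^2/\norm{\w}_1 \le \tilde d/36$, which after folding into $\eta$ leaves a constant of order $1/3$ on the right-hand side rather than $\epsilon/8$. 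The fact actually needed is the paper's standing assumption from \Cref{prelims} that $w_1 \le \sqrt{\sum_k w_k}$ (imposed so that edge probabilities lie in $[0,1]$): it implies every diagonal entry $\ba_{ii} = w_i^2/\sum_k w_k$ is at most $1$, hence the diagonal matrix $D$ satisfies $\norm{D}\le 1$. This is exactly the paper's opening observation, and it is precisely this $O(1)$ perturbation that the ``$+1$'' in the lower bound on $\tilde d$ in Condition~2 is built to absorb. With that single correction, your argument coincides with the paper's proof.
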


As the first eigenvector of $\ba$ is proportional to $\w$, this shows
that that the first eigenvector will be nearly proportional to $\w$
regardless of whether we allow self-loops, and so our intervention
will be close to the true graph's first eigenvector. Next, we will see
that this implies near-optimality for a sufficiently large budget.

\subsection{Bounding Suboptimality of Interventions}
\label{subsec:subopt}

The previous results indicate that the first eigenvector will be close
to its expectation when our assumptions hold, even upon removing
self-loops. 
We now give a similar results for the first and second eigenvalues,
which allows us to guarantee a sufficiently large spectral gap for $M^{-1}$. First,
we give a bound on the second eigenvalue of $\ba$ with the diagonal
removed.

\begin{lemma}\label{lemma:2nd-eigenvalue-bound}
    Let $D$ be the matrix which is equal to $\ba$ along the diagonal and 0 elsewhere, and let $\lambda_2$ denote the second-largest absolute eigenvalue of 
  $\ba-D$. 
  Then, 
  \begin{equation*}
    \lambda_2 \leq \frac{2w_1\lambda_1}{\norm w} + 1.
  \end{equation*}
\end{lemma}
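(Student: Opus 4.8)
The plan is to exploit the fact that $\ba$ is a rank-one matrix, so its spectrum is fully understood, and then treat the removal of the diagonal as a low-norm perturbation. By Lemma~\ref{lemma:eigen-proportionality} and the remark following it, $\ba = \frac{1}{\sum_k w_k}\, w w^{\top}$ has a single nonzero eigenvalue $\lambda_1 = \frac{\sum_i w_i^2}{\sum_i w_i}$ with eigenvector $w/\norm{w}$, while every other eigenvalue is exactly $0$. The matrix $D$ is diagonal with entries $D_{ii} = \ba_{ii} = w_i^2 / \sum_k w_k$; since the $w_i$ are sorted in descending order, its spectral norm is its largest entry, $\norm{D} = w_1^2 / \sum_k w_k$.

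First I would write $\ba - D = \ba + (-D)$ and apply Weyl's inequality, which guarantees that each eigenvalue of $\ba - D$ lies within $\norm{D}$ of the corresponding eigenvalue of $\ba$. Because all but one eigenvalue of $\ba$ equals $0$, this forces every eigenvalue of $\ba - D$ other than the top one into the interval $[-\norm{D}, \norm{D}]$. Consequently at most one eigenvalue of $\ba - D$ can have magnitude exceeding $\norm{D}$, so when the eigenvalues are ordered by absolute value the second one satisfies $\lambda_2 \le \norm{D} = w_1^2 / \sum_k w_k$.

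It then remains to compare this clean estimate with the stated bound. Using $\lambda_1 = \norm{w}^2 / \sum_k w_k$ together with $w_1 \le \norm{w}$ (true for any vector), one gets $\norm{D} = w_1^2/\sum_k w_k \le w_1 \norm{w}/\sum_k w_k = w_1 \lambda_1 / \norm{w}$, which sits comfortably below $\frac{2 w_1 \lambda_1}{\norm{w}} + 1$. The additive $+1$ is in fact unnecessary for this route, but it is harmless and also follows from the standing assumption $w_1 \le \sqrt{\sum_k w_k}$, which yields $\norm{D} \le 1$ directly.

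I expect the only delicate point to be the bookkeeping around the \emph{absolute}-value ordering in the Weyl step: one must argue that the unique eigenvalue permitted to be large in magnitude is precisely the perturbed top eigenvalue near $\lambda_1$, so that the second-largest-in-magnitude eigenvalue is genuinely controlled by $\norm{D}$ rather than by $\lambda_1$. This is immediate here only because the rank-one structure of $\ba$ collapses all of its remaining eigenvalues to exactly $0$. An alternative, slightly more laborious route that reproduces the exact constants in the statement is variational: bound $\max_{v \perp v_1(\ba - D)} \lvert v^{\top}(\ba - D)v\rvert$ by splitting it into the quadratic form of $\ba$ (small because $v$ is nearly orthogonal to $w$, via the eigenvector-closeness furnished by Lemma~\ref{lemma:simple-eigenvector-bound}) and the quadratic form of $D$ (bounded by $\norm{D} \le 1$).
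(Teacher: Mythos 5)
Your proof is correct, and it takes a genuinely different route from the paper. The paper argues variationally: it characterizes $\lambda_2$ as $\sup_{x \perp v_1(\ba - D),\, \norm{x}=1} \norm{(\ba - D)x}$, splits this into $\norm{\ba x} + \norm{Dx}$, and then controls $\norm{\ba x}$ by decomposing $x$ along $v_1(\ba)$ and its orthogonal complement (which is the kernel of $\ba$), using \Cref{lemma:simple-eigenvector-bound} to bound the overlap $\lvert v_1(\ba)\cdot x\rvert \leq \norm{v_1(\ba - D) - v_1(\ba)} \leq 2w_1/\norm{w}$; this is what produces the $\frac{2w_1\lambda_1}{\norm{w}}$ term, with $\norm{Dx}\leq 1$ supplying the additive constant. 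You instead apply Weyl's eigenvalue perturbation inequality to $\ba - D = \ba + (-D)$: since $\ba$ is rank one, every eigenvalue of $\ba$ other than $\lambda_1$ is exactly $0$, so every eigenvalue of $\ba - D$ other than the top one lies in $[-\norm{D}, \norm{D}]$, and your handling of the absolute-value ordering (at most one eigenvalue can exceed $\norm{D}$ in magnitude, so the second-largest magnitude is at most $\norm{D}$) is the right way to close that gap. Your route is shorter, avoids the eigenvector-perturbation machinery entirely, and yields the strictly sharper bound $\lambda_2 \leq \norm{D} = w_1^2/\sum_k w_k \leq \min\left(1, \frac{w_1 \lambda_1}{\norm{w}}\right)$, which only strengthens the downstream spectral-gap conclusions in \Cref{lemma:2nd-eigenvalue-bound-2} and \Cref{thm:gw-main}. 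What the paper's argument buys in exchange is self-containedness: it reuses \Cref{lemma:simple-eigenvector-bound}, a tool the paper needs anyway, rather than importing an external eigenvalue-perturbation fact.
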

This follows from a similar orthogonal decomposition approach to the proof of \Cref{lemma:simple-eigenvector-bound}, as well as another direct application of \Cref{lemma:simple-eigenvector-bound}.
We can then get an absolute bound on the first and second eigenvalues of $A$.

\begin{lemma}
    \label{lemma:2nd-eigenvalue-bound-2}
    With probability at least $1 - \delta$, the second eigenvalue of $A$ is at most
$$ \frac{2 w_1 \tilde{d}}{\norm{w}} + 1 + \sqrt{4 w_1 \log(2n / \delta)}  $$
and the first eigenvalue of $A$ is at least 
$$\tilde{d} - 1 - \sqrt{4 w_1 \log(2n / \delta) }.$$
\end{lemma}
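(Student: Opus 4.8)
The plan is to obtain both inequalities by establishing the corresponding spectral bounds for the \emph{expected} matrix with its diagonal removed and then transferring them to the realized matrix $A$ via a matrix concentration inequality combined with Weyl's eigenvalue perturbation inequality. Let $D$ be the diagonal part of $\ba$, so that in the standard no-self-loop $\gw$ model we have $\mathbb{E}[A] = \ba - D$; I would treat this as the binding case, since the self-loop case has mean exactly $\ba$ (rank one) and is strictly easier. The two numbers appearing in the statement are then just (slightly perturbed) spectral quantities of $\ba - D$, shifted by the concentration radius $\eta := \sqrt{4 w_1 \log(2n/\delta)}$.

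First I would record the deterministic spectral facts about $\ba - D$. Since $\ba = \frac{\w\w^{\top}}{\sum_k w_k}$ is rank one with unique nonzero eigenvalue $\lambda_1(\ba) = \tilde{d}$ (established after \Cref{lemma:eigen-proportionality}), and since $\norm{D} = \max_i \ba_{ii} = \frac{w_1^2}{\sum_k w_k} \leq 1$ by the standing assumption $w_1 \leq \sqrt{\sum_k w_k}$, Weyl's inequality gives $\lambda_1(\ba - D) \geq \lambda_1(\ba) - \norm{D} \geq \tilde{d} - 1$. For the non-top spectrum I invoke \Cref{lemma:2nd-eigenvalue-bound} directly, which bounds the second-largest absolute eigenvalue of $\ba - D$ by $\frac{2 w_1 \tilde{d}}{\norm{w}} + 1$ (and hence bounds $|\lambda_i(\ba - D)|$ for every $i \geq 2$). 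Next I would apply the Chung--Radcliffe concentration bound (\cite{chungradcliffe}); its applicability is guaranteed here because the maximum expected degree of the $\gw$ graph is $w_1$, which by the first $(\epsilon,\delta)$-concentration condition is at least $\frac{4}{9}\log(2n/\delta)$. This yields $\norm{A - (\ba - D)} \leq \eta$ with probability at least $1 - \delta$. Conditioning on this event and applying Weyl to the signed, value-ordered spectrum gives $\lambda_1(A) \geq \lambda_1(\ba - D) - \eta \geq \tilde{d} - 1 - \eta$, which is exactly the claimed first-eigenvalue bound, and simultaneously $\lambda_2(A) \leq \lambda_2(\ba - D) + \eta$ together with $\lambda_n(A) \geq \lambda_n(\ba - D) - \eta$.

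The one point requiring care, and the main (though minor) obstacle, is that the statement concerns the eigenvalue that is \emph{second-largest in absolute value}, whereas Weyl's inequality is cleanest for the value-ordered spectrum. To close this, I would argue that $\lambda_1(A)$ is the unique eigenvalue of large magnitude: the bound above shows $\lambda_1(A) \geq \tilde{d} - 1 - \eta$, which is large under the concentration hypotheses, while both $\lambda_2(A) \leq \frac{2 w_1 \tilde{d}}{\norm{w}} + 1 + \eta$ and $-\lambda_n(A) \leq \frac{2 w_1 \tilde{d}}{\norm{w}} + 1 + \eta$ follow from the displayed Weyl inequalities and the absolute bound of \Cref{lemma:2nd-eigenvalue-bound}. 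Hence the second-largest absolute eigenvalue of $A$ equals $\max(\lambda_2(A), -\lambda_n(A)) \leq \frac{2 w_1 \tilde{d}}{\norm{w}} + 1 + \eta$, which is the claimed second-eigenvalue bound. Both statements hold on the same probability-$(1-\delta)$ concentration event, completing the proof.
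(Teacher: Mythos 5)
Your proof is correct and takes essentially the same route as the paper's: combine \Cref{lemma:2nd-eigenvalue-bound} with the Chung--Radcliffe concentration bound and an eigenvalue perturbation step (the paper's ``triangle inequality'' is exactly your Weyl argument), using $\lambda_1(\ba)=\tilde{d}$ and $\norm{D}\leq 1$. Your explicit handling of the value-ordered versus absolute-value-ordered spectrum, and of the self-loop versus no-self-loop cases, fills in details the paper's one-line proof glosses over, but it is a refinement of the same argument rather than a different approach.
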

This follows from applying the triangle inequality to \Cref{lemma:2nd-eigenvalue-bound} and 
Theorem 1 from \cite{chungradcliffe} as well as from our observation about the first expected eigenvalue of $\ba$. 
To complete the proof of \Cref{thm:gw-main}, we can combine the previous results to show that $\lambda_1 > 2 \lambda_2$ when the stated conditions hold. Proposition 2 from \cite{Galeotti2017TargetingII} allows us to use this fact to show that when budget is above our lower bound, the first eigenvector is close to the optimal intervention in cosine similarity. We can then show that $w$ and the optimal intervention are close in cosine similiarity, using \Cref{lemma:1st-eigenvector-close} as a key step. Plugging this into \Cref{cosineutil} gives us the theorem.

\subsection{Examples: 
$\gnp$ and Power Law Graphs}
\label{gnp}

$\gnp$ graphs are perhaps the most well-studied family of random graphs, which we can interpret as a special case of the $\gw$ model and give explicit conditions for when $(\epsilon, \delta)$-concentration holds. These conditions are lower bounds on $n$ and $p$ which guarantee the requirements for applying \Cref{thm:gw-main}, and for clarity we restate the eigenvector similarity and near-optimality results for the case of $\gnp$ graphs. 

\begin{lemma}
    For $\gnp$ graphs with $p \geq \frac{4 \log(2n/\delta)}{9(n-1)}$ and $n$ at least ${\Omega}(1 / \epsilon^4) $, 
    \begin{align*}
        \norm{v_1(A) - \frac{1}{\sqrt{n}}} \leq&\; \epsilon/8.
    \end{align*}
    \label{lemma:gnp-eigenvector-bound}
\end{lemma}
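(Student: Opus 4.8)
The plan is to treat $\gnp$ as the special case of the $\gw$ model in which the expected degree sequence is uniform, $\w = np\cdot\mathbf{1}$, and then to invoke \Cref{lemma:1st-eigenvector-close} directly. The first step is to check that the target vector in the statement is exactly the normalized degree vector of this $\gw$ instance: since every $w_i = np$ we have $\norm{\w} = np\sqrt{n}$, and hence $\frac{\w}{\norm{\w}} = \frac{1}{\sqrt{n}}\mathbf{1}$, the uniform unit vector appearing on the left-hand side. Thus the claimed inequality is literally the conclusion of \Cref{lemma:1st-eigenvector-close} applied to this instance, and the whole task reduces to verifying that the hypotheses on $p$ and $n$ imply $(\epsilon,\delta)$-concentration for the corresponding $\gw$ distribution.

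Next I would verify the two $\gw$ concentration conditions. For the uniform sequence one computes $w_1 = \tilde{d} = np$, since $\tilde{d} = \frac{\sum_i w_i^2}{\sum_i w_i} = \frac{n(np)^2}{n\cdot np} = np$. Condition (1a), namely $w_1 \geq \frac{4}{9}\log(2n/\delta)$, follows immediately from $p \geq \frac{4\log(2n/\delta)}{9(n-1)}$, because $np \geq \frac{n}{n-1}\cdot\frac{4}{9}\log(2n/\delta) \geq \frac{4}{9}\log(2n/\delta)$. Condition (1b), $w_1 \leq \norm{\w}/6$, reduces to $np \leq np\sqrt{n}/6$, i.e.\ $\sqrt{n}\geq 6$, which holds once $n \geq 36$ and is subsumed by $n = \Omega(1/\epsilon^4)$.

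The crux is condition (2), which asks that $\tilde{d} = np$ be at least $\frac{256\brack{\sqrt{4w_1\log(2n/\delta)}+1}}{\epsilon^2}$. Dropping the additive constant and substituting $w_1 = np$, this is essentially the requirement that $np$ be at least a constant multiple of $\frac{\sqrt{np\log(2n/\delta)}}{\epsilon^2}$, which after squaring becomes $np = \Omega\brack{\frac{\log(2n/\delta)}{\epsilon^4}}$. This is exactly where the hypothesis $n = \Omega(1/\epsilon^4)$ is needed, so that, once $p$ is bounded away from its minimal admissible value, $np$ grows fast enough to absorb both the $\epsilon^{-2}$ factor and the square-root loss. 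I expect this quartic dependence to be the main obstacle to state cleanly: the bare lower bound on $p$ only guarantees $np = \Omega(\log(2n/\delta))$, so it is the $1/\epsilon^4$ growth of $n$ that actually drives condition (2).

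As an alternative that makes the threshold transparent, I could bypass the general lemma and argue directly. Here $\ba = pJ$ (or $p(J-I)$ without self-loops), where $J$ is the all-ones matrix; its first eigenvalue is $np$ with eigenvector $\frac{1}{\sqrt{n}}\mathbf{1}$, and all remaining eigenvalues vanish. The Chung--Radcliffe concentration bound \cite{chungradcliffe} gives $\norm{A-\ba} \leq \sqrt{4np\log(2n/\delta)} =: \eta$ with probability $1-\delta$, and since $\ba$ has second eigenvalue $0$, \Cref{lemma:simple-eigenvector-bound} together with $\lambda_1(A) \geq np - \eta$ yields $\norm{v_1(A) - \frac{1}{\sqrt{n}}\mathbf{1}} \leq \sqrt{4\eta/(np)} = 2\brack{4\log(2n/\delta)/(np)}^{1/4}$. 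Forcing this below $\epsilon/8$ recovers precisely $np = \Omega(\log(2n/\delta)/\epsilon^4)$, matching the threshold obtained via the reduction route and confirming the role of the $n = \Omega(1/\epsilon^4)$ hypothesis.
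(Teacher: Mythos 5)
Your overall route is the same as the paper's: view $\gnp$ as the $\gw$ model with $\w = np\cdot\mathbf{1}$ (so that $\w/\norm{\w}$ is the uniform unit vector), verify $(\epsilon,\delta)$-concentration for $\gw$ graphs, and invoke \Cref{lemma:1st-eigenvector-close}; your checks of conditions (1a) and (1b) are correct and match the paper. The genuine gap is exactly at the step you call the crux, condition (2), and your own accounting exposes it. As you derive, condition (2) amounts to $np = \Omega\bigl(\log(2n/\delta)/\epsilon^4\bigr)$. But the stated hypotheses give only $np \geq \tfrac{4}{9}\log(2n/\delta)$ together with $n = \Omega(1/\epsilon^4)$, and these do \emph{not} imply the requirement: fix $\epsilon$ and $\delta$, take $n$ arbitrarily large, and set $p$ equal to its minimal admissible value $\tfrac{4\log(2n/\delta)}{9(n-1)}$; then $np \leq \tfrac{8}{9}\log(2n/\delta)$ for $n \geq 2$, which falls short of $c\log(2n/\delta)/\epsilon^4$ for every $\epsilon < 1$ no matter how large $n$ is. Your clause ``once $p$ is bounded away from its minimal admissible value'' is therefore not a harmless remark --- it is an unstated additional hypothesis doing all the work. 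The same gap appears in your alternative direct argument: forcing $2\bigl(4\log(2n/\delta)/(np)\bigr)^{1/4} \leq \epsilon/8$ again needs $np = \Omega(\log(2n/\delta)/\epsilon^4)$, which you cannot extract from the hypotheses as stated.

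It is instructive to see how the paper's proof passes this point. In its displayed chain it bounds $\sqrt{6 w_1 \log(2n/\delta)}$ by a constant times $\sqrt{n}\,p$, which is only valid if $\log(2n/\delta) \leq \tfrac{9}{4}p$ --- this is precisely the annotation ``$p \geq \tfrac{4}{9}\log(2n/\delta)$'', i.e.\ the density hypothesis read \emph{without} the $(n-1)$ in the denominator. Under that much stronger reading, condition (2) collapses to $\sqrt{n} = \Omega(1/\epsilon^2)$, which is the one place where $n = \Omega(1/\epsilon^4)$ genuinely enters, and the proof closes. So your difficulty is real and sits exactly where the paper leans on a relation between $p$ and $\log(2n/\delta)$ that is stronger than the lemma's stated threshold. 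To make your proposal complete you must either import such an assumption explicitly (e.g.\ assume $np = \Omega\bigl(\log(2n/\delta)/\epsilon^4\bigr)$, which is what both of your computations actually need), or note that the conclusion cannot be derived from the stated hypotheses via these perturbation bounds; as written, the proposal does not establish the lemma.
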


\begin{theorem}
    For $G(n, p)$ graphs with $p \geq \frac{4 \log (2n / \delta) }{9n-1}$ and  $n$ at least $\Omega (1/\epsilon^4)$,      
with probability at least $1- \delta$, 
the uniform intervention $y = \frac{\sqrt{C}}{\sqrt n} \cdot \mathbf{1}$ achieves utility within $1 - \epsilon$ of the optimal intervention for budgets $C$ at least $\frac{256 \norm{b}^2}{\left(\epsilon \beta np \right)^2}$.   
\label{thm:gnp-main}
\end{theorem}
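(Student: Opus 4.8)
The plan is to derive the statement as a direct specialization of \Cref{thm:gw-main}, since a $\gnp$ graph is exactly the $\gw$ graph with the uniform expected-degree sequence $w_i = np$ (equivalently $(n-1)p$ without self-loops). For this sequence $\ba_{ij} = \frac{w_i w_j}{\sum_k w_k} = p$, so $\ba$ is rank one with unit first eigenvector $\frac{1}{\sqrt n}\mathbf{1}$; this is precisely the proposed uniform intervention. Thus $\sqrt{C}\,v_1(\ba) = \frac{\sqrt C}{\sqrt n}\mathbf{1}$, and everything reduces to checking that the $\gw$ version of $(\epsilon,\delta)$-concentration holds and that the budget and spectral-radius thresholds specialize as claimed.

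First I would verify the two concentration conditions for the uniform sequence. Since $\norm{\w} = np\sqrt n$, the condition $w_1 \le \norm{\w}/6$ reduces to $\sqrt n \ge 6$, which is implied by $n = \Omega(1/\epsilon^4)$, while $w_1 = np \ge \frac 49 \log(2n/\delta)$ follows directly from the hypothesis $p \ge \frac{4\log(2n/\delta)}{9(n-1)}$. For the second condition, the second-order average degree is $\tilde{d} = \frac{\sum_i w_i^2}{\sum_i w_i} = np$, so the requirement $\tilde{d} \ge \frac{256(\sqrt{4 w_1 \log(2n/\delta)}+1)}{\epsilon^2}$ becomes $np \ge \frac{256(2\sqrt{np\log(2n/\delta)}+1)}{\epsilon^2}$. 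Rearranging (dropping the lower-order $+1$) gives the equivalent requirement $np = \Omega\!\left(\log(2n/\delta)/\epsilon^4\right)$, which is exactly what the growth hypothesis $n = \Omega(1/\epsilon^4)$ secures once $p$ is bounded away from $0$.

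With both conditions in hand, the conclusion is immediate from \Cref{thm:gw-main}: substituting $\tilde{d} = np$ into the $\gw$ budget threshold $\frac{256\norm{b}^2}{(\epsilon\beta\tilde{d})^2}$ yields exactly the stated threshold $\frac{256\norm{b}^2}{(\epsilon\beta np)^2}$, and the spectral-radius condition specializes identically. Unpacking the machinery, \Cref{lemma:gnp-eigenvector-bound} (the $\gnp$ instance of \Cref{lemma:1st-eigenvector-close}) gives $\norm{v_1(A) - \frac{1}{\sqrt n}\mathbf{1}} \le \epsilon/8$ with probability $1-\delta$; because all non-principal eigenvalues of $\ba$ vanish, the spectral gap of $M^{-1}$ in the realized graph is large, so for budgets above the threshold $y^*$ is close in cosine similarity to $v_1(A)$, and hence to $\frac{1}{\sqrt n}\mathbf{1}$; feeding this cosine bound into \Cref{cosineutil} converts it to a competitive ratio of $1-\epsilon$.

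The main obstacle is the second-order average condition. The density lower bound alone only forces $np = \Theta(\log(2n/\delta))$, which is insufficient: at the threshold value of $p$ the required inequality fails for small $\epsilon$. The crux is therefore to show that the two hypotheses together push $np$ up to $\Omega(\log(2n/\delta)/\epsilon^4)$, which is also the exact scale at which the eigenvector-concentration bound from \Cref{lemma:simple-eigenvector-bound} takes effect, since the spectral-norm error $\sqrt{4np\log(2n/\delta)}$ must be an $\epsilon^2$-fraction of $\lambda_1 = np$. Tracking the suppressed $\log(2n/\delta)$ factor inside the $\Omega(1/\epsilon^4)$ hypothesis is the one place where care is needed.
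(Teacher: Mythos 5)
Your proposal is correct and takes essentially the same route as the paper: it specializes \Cref{thm:gw-main} to the uniform expected-degree sequence $w_i = np$ (so that $v_1(\ba) = \frac{1}{\sqrt{n}}\mathbf{1}$), verifying the two $\gw$ concentration conditions — the condition $w_1 \leq \norm{\w}/6$ via $\sqrt{n} \geq 6$, and the second-order average degree condition via $\tilde{d} = np$ — exactly as the paper does across \Cref{lemma:gnp-eigenvector-bound} and its proof of \Cref{thm:gnp-main}. Your closing caveat, that the $\Omega(1/\epsilon^4)$ hypothesis must absorb the $\log(2n/\delta)/p$ factor so that $np = \Omega(\log(2n/\delta)/\epsilon^4)$, matches (and is stated more carefully than) the reading the paper's own proof implicitly relies on.
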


The constant factor for the lower bound on $n$ we obtain in the proof of \Cref{lemma:gnp-eigenvector-bound} is large, but can likely be optimized, and our empirical results in \Cref{experiments} indicate the uniform intervention is close to optimal on reasonably small $\gnp$ graphs.

This theorem also applies directly to stochastic block model graphs where all blocks are the same size and each has the same ingroup and outgroup probabilities. Here, all agents are equally central in expectation, and it is simple to check that the first eigenvector of $\ba$ will be uniform. This allows a near-optimal intervention for sufficiently large and dense graphs without any knowledge of the edges, connectivity probabilities, group memberships, or degrees.

We can also show that our approach is near-optimal
for many power law graphs, as introduced in \Cref{prelims}. 
Power law graphs are a notable special case of the $G(w)$ model, and are often studied as models of real-world networks. Our results hold for power law graphs where $\sigma \in (2, 2.5)$, a range containing many observed examples \cite{Kleinberg1999TheWA,Barabasi509,10.1145/335305.335326}.

\begin{lemma}\label{lemma:power-law-eig}
For power law graphs with $2 < \sigma < 2.5$, 
if we have that $w_1$ is at least 
$\Omega\left( \left( \frac{\log(2n/\delta)}{\epsilon^4} \right)^{\frac{1}{5 - 2\sigma}} \right)$, 
then with probability at least $1 - \delta$, 
\begin{align*}
    \norm{v_1(A) - \frac{w}{\norm{w}}} \leq&\; \epsilon / 8.
\end{align*}
\end{lemma}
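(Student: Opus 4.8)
The plan is to treat a power law graph as a special case of the $\gw$ model, so that it suffices to verify that the hypotheses of \Cref{lemma:1st-eigenvector-close} (the $(\epsilon,\delta)$-concentration conditions for $\gw$ graphs) follow from $\sigma\in(2,2.5)$ together with the stated lower bound on $w_1$; the conclusion $\norm{v_1(A)-w/\norm{w}}\leq\epsilon/8$ is then immediate. Writing $\gamma=\frac{1}{\sigma-1}$, so that $w_i=c(i+i_0)^{-\gamma}$, the whole argument turns on estimating the two sums $\sum_i w_i$ and $\sum_i w_i^2$ for a power law sequence. The range $2<\sigma<2.5$ is exactly the one in which $\gamma\in(2/3,1)$: since $\gamma<1$, the sum $\sum_i w_i=\Theta(c\,n^{1-\gamma})$ is governed by the tail of the sequence, while since $2\gamma>1$, the sum $\sum_i w_i^2=\Theta(c^2 i_0^{1-2\gamma})$ converges and is dominated by the highest-degree vertices near index $i_0$. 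This is the spectral phase transition of Chung, Lu, and Vu \cite{Chung2003}: below $\sigma=2.5$ the leading eigenvalue is controlled by the second-order average degree $\tilde{d}=\frac{\sum_i w_i^2}{\sum_i w_i}$ rather than by $\sqrt{w_1}$, which is what lets the top eigenvector align with the degree vector.

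First I would record the exact outputs of \Cref{lemma:eigen-proportionality}, namely $v_1(\ba)=w/\norm{w}$ and $\lambda_1(\ba)=\tilde{d}$. Next I would eliminate the offset $i_0$ in favor of $w_1$ using the canonical normalization of the sequence (bounded average, equivalently bounded minimum, degree), under which $i_0=\Theta\big(n\,w_1^{-(\sigma-1)}\big)$. Substituting into the sum estimates yields the key relation $\tilde{d}=\Theta\big(w_1\,(i_0/n)^{1-\gamma}\big)=\Theta\big(w_1^{\,3-\sigma}\big)$, where the exponent collapses because $(\sigma-1)(1-\gamma)=\sigma-2$. The same estimates give $w_1/\norm{w}=\Theta(i_0^{-1/2})$, which is small for large $n$; this simultaneously verifies the upper half of condition 1 ($w_1\leq\norm{w}/6$) and shows that removing self-loops is harmless, since the diagonal correction $D$ satisfies $\norm{D}/\tilde{d}=(w_1/\norm{w})^2=\Theta(1/i_0)\to 0$ and is therefore absorbed by the ``$+1$'' slack already present in condition 2.

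The heart of the proof is then the single inequality in condition 2, $\tilde{d}\geq\frac{256\,(\sqrt{4w_1\log(2n/\delta)}+1)}{\epsilon^2}$. Feeding in $\tilde{d}=\Theta(w_1^{3-\sigma})$ and treating the $\sqrt{4w_1\log(2n/\delta)}$ term as the binding one, this reduces to $w_1^{\,3-\sigma-1/2}=w_1^{(5-2\sigma)/2}\gtrsim\sqrt{\log(2n/\delta)}/\epsilon^2$, i.e.\ $w_1\geq\Omega\big((\log(2n/\delta)/\epsilon^4)^{1/(5-2\sigma)}\big)$, which is exactly the stated hypothesis. The lower half of condition 1, $w_1\geq\frac{4}{9}\log(2n/\delta)$, is weaker than this and holds automatically. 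With both conditions verified, \Cref{lemma:1st-eigenvector-close} delivers $\norm{v_1(A)-w/\norm{w}}\leq\epsilon/8$ with probability $1-\delta$.

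I expect the main obstacle to be the sum asymptotics together with the elimination of $i_0$: one must track the constants in $\sum_i w_i$ and $\sum_i w_i^2$ carefully (including whether the dominant contribution sits at the head or the tail of the sequence) and pin down the normalization relating $c$, $i_0$, $w_1$, and $n$ so that $\tilde{d}=\Theta(w_1^{3-\sigma})$ holds with the correct implied constants. The exponent $5-2\sigma$ is positive precisely for $\sigma<2.5$, which is why the argument --- and the alignment of the top eigenvector with the degree vector --- fails at the Chung-Lu-Vu threshold: once $\sigma\geq 2.5$, the quantity $\sqrt{w_1}$ overtakes $\tilde{d}$ and condition 2 can no longer be satisfied.
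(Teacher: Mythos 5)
Your proposal is correct and takes essentially the same approach as the paper: verify the $\gw$ $(\epsilon,\delta)$-concentration conditions using $\tilde{d} = \lambda_1(\ba) = \Theta\left(w_1^{3-\sigma}\right)$ for $2 < \sigma < 2.5$, reduce the second condition to $w_1^{(5-2\sigma)/2} = \Omega\left(\sqrt{\log(2n/\delta)}/\epsilon^2\right)$, and conclude via \Cref{lemma:1st-eigenvector-close}. The only difference is one of detail: the paper obtains the relation $\lambda_1(\ba) = \Theta\left(w_1^{3-\sigma}\right)$ by citing \cite{Chung2003}, whereas you re-derive it from the power-law sum asymptotics together with the normalization $i_0 = \Theta\left(n\, w_1^{-(\sigma-1)}\right)$, which makes the argument more self-contained (and also handles the $w_1 \leq \norm{w}/6$ condition that the paper defers to the proof of \Cref{thm:power-law-main}).
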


\begin{theorem}\label{thm:power-law-main}
For power law graphs with $2 < \sigma < 2.5$, 
if we have that $w_1$ is at least 
$\Omega\left( \left( \frac{\log(2n/\delta)}{\epsilon^4} \right)^{\frac{1}{5 - 2\sigma}} \right)$, 
if $\beta$ is small enough to ensure that $\beta \lambda_1(A) < 1$,
then with probaility at least $1 - \delta$, the intervention $y = \sqrt{C} \cdot \frac{\w}{\norm{\w}_1}$ achieves utility within a $1 - \epsilon$ factor of the optimal intervention for budgets $C$ at least $\frac{256 \norm{b}^2}{\left(\epsilon \beta \lambda_1(\ba) \right)^2}$, where $\lambda_1(\ba) = \Theta(w_1^{3 - \sigma})$.
\end{theorem}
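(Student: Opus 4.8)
The plan is to obtain this theorem as a specialization of \Cref{thm:gw-main} to power law degree sequences, where essentially all of the real work is the asymptotic evaluation of the relevant degree sums. Recall $w_i = c(i+i_0)^{-1/(\sigma-1)}$; write $s = 1/(\sigma-1) \in (2/3, 1)$ for the exponent. First I would estimate the two sums that drive the $\gw$ concentration conditions. Since $2s = 2/(\sigma-1) > 1$ throughout $\sigma \in (2,2.5)$, the sum $\sum_i w_i^2 = c^2\sum_i(i+i_0)^{-2s}$ converges, giving $\norm{\w}_2 = \Theta(c)$; since $s < 1$, the sum $\sum_i w_i = c\sum_i(i+i_0)^{-s}$ diverges like $c\cdot n^{1-s}$. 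Under the standard parameterization in which the minimum expected degree is $\Theta(1)$, i.e. $c = \Theta(w_1) = \Theta(n^{1/(\sigma-1)})$, dividing these yields $\tilde{d} = \frac{\sum_i w_i^2}{\sum_i w_i} = \Theta(w_1/n^{(\sigma-2)/(\sigma-1)}) = \Theta(w_1^{3-\sigma})$, which by \Cref{lemma:eigen-proportionality} equals $\lambda_1(\ba)$ and gives the claimed value.

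Next I would verify the two conditions of $(\epsilon,\delta)$-concentration for $\gw$ graphs. The upper bound $w_1 \leq \norm{\w}/6$ follows from the convergence of $\sum_i w_i^2$: an integral estimate gives $w_1/\norm{\w}_2 = \Theta(i_0^{-1/2})$, so taking $i_0$ a sufficiently large constant forces the ratio below $1/6$ (this is the regime in which the sequence has enough high-degree vertices for the $\ell_2$ norm to dominate the maximum). The crux is the second condition, which requires $\tilde{d}$ to be at least $\Omega(\sqrt{w_1\log(2n/\delta)}/\epsilon^2)$. Substituting $\tilde{d} = \Theta(w_1^{3-\sigma})$ and solving for $w_1$ gives $w_1^{(5-2\sigma)/2} = \Omega(\sqrt{\log(2n/\delta)}/\epsilon^2)$, that is $w_1 = \Omega((\log(2n/\delta)/\epsilon^4)^{1/(5-2\sigma)})$, which is exactly the hypothesis of the theorem. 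This is also where $\sigma < 2.5$ is essential: the exponent $5 - 2\sigma$ is positive precisely when $\sigma < 5/2$, so only then does $\tilde{d}$ grow fast enough relative to $\sqrt{w_1}$ for concentration to hold at all. The lower bound $w_1 \geq \frac{4}{9}\log(2n/\delta)$ in the first condition is then implied, since $1/(5-2\sigma) > 1$ and $\epsilon < 1$.

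With concentration in hand the remaining steps mirror the proof of \Cref{thm:gw-main}. I would invoke \Cref{lemma:power-law-eig} directly to get $\norm{v_1(A) - \w/\norm{\w}} \leq \epsilon/8$ with probability $1-\delta$. I would then apply \Cref{lemma:2nd-eigenvalue-bound} together with $w_1 \leq \norm{\w}/6$ to bound $\lambda_2(\ba - D) \leq \frac{2w_1\lambda_1}{\norm{\w}} + 1 \leq \frac{\lambda_1}{3} + 1$, and use \Cref{lemma:2nd-eigenvalue-bound-2} to transfer this to the realized matrix $A$, certifying $\lambda_1(A) > 2\lambda_2(A)$ once $w_1$ is large. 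By Proposition 2 of \cite{Galeotti2017TargetingII}, this spectral gap makes $v_1(A)$ close to $y^*$ in cosine similarity whenever $C \geq \frac{256\norm{b}^2}{(\epsilon\beta\lambda_1(\ba))^2}$; chaining this with the eigenvector bound above and feeding the resulting cosine similarity of $\w$ with $y^*$ into \Cref{cosineutil} yields competitive ratio $1-\epsilon$. The hypothesis $\beta\lambda_1(A) < 1$ guarantees the spectral radius of $\beta A$ is below $1$, so $M$ is invertible and equilibrium welfare is finite.

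I expect the main obstacle to be the asymptotic bookkeeping of the power law sums, and in particular pinning down constants so that both concentration conditions hold simultaneously. The delicate point is that $\norm{\w}_2 = \Theta(w_1)$ makes the spectral gap a question of constants rather than asymptotics, so the condition $w_1 \leq \norm{\w}/6$ genuinely cannot be dropped; confirming it holds in the intended power law regime (and not only for artificially flattened sequences) is where care is required. By contrast, the reduction to \Cref{thm:gw-main} and the final cosine-similarity argument are essentially routine once the degree sums are computed.
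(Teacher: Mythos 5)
Your proposal is correct and follows essentially the same route as the paper: verify the $\gw$ $(\epsilon,\delta)$-concentration conditions for power law sequences, invoke \Cref{lemma:power-law-eig} (resting on \Cref{lemma:1st-eigenvector-close}), certify a constant-factor spectral gap via $w_1 \leq \norm{\w}/6$ together with \Cref{lemma:2nd-eigenvalue-bound,lemma:2nd-eigenvalue-bound-2}, and chain Proposition 2 of \cite{Galeotti2017TargetingII} with \Cref{cosineutil}. The only differences are in rigor rather than strategy: you derive $\lambda_1(\ba) = \tilde{d} = \Theta(w_1^{3-\sigma})$ directly from the degree-sum asymptotics under the minimum-degree-$\Theta(1)$ parameterization (where the paper cites \cite{Chung2003}), and you correctly flag that $w_1 \leq \norm{\w}/6$ is a constant-factor condition on $i_0$ rather than an asymptotic one in $n$ --- a point the paper's proof glosses over with ``for power law graphs larger than some constant.''
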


\section{Additional Experiments}
\label{sec:gnp_exp}

\subsection{Random Networks}

We evaluate our results in simulated $\gnp$ and power law
graphs, analyzing the competitive ratio; we compare a baseline (no intervention), 
expected degree interventions, the first eigenvector intervention (computed from the realized graph), and the optimal intervention for each graph (computed via quadratic programming).

For all experiments, we fix $n=100$ and $b_i = 1$ for each agent.
For both graph families, we experiment by independently varying the budget size $C$, a distribution parameter ($p$ or $\sigma$), and the spectral radius of $\beta A$. We plot the competitive ratio of each heuristic intervention with the optimal intervention as parameters are varied, and each parameter specification is averaged over 10 graph samples.
We generate power law graphs according to the $\gw$ model with a power law sequence, where the maximum expected degree is fixed at 25 and the minimum is fixed at 1 for all exponent values.

\begin{figure}[ht]
    \begin{center}
     \includegraphics[width=5cm]{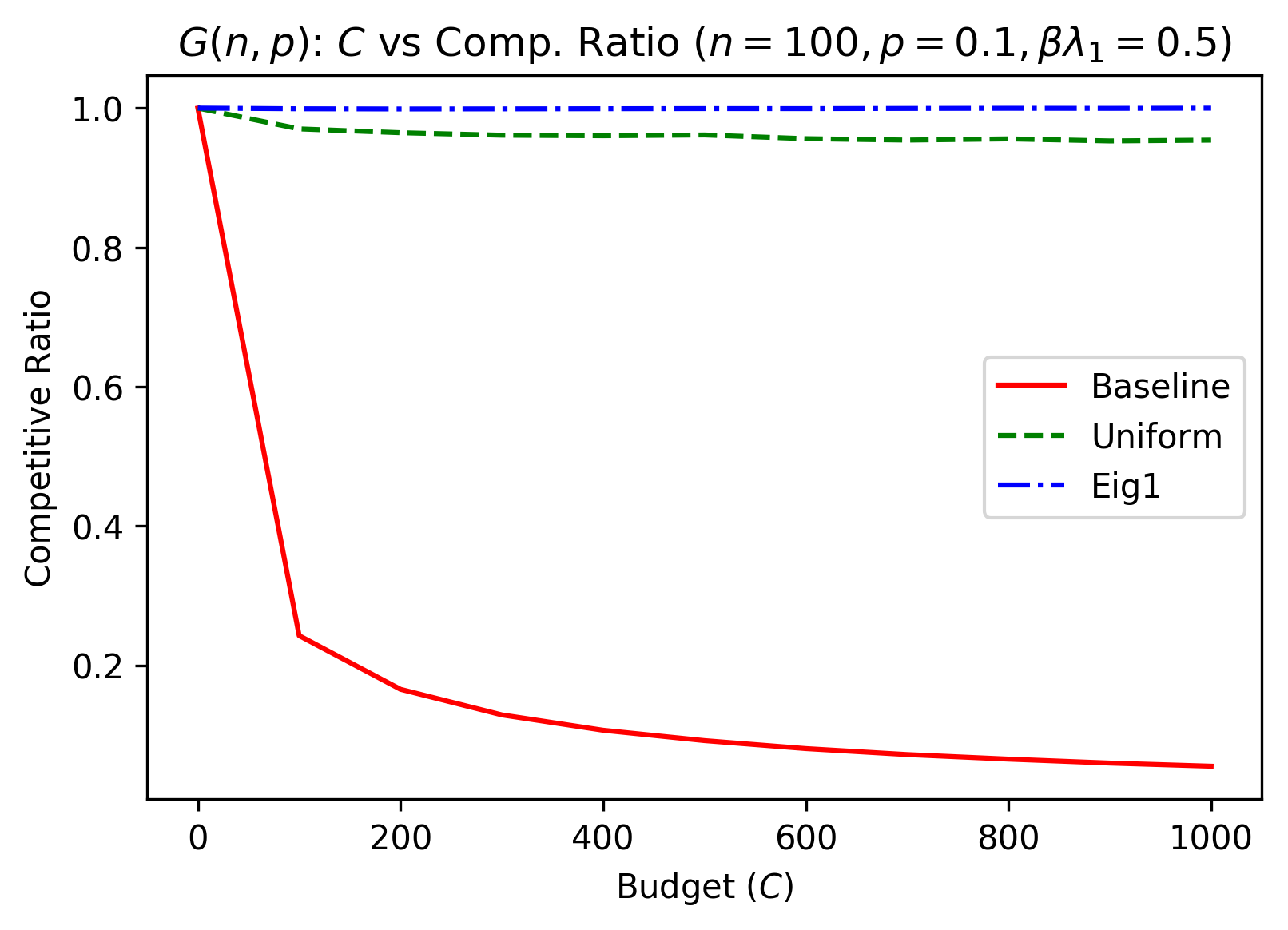}
    \includegraphics[width=5cm]{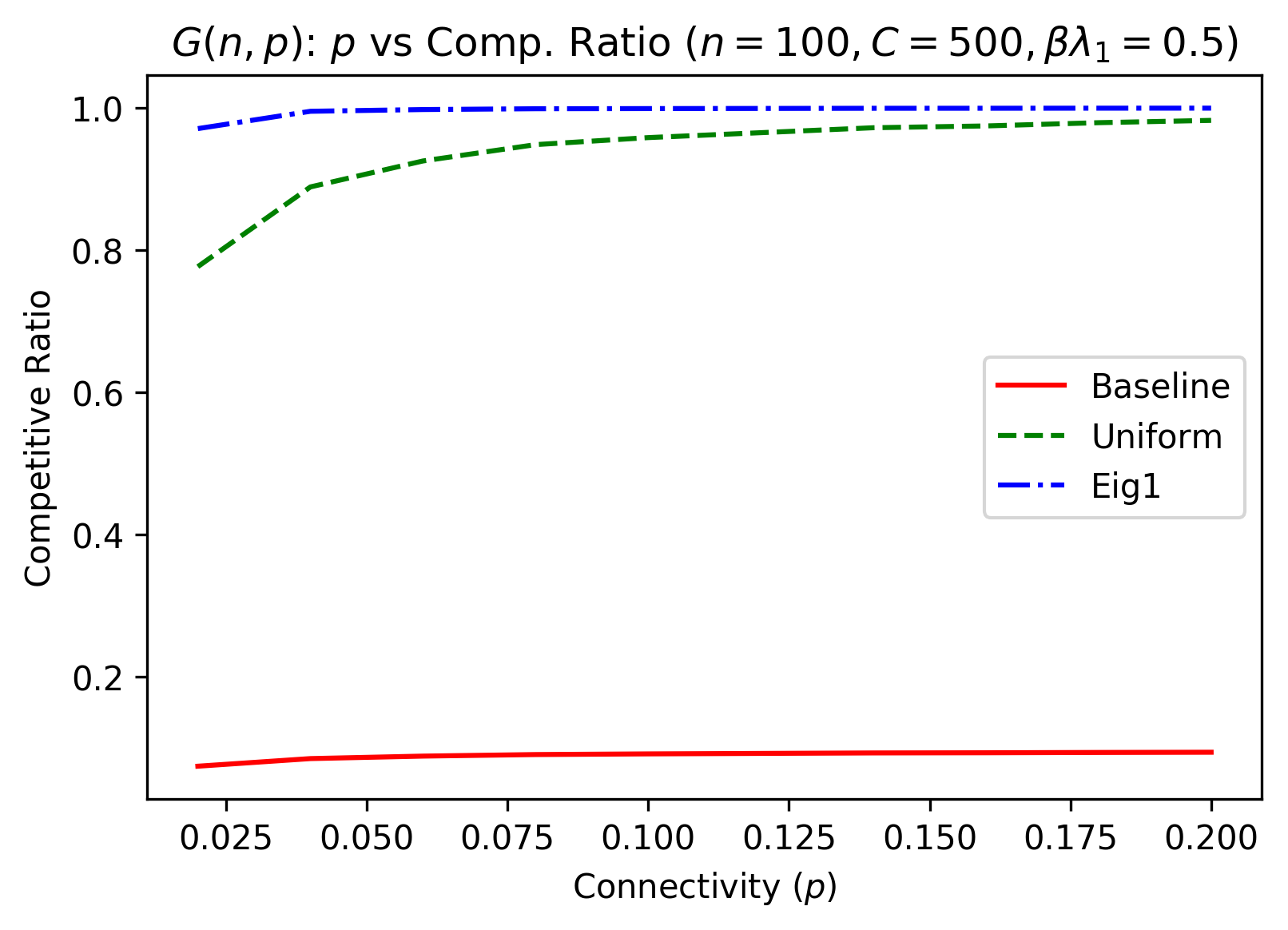}
    \includegraphics[width=5cm]{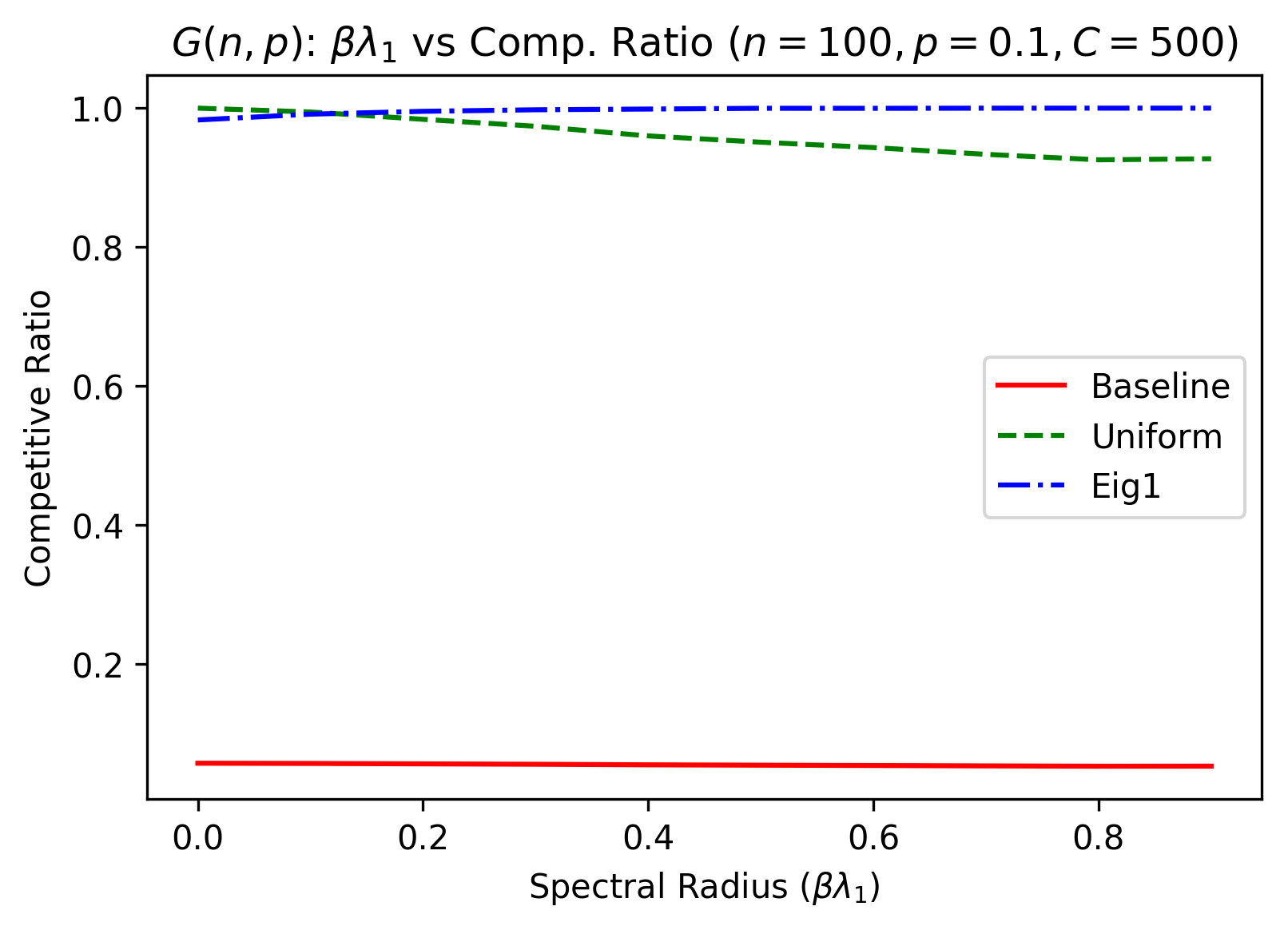}      
    \end{center}
    \caption{Intervention in $\gnp$ Graphs}
    \label{gnp-plots}
    \centering
\end{figure}

In $\gnp$ graphs (see \Cref{gnp-plots}), we see that the first eigenvector intervention is close to optimal in almost all cases, and that the uniform intervention is quite competitive as well, particularly when the graph is increasingly dense. When the spectral radius is small, the uniform outperforms the first eigenvector intervention as expected, it is optimal when $\beta$ is 0. The small baseline values at most points indicate that the change to social welfare from our interventions is indeed quite drastic, even when $C = \norm{b}^2$.

\begin{figure}[ht]
    \begin{center}
    \includegraphics[width=5cm]{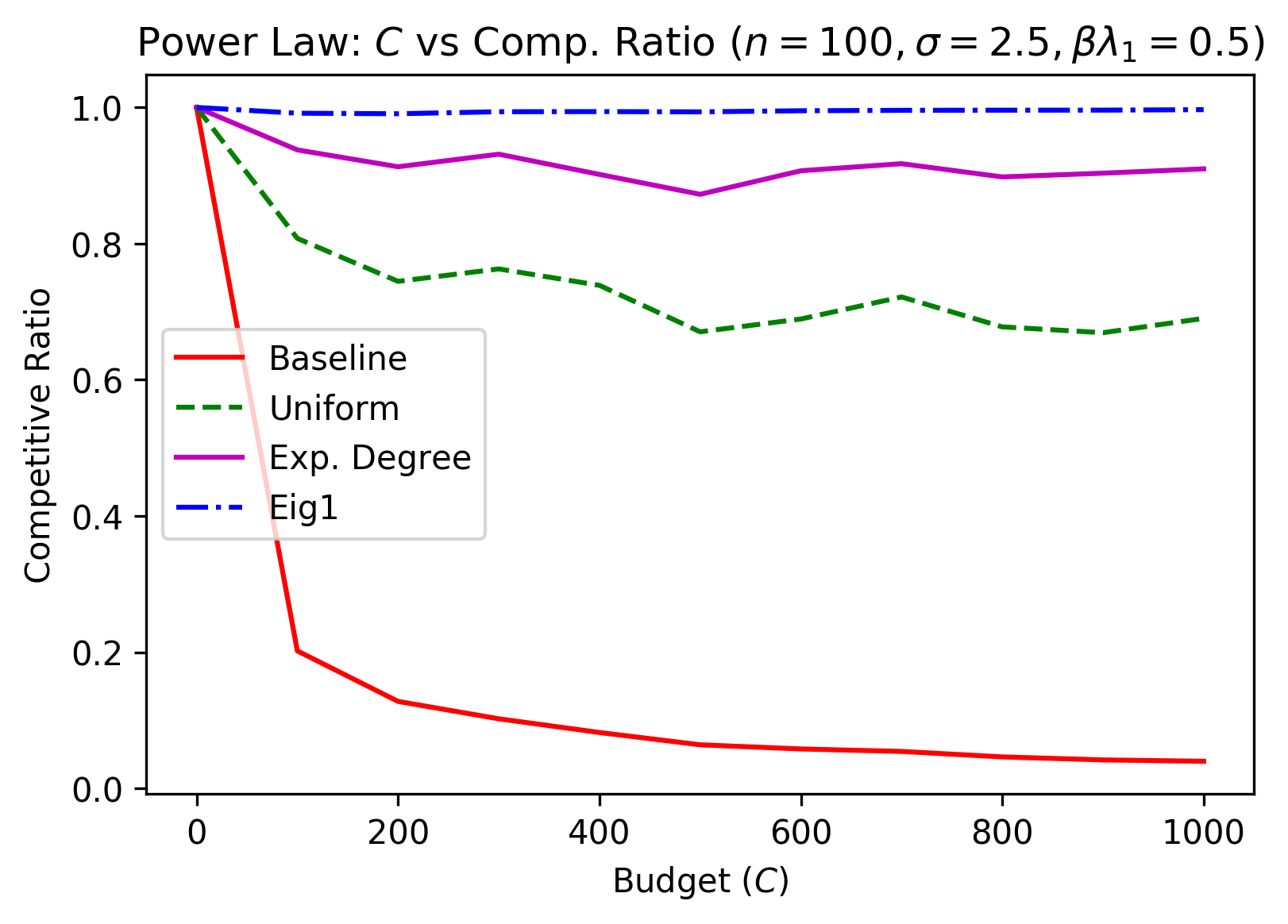}
    \includegraphics[width=5cm]{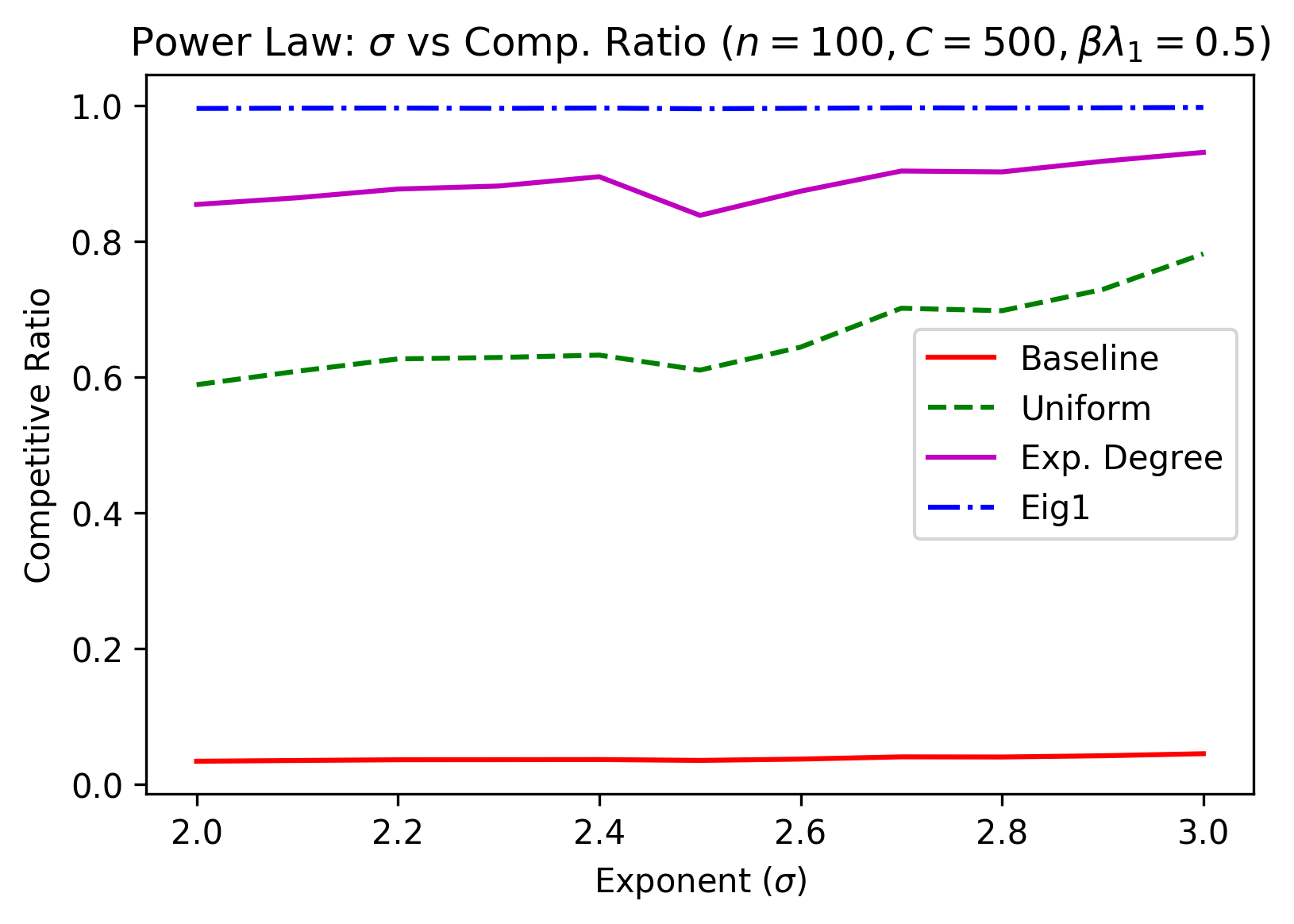}
    \includegraphics[width=5cm]{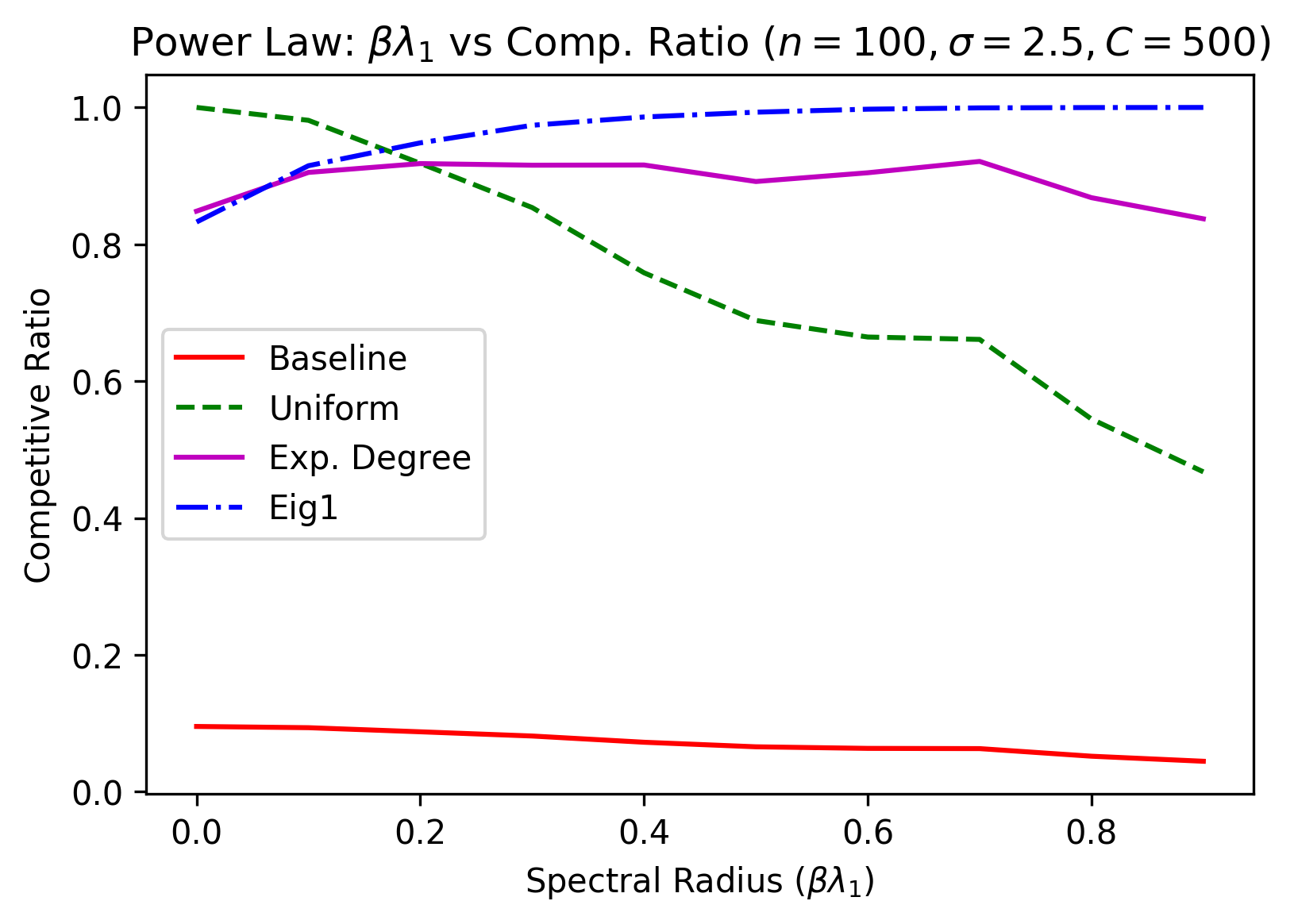}
    \end{center}
    \caption{Intervention in Power Law Graphs}
    \label{gw-plots}
    \centering
\end{figure}

In power law graphs (see \Cref{gw-plots}), the uniform intervention does not perform nearly as well unless the spectral radius is small, where it outperforms other approaches. The expected degree intervention does considerably better in general. When $\sigma > 2.5$, where our theoretical results do not hold, we see that heuristics still perform well. 
We expect that this is an artifact of the fixed expected degree bounds and the small graph size. In large power law graphs, larger exponents correspond to the graph having a smaller ``core'' of dense connectivity, which can be quite important for influencing the rest of the network. Our experiments suggest that smaller cores are less important in graphs of this size, which more closely resemble sparse $\gnp$ graphs with a few well-connected vertices.  

\section{Omitted Proofs}
\label{omittedproofs}

First, we prove a result about general matrices that we will use later
in the proofs.

\begin{lemma}
  For any two $n\times n$ symmetric matrices $A$ and $B$, let $\lambda_1$ and $\lambda_2$ be the largest (according to magnitude) eigenvalues of $A$, and let $\mu_1$ be the largest (according to magnitude) eigenvalue of $B$ (Note: By definition, $||B||=\mu_1$ and $||A||=\lambda_1$).
  Let $\kappa$ be the inverted spectral gap, i.e.\  $|\lambda_2| \leq \kappa|\lambda_1|$. Then,
 
  \[
      \norm{v(A)-v(B)}^2 \leq 2 \left( 1-\frac{1}{\lambda_1}\sqrt{\frac{\left( {\mu_1-\eta} \right)^2-\lambda_1^2\kappa^2}{1-\kappa^2}}  \right)
  \]
  Where $v(\cdot)$ is the unit eigenvector with the largest absolute eigenvalue of the matrix $\cdot$, and $\eta=\norm{A-B}$.
  \label{thm:concentration}
\end{lemma}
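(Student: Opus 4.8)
The plan is to run the same orthogonal-decomposition argument sketched in the text for \Cref{lemma:simple-eigenvector-bound}, but now retaining the second eigenvalue through $\kappa$ rather than setting it to zero. Write $u_1 = v(A)$ and $v = v(B)$ for the two unit top eigenvectors, and after possibly flipping the sign of $v$ assume $\langle u_1, v\rangle \geq 0$. Decompose $v = \phi_1 u_1 + \phi_2 w$, where $w$ is a unit vector orthogonal to $u_1$ and $\phi_1^2 + \phi_2^2 = 1$ with $\phi_1 \geq 0$. The entire argument reduces to lower bounding $\phi_1 = \langle u_1, v\rangle$, since $\|v(A) - v(B)\|^2 = \|u_1 - v\|^2 = 2(1 - \phi_1)$ by the law of cosines, so any lower bound on $\phi_1$ immediately yields the claimed inequality.

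First I would upper bound $\|Av\|^2$. Because $A$ is symmetric with $u_1$ its eigenvector for $\lambda_1$, it preserves $u_1^{\perp}$; hence $Aw \perp u_1$, the cross term vanishes, and $\|Av\|^2 = \phi_1^2\lambda_1^2 + \phi_2^2\|Aw\|^2$. Since $w \in u_1^{\perp}$, on which $A$ has spectral radius at most $|\lambda_2| \leq \kappa\lambda_1$, we get $\|Aw\| \leq \kappa\lambda_1$ and therefore $\|Av\|^2 \leq \lambda_1^2(\phi_1^2 + \kappa^2\phi_2^2)$.

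Next I would lower bound the same quantity using the closeness of $A$ and $B$: since $v$ is a unit top eigenvector of $B$, $\|Bv\| = \mu_1$, so by the triangle inequality $\|Av\| \geq \|Bv\| - \|(A-B)v\| \geq \mu_1 - \eta$. Combining the two bounds and substituting $\phi_2^2 = 1 - \phi_1^2$ gives $(\mu_1 - \eta)^2 \leq \lambda_1^2\bigl(\phi_1^2(1-\kappa^2) + \kappa^2\bigr)$, which rearranges to
\[
  \phi_1^2 \;\geq\; \frac{(\mu_1 - \eta)^2 - \lambda_1^2\kappa^2}{\lambda_1^2(1 - \kappa^2)}.
\]
Taking square roots and plugging into $2(1 - \phi_1)$ yields exactly the stated bound.

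The main obstacle is not any single hard estimate but the bookkeeping needed to keep each inequality valid: I must fix the sign of $v$ so that $2(1-\phi_1)$ is the right quantity, use that $\lambda_1 = \|A\|$ and $\mu_1 = \|B\|$ are the (nonnegative) spectral norms, and invoke the argument only in the regime $\mu_1 - \eta \geq \kappa\lambda_1 \geq 0$, where the radicand is nonnegative and $\phi_1 \geq 0$; outside that regime the statement is vacuous. The only structural fact used is that $u_1$ is the unique largest-magnitude eigenvector of $A$, so that $A$ restricted to $u_1^{\perp}$ has norm at most $\kappa\lambda_1$; specializing $\kappa = 0$ recovers \Cref{lemma:simple-eigenvector-bound}.
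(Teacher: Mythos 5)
Your proposal is correct and follows essentially the same argument as the paper's proof: the same decomposition $v(B) = \phi_1 v(A) + \phi_2 w$, the same triangle-inequality lower bound $\|Av(B)\| \geq \mu_1 - \eta$, the same spectral upper bound $\|Av(B)\| \leq \lambda_1\sqrt{\phi_1^2 + \kappa^2\phi_2^2}$, and the same conversion $\|v(A)-v(B)\|^2 = 2(1-\phi_1)$. Your added bookkeeping (fixing the sign of $v(B)$ so $\phi_1 \geq 0$ and restricting to the regime $\mu_1 - \eta \geq \kappa\lambda_1$) is in fact slightly more careful than the paper, which leaves these conventions implicit.
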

\begin{proof}
  First, we observe that,
  \begin{equation}
   \norm{Av(B)} \ge \norm{Bv(B)} - \norm{B-A} \cdot \norm{v(B)} = \mu_1-\eta, \label{eqn:image-lower-bdd}
  \end{equation}
  by the triangle inequality. Also, we can write, $v(B) = \phi_1v(A) + \phi_2w$ where $w \perp v(A)$ is a unit vector, and $\phi_1^2+\phi_2^2 = 1$. Since $w\perp v(A)$, its image under $A$ can at most have a magnitude of $\lambda_2$. That is, 
  \begin{equation*}
    ||Aw||\le \lambda_2
  \end{equation*}
  From our assumption, we can write 
  \begin{align*}
      \norm{Av(B)} &\leq \sqrt{\lambda_1^2\phi_1^2 + \lambda_2^2\phi_2^2} \\
    &\leq \sqrt{\lambda_1^2\phi_1^2 + \kappa^2\lambda_1^2\phi_2^2}\\
    &= \lambda_1\sqrt{\phi_1^2 + \kappa^2\phi_2^2}
  \end{align*}
  Combining with \eqref{eqn:image-lower-bdd}, we get 
  \begin{align*}
    \frac{\mu_1-\eta}{\lambda_1} &\le \sqrt{\phi_1^2 + \kappa^2\phi_2^2} \\
    \implies \left( \frac{\mu_1-\eta}{\lambda_1} \right)^2 &\le \phi_1^2 + \kappa^2(1-\phi_1^2) \\
    \implies \phi_1 &\geq \frac{1}{\lambda_1}\sqrt{\frac{\left( {\mu_1-\eta} \right)^2-\lambda_1^2\kappa^2}{1-\kappa^2}}
  \end{align*}
  Using the fact that $\norm{v(A)-v(B)} = \norm{(1-\phi_1)v(A)+\phi_2w} = \sqrt{(1-\phi_1)^2 + \phi_2^2}=\sqrt{2(1-\phi_1)} $, we get 
  \begin{equation*}
      \norm{v(A)-v(B)}^2 \leq  2 \left( 1-\frac{1}{\lambda_1}\sqrt{\frac{\left( {\mu_1-\eta} \right)^2-\lambda_1^2\kappa^2}{1-\kappa^2}}  \right).
  \end{equation*}
\end{proof}

\begin{proof}
    [Proof of \Cref{thm:sbm-main}]
From \Cref{thm:concentration}, for matrices $\ba$ and $A$, we have that
\begin{align*}
    \norm{v_1(\ba) - v_1(A)}^2 \leq&\; 2 \left( 1 - \frac{1}{\lambda_1}\sqrt{\frac{(\mu_1 - \eta)^2 - \lambda_1^2 \kappa^2}{1-\kappa^2}} \right) 
\end{align*}
where $\lambda_1$ is the first eigenvalue of $\ba$, $\abs{\lambda_i(\ba)} \leq \kappa \lambda_1$ for all $i > 1$, $\mu_1$ is the first eigenvalue of $A$, and $\eta = \norm{\ba - A}$. By the triangle inequality $\mu_1 - \eta \geq \lambda_1 - 2\eta$, and so
\begin{align*}
  \norm{v_1(\ba) - v_1(A)}^2 \leq&\; 2 \left( 1 - \frac{1}{\lambda_1}\sqrt{\frac{(\lambda_1 - 2\eta)^2 - \lambda_1^2 \kappa^2}{1-\kappa^2}} \right) \\  
  \leq&\; 2 \left( 1 - \frac{1}{\lambda_1}\sqrt{\frac{\lambda_1^2(1 - \kappa^2) - 4\eta\lambda_1}{1-\kappa^2}} \right) \\  
  \leq&\; 2 \left( 1 - \sqrt{1 - \frac{ 4\eta }{(1-\kappa^2) \lambda_1}} \right) \\  
  \leq&\; \frac{8 \eta}{(1 - \kappa^2)\lambda_1} & \left(\sqrt{1-x} > 1 - x, x \in [0,1]\right).
\end{align*}
Since $d_{\max}\geq \frac 49\log(2n/\delta)$, we have, from the proof of Theorem 1
in \cite{chungradcliffe} that
 $\eta \leq 2 \sqrt{ d_{\max} \log(2n/\delta)}$. If $\lambda_1 \geq \frac{1024 \sqrt{d_{\max} \log(2n/\delta)}}{(1 -\kappa^2)\epsilon^2}$, we have that
\begin{align*}
    \norm{v_1(\ba) - v_1(A)}^2 \leq&\; \frac{16  \sqrt{ d_{\max} \log(2n/\delta)}}{(1 - \kappa^2)\lambda_1} \\
    \leq&\; \epsilon^2 / 64, 
\end{align*}
and so $\norm{v_1(\ba) - v_1(A)} \leq \epsilon/8$. For a sufficiently large budget, by Proposition 2 in \cite{Galeotti2017TargetingII}, 
$\norm{\frac{y^*}{\norm{y^*}} - v_1(A)} \leq \epsilon/8$, where $y^*$ is the optimal intervention.  
By the triangle inequality this gives us that $\norm{\frac{y^*}{\norm{y^*}} - v_1(\ba)} \leq \epsilon/4$.
Applying \Cref{cosineutil} and law of cosines gives a competitive ratio of $1 - \epsilon$.
\end{proof}

\begin{proof}[Proof of \Cref{cosineutil}]
  First we give a general expression of the change in welfare observed at equilibrium after some intervention $y$.   
\begin{align*}
    \Delta W\left(y ; b, A\right) =&\; W\left(y; b, A\right) - W\left(\mathbf{0}; b, A\right) \\
     =&\; \frac{1}{2} \left( \left(M^{-1} y\right)^{\top}\left(M^{-1} y\right) + 2 \left(M^{-1} b\right)^{\top}\left(M^{-1} y\right)\right) \\
     =&\; \frac{1}{2} \left( \norm{M^{-1} y}^2 + 2 \left(M^{-1} b\right)^{\top}\left(M^{-1} y\right)\right) 
\end{align*}
By the law of cosines, there is some vector $y'$ with norm $\sqrt{2C
  \left(1 - \gamma\right)}$ such that $y^* - y = y'$. 
Let $\lambda_M = \lambda_1(M^{-1})$.
We can then give an additive bound on the utility loss $\Delta U =  W\left(y^*; b, A\right) - W\left(y; b, A\right)$:
\begin{align*}
    \Delta U =&\; \Delta W\left(y^* ; b, A\right) - \Delta W\left(y; b, A\right) \\
    =&\; \frac{1}{2} \left( \norm{M^{-1} y^*}^2 + 2 \left(M^{-1} b\right)^{\top}\left(M^{-1} y^*\right)\right) \\
     &\; \quad - \frac{1}{2} \left( \norm{M^{-1} y}^2 + 2 \left(M^{-1} b\right)^{\top}\left(M^{-1} y\right)\right) \\
     =&\; \frac{1}{2}\left(\norm{M^{-1} y^*}^2 - \norm{M^{-1} \left(y^* - y'\right)}^2\right)  & (\text{substituting } y=y^*-y')\\
     &\; \quad + \left(M^{-1}b\right)^{\top}  M^{-1}\left(y^* - y\right) \\
    \leq&\; \frac{1}{2}\left(\norm{M^{-1} y^*}^2 - \left(\norm{M^{-1} y^*} -  \norm{ M^{-1}y'}\right) ^2\right)\\ 
        &\; \quad + \left(M^{-1}b\right)^{\top}  M^{-1}\left(y^* - y\right) & \left(\text{triangle inequality}\right) \\
    \leq&\; \norm{M^{-1} y^*}\cdot \norm{ M^{-1}y'} + \left(M^{-1}b\right)^{\top}  M^{-1}\left(y^* - y\right) \\
    \leq&\; \lambda_M^2 C \cdot \sqrt{2 \left(1 - \gamma\right) } + \norm{ M^{-1}b} \cdot \norm{ M^{-1}\left(y^* - y\right)}& \text{(Cauchy-Schwarz)} \\
    \leq&\; \lambda_M^2 C \cdot \sqrt{2 \left(1 - \gamma\right) } + \lambda_M \norm{b} \cdot \norm{M^{-1} \left(y^* - y\right)} & \text{(spectral rad.\ def.)}\\
    =&\; \lambda_M^2 C \cdot  \sqrt{2 \left( 1 - \gamma \right) } + \lambda_M^2 \norm{b} \cdot \sqrt{2 C \left(1 - \gamma\right)} & \text{(law of cosines)}
\end{align*}
From our lower bound on $C$
we then have:
\begin{align*}
    W(y^*; b, A) - W(y; b, A)  
    \leq&\; 2 \lambda_M^2 C \sqrt{2(1 - \gamma)} &   \\
\end{align*}
Further, we have that
\begin{align*}
    W(y^*; b, A) \geq&\; \frac{1}{2}\lambda_M^2 C,
\end{align*}
as this is obtainable even if $b=\mathbf{0}$ by letting $y^*$ be the first eigenvector of $M^{-1}$ associated with the spectral radius $\lambda_M$. Recall that social welfare at equilibrium is increasing in $b$. 
We can then give a multiplicative bound on utility loss:
\begin{align*}
    \frac{W(y^*; b, A) - W(y; b, A)}{W(y^*; b, A)}  \leq&\; \frac{2 \lambda_M^2C\sqrt{2(1 - \gamma)} }{\frac{1}{2}\lambda_M^2C} \\
    =&\;4\sqrt{2(1 - \gamma)}
\end{align*}
and so
\begin{align*}
    \frac{ W(y; b, A)}{ W(y^*; b, A)} \geq&\; 1 - 4\sqrt{2( 1-\gamma)}.  
\end{align*}
\end{proof}

\begin{proof}[Proof of \Cref{lemma:eigen-proportionality}]
Let $v$ be an eigenvector of $\ba$, and let $\lambda$ be the corresponding eigenvalue. 
It suffices to show that 
for any $v$ and any $i, k$, 
if $\lambda \neq 0$, then:
\begin{equation}
  \frac{v_i}{v_k} = \frac{w_i}{w_k}.
\label{eqn:power-law-eigenvector-proportionality}
\end{equation}
In $\gw$ graphs, an edge between
vertices $i$ and $j$ is constructed with a probability
proportional to $w_iw_j$. As such, for any $i,j,k$ it holds that:
\begin{equation}
  \frac{\ba_{ij}}{\ba_{kj}} = \frac{w_i}{w_k}. 
  \label{eqn:expected-power-law}
\end{equation}
By the eigenvalue equation, for any $i$,
\begin{align*}
    \sum_{j=1}^n{\ba_{ij}v_j} =&\; \lambda v_i,
\end{align*}
and so for any $i, k$, by \eqref{eqn:expected-power-law},
\begin{align*}
    \lambda \cdot \frac{w_k}{w_i} v_i =&\; \frac{w_k}{w_i}\sum_{j=1}^{n}\overline{A}_{ij}v_j\\
    =&\; \sum_{j=1}^{n}\overline{A}_{kj}v_j \\
    =&\;\lambda v_k 
  \end{align*}
which implies 
\eqref{eqn:power-law-eigenvector-proportionality} when $\lambda$ is non-zero.
\end{proof}

\begin{proof}[Proof of \Cref{lemma:simple-eigenvector-bound}]
 This follows from \Cref{thm:concentration} with $\kappa=0$. We also give a simpler direct alternate proof here, which may be of independent interest.
  
 First, we observe that
  \begin{equation}
   \norm{A v(B)} \ge \norm{Bv(B)} - \norm{B-A} \cdot \norm{v(B)} = \mu_1-\eta, \label{eqn:image-lower-bdd2}
  \end{equation}
  by the triangle inequality. We can write $v(B) = \phi_1 v(A) + \phi_2 x$, where $x$ is a unit vector orthogonal to $v(A)$, and $\phi_1^2+\phi_2^2 = 1$, for some values $\phi_1$ and $\phi_2$. Since $x \perp v(A)$, its image under $A$ can at most have a magnitude of $\lambda_2 = 0$. 
That is, 
\begin{equation*}
    \norm{Ax} = 0. 
\end{equation*}
From our assumption, we can write 
\begin{align*}
    \norm{Av(B)} =&\; \norm{\phi_1 A v(A) + \phi_2 A x } \\
    =&\; \norm{\phi_1 A v(A)} \\
    =&\; \lambda_1 \phi_1. 
\end{align*}
Combining with \eqref{eqn:image-lower-bdd2}, we get that 
\begin{align*}\frac{\mu_1 - \eta}{\lambda_1} \leq&\; \phi_1.
\end{align*}
We can then conclude that
\begin{align*}
    \norm{v(A)-v(B)} =&\; \norm{(1-\phi_1)v(A)+\phi_2w} \\
    =&\; \sqrt{(1-\phi_1)^2 + \phi_2^2} \\
    =&\; \sqrt{1 - 2\phi_1 + (\phi_1^2 + \phi_2^2)} \\ 
    =&\; \sqrt{2(1-\phi_1)} \\
    \leq&\; \sqrt{2\left(1 - \frac{\mu_1 - \eta}{\lambda_1} \right)}.
\end{align*}
\end{proof}

\begin{proof}[Proof of \Cref{lemma:1st-eigenvector-close}]
Let $D$ be the matrix that has the same diagonal entries as $\ba$ and
$0$ everywhere else. 
    First we observe that      
\begin{align*}
  \norm{\ba  - (\ba - D)} \leq&\; 1, 
\end{align*}
as all entries of $D$ are bounded by 1.
Whether or not we are deleting self-loops, by applying the triangle inequality to the previous observation and \Cref{lemma:matrix-norm-deviation},
\begin{align*}
    \norm{A - \ba} \leq&\; \sqrt{4 w_1 \log(2n/\delta)} + 1,
\end{align*}
which implies again by triangle inequality that
\begin{align*}
    \norm{A} \geq&\; \tilde{d} - \sqrt{4 w_1 \log(2n/\delta)} - 1. 
\end{align*}
By our distributional assumption,
\begin{align*}
    \tilde{d} \geq&\; \frac{256 \left( \sqrt{4w_1 \log(2n/\delta)} + 1\right)  }{\epsilon^2}
\end{align*}
By \Cref{lemma:simple-eigenvector-bound}, it follows that
\begin{align*}
    \norm{v_1(A) - v_1(\ba)} \leq&\; \sqrt{2\left(1 - \frac{\norm{A} - \norm{A - \ba} }{\norm{\ba}} \right)} \\
    =&\;   \sqrt{2\cdot \frac{\norm{\ba} - \norm{A} + \norm{A - \ba} }{\norm{\ba}}}    \\
    \leq&\; 2\sqrt{\frac{\norm{A - \ba}  }{\norm{\ba}}} \\ 
    \leq&\; 2\sqrt{\frac{\sqrt{4 w_1 \log(2n/\delta)} + 1}{\tilde{d}}} \\
    \leq&\; 2\sqrt{\frac{\epsilon^2}{256}}\\
     =&\; \epsilon/8.
\end{align*}
\end{proof}

\begin{proof}[Proof of \Cref{lemma:2nd-eigenvalue-bound}]
  The second-largest absolute eigenvalue can be defined as 
  \begin{align}
    \lambda_2 &= \sup_{\substack{x\perp v_1\left(\ba-D\right)\\ \norm{x}=1}} \norm{\left(\ba-D\right)x} \notag \\
    &\leq \sup_{\substack{x\perp v_1\left(\ba-D\right)\\ \norm{x}=1}}\norm{\ba x} + \sup_{\substack{x\perp v_1\left(\ba-D\right)\\ \norm{x}=1}}\norm{Dx}
    \label{eqn:u2-intermidiate}
  \end{align}
  Any vector $x$ that is perpendicular to $v_1\left( \ba-D \right)$ can be written as 
  \begin{equation}
    x = \phi_1v_1\left( \ba \right) + \phi_2u \notag
  \end{equation}
  Where $u$ is perpendicular to $v_1\left( \ba \right)$, and $\phi_1^2+\phi_2^2=1$. Let $d:=v_1\left( \ba-D \right)-v_1\left( \ba \right)$. Since $\ba$ has only one non-zero eigenvalue, we can apply \Cref{lemma:simple-eigenvector-bound} to get 
  \begin{equation}
    \norm d \leq \sqrt{2\left( 1-\frac{\mu_1-\eta}{\lambda_1} \right)} 
    \label{eqn:bound-on-d}
  \end{equation}
  Where $\mu_1=\norm{\ba-D},\;\lambda_1=\norm{\ba},\;\eta=\norm D$. We have, from the triangle inequality,
\begin{equation}
  \quad\quad\norm{\ba }-\norm{D} \quad\leq\quad \norm{\ba -D}  \notag
\end{equation}
Or, 
\begin{align*}
  \lambda_1 - \eta &\le \mu_1 \\
  \implies 1-\frac{\mu_1-\eta}{\lambda_1} &\leq \frac{2\eta}{\lambda_1}
\end{align*}
Putting the values for $\eta=\frac{w_1}{\sum_{i=1}^{n}w_i  },\;\lambda_1 = \frac{\sum_{i=1}^{n}w_i^2  }{\sum_{i=1}^{n}w_i  }$, and using \eqref{eqn:bound-on-d}, we get 
  $$\norm d \leq \frac{2w_1}{\norm w}$$
  Since, $x$ is, by definition perpendicular to $v_1\left( \ba-D \right)$, we can write
  \begin{align*}
    v_1\left( \ba-D \right)\cdot x = 0 &\implies d\cdot x + v_1\left( \ba \right)\cdot x = 0 \\
                                       &\implies \phi_1 = v_1\left( \ba \right)\cdot x = -d\cdot xb
  \end{align*}
  Using Cauchy-Schwarz, we get that $|\phi_1|\leq \norm{d} \leq \frac{2w_1}{\norm w}$. Thus, 
  \begin{align*}
    \ba x &= \phi_1\lambda_1v_1\left( \ba \right) + \phi_2\cdot 0 & \left(\because u\perp v_1\left(\ba\right)\implies u\in\ker\left( \ba \right)\right)\\
    \implies \norm{\ba x} &\leq \frac{2w_1\lambda_1}{\norm w}
  \end{align*}
  Combining with \eqref{eqn:u2-intermidiate}, and noting the fact the eigenvalues of $D$ are all less than $1$, we get the result.
\end{proof}

\begin{proof}[Proof of \Cref{thm:gw-main}] 
    If the stated conditions hold, then with probability at least $1 - \delta$, all desired bounds on eigenvalues and norms are obtained simultaneously by \Cref{thm:chung-radcliffe-main}. 
First we see that this implies a constant-factor separation between $\lambda_1$ and $\lambda_2$ for the realized adjacency matrix. This ensures that $\beta \lambda_1 < 1$ by our condition on $\beta$.
From the proof of \Cref{lemma:1st-eigenvector-close} we know that
\begin{align*}
    \lambda_1 \geq&\; \tilde{d} - \sqrt{4 \w_1 \log(2n/\delta) } - 1,
\end{align*}
and from \Cref{lemma:2nd-eigenvalue-bound-2}, we know that
\begin{align*}
    \lambda_1 \leq&\; \frac{2 w_1 \tilde{d}}{\norm{w}} + \sqrt{4 \w_1 \log(2n/\delta) } +1 \\
    \leq&\;  \frac{\tilde{d}}{3} + \sqrt{4 \w_1 \log(2n/\delta) } +1.
\end{align*}
For any $\epsilon < 1$ and $\delta < \frac{1}{2}$ this implies that $\lambda_1 > \frac{5}{6} \tilde{d}$ and $\lambda_2 < \frac{5}{12} \tilde{d}$, and so $\frac{\lambda_1}{2} > \lambda_2$.
We can use this to show a bound on the spectral gap of $M^{-1}$ which allows us to apply \Cref{prop:GGG-opt}: 
\begin{align}
    \left( \frac{\alpha_2}{\alpha_1 - \alpha_2} \right)^2 
    =&\; \left( \frac{1/(1 - \beta \lambda_2)^2 }{1/(1 - \beta \lambda_1)^2  - 1/(1 - \beta \lambda_2)^2 } \right)^2 \notag \\
    =&\; \left( \frac{1}{\left(\frac{1- \beta \lambda_2}{1 - \beta \lambda_1}\right)^2 - 1 } \right)^2 \notag \\ 
    \leq&\; \left( \frac{1}{\left(\frac{1- 0.5 \beta \lambda_1}{1 - \beta \lambda_1}\right)^2 - 1 } \right)^2 & (\lambda_1 \geq 2 \lambda_2 ) \notag \\ 
    \leq&\; \frac{1}{(\beta \lambda_1)^2}. \label{eq:budgetbound}
\end{align}
where the final inequality holds by noting that equality holds only at $\beta \lambda_1 = 1.25$, and that \eqref{eq:budgetbound} is larger for all $\beta \lambda \in (0,1)$.
Thus, if 
\begin{align*}
    C \geq&\; \frac{2\norm{b}^2}{(\epsilon/(8\sqrt{2}))^2} \cdot \left( \frac{\alpha_2}{\alpha_1 - \alpha_2} \right)^2 \\
    =&\; \frac{256 \norm{b}^2}{\left( \beta \tilde{d} \epsilon\right) ^2}, 
\end{align*}
then $\rho(y^*, v_1(A)) \geq \sqrt{1 - (\epsilon/(8\sqrt{2}))^2} $ by \Cref{prop:GGG-opt}. 
Applying the inequality $\sqrt{1 - x} \geq 1 - x$ for $x \in [0,1]$, we have that $\rho(y^*, v_1(A)) > 1 - (\epsilon/(8\sqrt{2}))^2 = 1 - \epsilon^2/128$. 
Scaling $y^*$ to the unit vector $\frac{y^*}{\norm{y^*}}$ and applying the law of cosines, we get that $\norm{\frac{y^*}{\norm{y^*}} - \frac{v_1(A)}{\norm{v_1(A)}}} \leq \epsilon/8$.  
From \Cref{lemma:1st-eigenvector-close}, we also have that $\norm{v_1(A) - \frac{w}{\norm{w}}} \leq \epsilon/8$.  
By the triangle inequality this gives us that $\norm{\frac{y^*}{\norm{y^*}} - \frac{\w}{\norm{\w}}} \leq \epsilon/4$.
Applying \Cref{cosineutil} and law of cosines gives a competitive ratio of $1 - \epsilon$.
\end{proof}

\begin{proof}[Proof of \Cref{lemma:gnp-eigenvector-bound}]
    When considering  $\gnp$ graphs where self-loops are added with probability $p$, this is equivalent to the $\gw$ graph distribution with $\w_i = np$ for all $i$. We give sufficient conditions for applying \Cref{lemma:1st-eigenvector-close}, which holds even if we do not allow self-loops. We can compute the second-order average expected degree as: 
\begin{align*}        
   \tilde{d} =&\; \frac{ \sum_i w_i^2  }{\sum_i w_i } \\
    =&\; \frac{n^2p^3}{np^2} \\ 
\end{align*}
and then can see that  
\begin{align*}
    \frac{256\left( \sqrt{4 w_1 \log(2n/\delta) } + 1 \right)}{\epsilon^2} \leq&\; \frac{256 \sqrt{6 w_1 \log(2n/\delta) } }{\epsilon^2} \\ 
    \leq&\; \frac{768 \sqrt{6} }{\epsilon^2} \cdot \sqrt{n} p,  & (p \geq \frac{4}{9} \log(2n / \delta))
\end{align*}
and so whenever $n$ is at least $(786^2 \cdot 6  /\epsilon^4)$,
we will have that $\tilde{d} = np$ is sufficiently large.
We already have the required bound on the maximum expected degree, which in fact holds for all expected degrees, and so we can directly apply \Cref{lemma:1st-eigenvector-close} to obtain the result.   
\end{proof}

\begin{proof}[Proof of \Cref{thm:gnp-main}]
    In order to apply \Cref{thm:gw-main}, we simply need to show that the spectral gap is not too small, which previously followed from the condition that $w \leq \frac{\norm{w}}{6}$. If $n$ is at least 36, this will hold for the $\gw$ interpretation of $\gnp$ graphs (prior to removing the diagonal), as $\norm{w} = n^{1.5} p$, so then $\frac{\norm{w}}{w} = \sqrt{n} \geq 6$. 
\end{proof}

\begin{proof}[Proof of \Cref{lemma:power-law-eig}]
   In \cite{Chung2003} it is proven that whenever $\sigma\in(2,2.5)$, the expected largest eigenvalue of a random power law graph is $\Theta \left( w_1^{3-\sigma} \right)$, where $w_1$ is the highest degree.     
   If $w_1$ is at least 
$\Omega\left( \left( \frac{\log(2n/\delta}{\epsilon^4} \right)^{\frac{1}{5 - 2\sigma}} \right)$, 
then the largest eigenvalue will be $\Omega \left( \frac{ \sqrt{w_1 \log n}}{\epsilon^2} \right)$. 
With appropriate constants, we obtain the required bound on $\tilde{d} = \lambda_1(\ba)$ to ensure that deviations from expectation are sufficiently small such that we can apply \Cref{lemma:1st-eigenvector-close}.  
\end{proof}

\begin{proof}[Proof of \Cref{thm:power-law-main}]
    By \Cref{lemma:power-law-eig}, the first eigenvector of the graph is sufficiently close to its expectation.
    For power law graphs larger than some constant, $\frac{w}{\norm{w}} < \frac{1}{6}$ will hold,
and so the second eigenvalue will be small enough such that, when applying Theorem 1 from \cite{chungradcliffe},
a spectral gap factor of $1/2$ will be obtained almost surely.
The bound above gives us 
a constant factor spectral gap for sufficiently large constants, 
which allows us to obtain the same budget lower bound as in \Cref{thm:gw-main}.  
We can then combine these results as in \Cref{gw} to obtain the theorem.
\end{proof}

\begin{proof}[Proof of \Cref{lemma:util-upper-bound}]
  If cosine similarity is small, much of the mass of the intervention
  is in components orthogonal to $y^*$, whose efficacy can be
  upper-bounded by the second eigenvalue of $M^{-1}$. This bound will
  depend on the first and second eigenvalues of $\beta A$. For some
  $\phi_1$, $\phi_2$ where $\phi_1^2 + \phi_2^2 = 1$, and for
  some unit vector $u$ orthogonal to $y^*$, we can write:
\begin{align*}
    y &= \phi_1 y^* + \phi_2 \sqrt{C} u \\
  \therefore W(y) &= \norm{M^{-1}y}^2 = \norm{\phi_1M^{-1}y^* +
                        \phi_2\sqrt CM^{-1}u}^2\\
     &= \phi_1^2\norm{M^{-1}y^*}^2 + C\phi_2^2\norm{M^{-1}u}^2 +
       2\phi_1\phi_2\brack{M^{-1}y^*}^\top \brack{M^{-1}u}\\
  \therefore \frac{W(y)}{W(y^*)} &\leq \phi_1^2 +
                                           \phi_2^2\frac{\alpha_2}{\alpha_1}
                                           + 2\phi_1\phi_2\sqrt{\frac
                                           {\alpha_2}{\alpha_1}}\\
     &\leq \gamma^2\brack{1-\frac{\alpha_2}{\alpha_1}} +
       \frac{\alpha_2}{\alpha_1} + 2\sqrt{\frac{\alpha_2}{\alpha_1}}\\
\end{align*}
\end{proof}

\begin{proof}[Proof of \Cref{lemma:real-deg-near-opt}]
From the $\gw$ version of the $(\epsilon, \delta)$-concentration condition, we have that
\begin{align*}
    \frac{\sqrt{w_1 \log(n / \delta) }}{\epsilon^2} \leq&\; O\parens{\frac{\norm{w}_2^2}{\norm{w}_1}}
\end{align*}
which implies that 
\begin{align*}
    \sqrt{ n w_1 \log(n / \delta) } \leq&\; O\parens{\epsilon^2 {\norm{w}_2}} \\
    \leq&\; O\parens{\epsilon {\norm{w}_2}}
\end{align*}
as $\norm{w}_1 \leq \sqrt{n} \norm{w}_2$ for any vector of length $n$.
The difference in degree vectors $\norm{w - w^*}$ can be expressed as
\begin{align*}
    \norm{\ba \mathbf{1} - A \mathbf{1}} \leq&\; \norm{\ba - A} \cdot \norm{\mathbf{1}}\\
    \leq&\; O\parens{ \sqrt{ n w_1 \log(n / \delta) } },
\end{align*}
where the last line follows from
\Cref{lemma:matrix-norm-deviation} and the $(\epsilon,\delta)$-concentration conditions.
From the analysis of \Cref{thm:sbm-main,thm:gw-main}, we can see that approximating the first eigenvector of $A$ to within $O(\epsilon)$ $\ell_2$ distance is sufficient for near-optimality, which we have for the true first eigenvector by triangle inequality with $w$.
\end{proof}

\begin{proof}[Proof of \Cref{lemma:gw-mixing-time}]

To see that there will be a giant component containing almost all vertices, note that this follows from analysis of the giant component in $G(n,p)$ graphs with $p = \frac{1}{\epsilon n}$, as all probabilities in $\ba$ are at least this large. A $\gnp$ graph with $p > \frac{1}{n}$ almost surely has a giant component of size $\Theta(n)$.
Consider the sampling of all edges in such a graph except for those incident on vertex $i$. The expected number of edges from $i$ to the $\Theta(n) = n^*$ size component will still be $\Theta(1/\epsilon)$. Each edge is included or excluded independently, and so the probability that no edges are included is at most $\parens{ 1 - \frac{O(1/\epsilon)}{n^*} }^{n^*} \leq \exp\parens{-O(1/\epsilon)}$. As this holds for each vertex, the expected number of vertices outside the giant component is $n\cdot \exp\parens{-O(1/\epsilon)}$. $A$ will be a block diagonal matrix with a block for each component, and the largest eigenvalue will be associated with this giant component. We concern ourselves only with this component in our mixing time analysis.

The mixing time of a graph $A$ can be analyzed using the second eigenvalue of its diffusion matrix. 
The first eigenvalue of $P$ will be 1, and from Proposition 1 in \cite{10.1007/BFb0023849}, it follows that random walks on connected graphs
have mixing time of $O\brack{\log (n/\epsilon)}$ to within additive probability $2\epsilon/n$ for each vertex, and thus $\epsilon$ total variation distance, if $\lambda_2(P)$ is bounded by a constant factor below 1. 

Observe that the second eigenvalue of $\overline{P} = \ba \overline{D}^{-1}$ is 0, as $\overline{P}$ has rank 1.
$A$ and $D$ will be close in spectral norm to their expectations for graphs satisfying $(\epsilon,\delta)$-concentration, and the same will hold for $D^{-1}$ when the minimum degree is large enough. 
Note that $n = \Omega\parens{\frac{\sqrt{w_1 \log(n/\delta)}}{\epsilon^2}}$ by the $(\epsilon,\delta)$-concentration conditions, as the first eigenvalue of $\ba$, which will always be less than $n$, must be at least this large. Given the lower bound on $w_1$, we then have that $n = \Omega\parens{\frac{ \log(n/\delta)}{\epsilon^2}}$, with a constant larger than 100.
We can then apply Hoeffding's inequality to see that the true degree of each vertex will be within $\frac{1}{2\epsilon}$ of its expectation, simultaneously with probability at least $1 - O(\delta)$. As such, for any vector $v$, the standard basis components of $D^{-1}v$ and $\overline{D}^{-1}$ are all within a factor of 2. We have already seen that when a $\gw$ graph $\ba$ satisfies $(\epsilon,\delta)$-concentration, $Av$ will be close to the projection of $v$ onto $v_1(A)$, as $\lambda_i(A) \ll \lambda_1(A)$ for $i > 1$. Together with the previous observation, this implies that the image of any vector will be close under $P$ and $\overline{P}$. Thus, with high probability $\lambda_2(P)$ will be bounded by a constant less than 1, giving us our desired mixing time.

\end{proof}

\begin{proof}[Proof of \Cref{lemma:sbm-block-bound}]
    Our proof will proceed using Theorem 5 from \cite{chungradcliffe} for bounding the deviation of sums of Hermitian random matrices. 
$\hat{A}$ can be viewed as a sum of $O(m^2)$ independent matrix random variables, with one variable for each pair of groups.
For each pair of groups, the matrix $\hat{A}$ with zeros for all other group pair entries is symmetric and Hermitian, and the sum of all of these matrices is $\hat{A}$.

    First, we show that each of these individual block matrices $\hat{A}^{ij}$ has an empirical frequency close to its expectation. By considering random variables for each of the $O(n^2/m^2)$ edges and applying Chernoff bounds, we have that $$\abs{ \hat{p}_{ij} - p_{ij} } \leq O\left( \frac{m}{n} \sqrt{\log(m/\delta)} \right)$$
for all groups with probability $1 - \delta/2$.

To apply the theorem, we need to obtain a bound on $\norm{\ba^{ij} - \hat{A}^{ij}}$, the expected and empirical block matrices for each group.
All $O(n^2/m^2)$ non-zero entries in such a matrix equal will be, and will be bounded by $O (m/n \cdot \sqrt{ \log(m/\delta)})$ as we have just seen. The first eigenvector for such a matrix will have $O(n/m)$ non-zero entries, each $O(\sqrt{m/n})$, and its product with the matrix will again have $O(n/m)$ non-zero entries, each of which will be $O(\sqrt{m/n \cdot \log(m / \delta)})$. Thus the spectral norm, equivalent to the corresponding first absolute eigenvalue, will be $O(\sqrt{\log(m/\delta)})$.

Further, we need to bound the norm of the sum of the variance matrices to apply the theorem, which we denote $V^2 = \norm{\sum_{ij} \text{var}(\hat{A}^{ij})}$. For a matrix $A$,  $\text{var}(A) = \E[(A - \E[A])^2]$. 
Each entry in $(\hat{A}^{ij} - \ba^{ij})^2$ will be $O(k(p_{ij} - \hat{p}_{ij})^2)$
where $k$ is the number of edges in the block. 
The expected value of this is simply $k \cdot \text{var}(\hat{p}_{ij}) = p_{ij}(1-p_{ij})/k$.
Summing over all block variance matrices gives us a matrix where all entries (other than the diagonal if self-loops are ignored) are $O(m^2 / n^2)$; this matrix is $n$-dimensional, and so the spectral norm will be $O(m^2 / n)$. 

By Theorem 5 from \cite{chungradcliffe}, with $K = O\left(\max{\left(\frac{m\sqrt{\log(n/\delta)}}{\sqrt{n}}, \log^2(n/\delta)\right)}\right)$ :
\begin{align*}
    \Pr\left[\norm{\hat{A} - \ba} > K\right] \leq&\; 
    O\left(n \cdot \exp{\left( - \frac{K^2}{m^2 / n + K \cdot \log(m/\delta)}\right) } \right) \\
    =&\; O\left(n \cdot \exp{\left( - \frac{K \log(n/\delta )}{K}\right) }\right) \\
        \leq&\; \delta/2. \quad\quad\quad\quad\quad\quad\quad\quad\quad \quad\quad\quad\quad\quad \text{(approp. constants)}
\end{align*}
This completes the proof.
\end{proof}

\section{Analyzing Best Response Dynamics}
\label{best-response}
We can see that the social welfare of the game acts as a potential function when all agents' actions are below equilibrium levels, 
and so we should expect selfish agents to reach the new equilibrium after applying our intervention.

\begin{theorem}[Convergence of Best Response Dynamics]
    \label{thm:potential}
When $\beta > 0$, $b > 0$ and $a_0 < a^*$,
repeated best responses of agents will converge to equilibrium.
\end{theorem}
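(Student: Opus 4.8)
The plan is to view simultaneous best response as iteration of the affine map $T(a) = b + \beta A a$, whose unique fixed point is the equilibrium $a^* = M^{-1} b$, and to show that the iterates converge to $a^*$ using the spectral radius bound maintained throughout the paper. I would first record the error recursion: writing $e_t = a^* - a_t$ and subtracting $a_{t+1} = b + \beta A a_t$ from the fixed-point identity $a^* = b + \beta A a^*$ gives $e_{t+1} = \beta A\, e_t$, hence $e_t = (\beta A)^t e_0$. Since $A$ is the symmetric, entrywise-nonnegative adjacency matrix of an undirected graph, Perron--Frobenius guarantees that its largest eigenvalue $\lambda_1$ is also its spectral radius, so every eigenvalue of $\beta A$ has magnitude at most $\beta \lambda_1$. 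Under the standing assumption $\beta \lambda_1 < 1$ (equivalently, the spectral radius of $\beta A$ is below $1$, which is precisely the condition ensuring $M$ is positive definite that is maintained throughout), diagonalizing $A$ shows $(\beta A)^t \to 0$, and therefore $a_t \to a^*$.

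Next I would connect this to the potential-function picture in the stated regime $b > 0$, $a_0 < a^*$. The map $T$ is order-preserving by the nonnegativity of $\beta A$, and the Neumann series $M^{-1} = \sum_{k \geq 0} (\beta A)^k$ converges under the spectral bound and is entrywise nonnegative, so $a^* = M^{-1} b \geq 0$. Starting from null actions $a_0 = \mathbf{0} \leq a^*$, one has $a_1 = b > 0 \geq a_0$, and monotonicity of $T$ propagates the inequality coordinatewise to give an increasing sequence $a_0 \leq a_1 \leq \cdots \leq a^*$ that is bounded above; its limit must be the unique fixed point $a^*$. Along this monotone path the exact potential $\Phi(a) = b^{\top} a - \frac{1}{2}\norm{a}^2 + \frac{\beta}{2} a^{\top} A a$, which satisfies $\partial \Phi / \partial a_i = \partial u_i / \partial a_i$ and is strictly concave exactly when $M$ is positive definite, increases toward its unique maximizer $a^*$; this is the sense in which welfare behaves as a potential certifying convergence.

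The main obstacle is reconciling the potential-game intuition with the simultaneous update rule: a potential is guaranteed to increase under single-coordinate (sequential) best responses but not automatically under the simultaneous map $T$, so it is the monotone-operator structure (nonnegativity of $\beta A$ together with $b > 0$ and $a_0 \leq a^*$) that forces the iterates upward, while the spectral radius bound is what guarantees they actually reach $a^*$ rather than stall strictly below it. A secondary point to check is that $a_0 < a^*$ keeps all iterates in the nonnegative orthant where actions are economically meaningful, which follows from $a^* \geq 0$ and monotonicity once $a_0 \geq 0$; a general $a_0 < a^*$ that fails to be monotone under $T$ is then handled most cleanly through the error recursion above rather than the monotonicity argument.
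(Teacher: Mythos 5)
Your proposal is correct, and it is actually more rigorous than the paper's own proof, which takes a different and much more informal route. The paper argues verbally: starting from $a_0 = \mathbf{0}$ with $b > 0$ and $\beta > 0$, each best response raises an agent's action toward its (current) optimum, these increases can only help neighbors, so social welfare rises every round and the dynamics "therefore" converge; the paper then asserts welfare remains a potential after the intervention. This leaves two gaps that your argument closes. First, a bounded increasing potential by itself only shows the potential converges, not that the action profile does; your monotone-operator argument (order preservation of $T(a) = b + \beta A a$ from entrywise nonnegativity of $\beta A$, plus $a_t \leq a^*$ by induction) gives a genuinely convergent, coordinatewise increasing sequence whose limit must be the unique fixed point. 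Second, your error recursion $e_t = (\beta A)^t e_0$ together with the standing spectral radius assumption $\beta \lambda_1(A) < 1$ handles arbitrary initializations $a_0 < a^*$ (including non-monotone trajectories) and yields a geometric convergence rate, which the paper only gestures at in its final sentence about $\epsilon$-best responses. You also correctly identify the exact potential $\Phi(a) = b^{\top} a - \tfrac{1}{2}\norm{a}^2 + \tfrac{\beta}{2} a^{\top} A a$, whereas the paper conflates it with social welfare $W$ (whose interaction term carries coefficient $\beta$ rather than $\beta/2$, so $W$ is not the exact potential), and you are right to flag that exact potentials certify improvement under sequential, not simultaneous, updates --- a distinction the paper glosses over. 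In short: same conclusion, but where the paper relies on a heuristic welfare-monotonicity story, you supply the two structural facts (contraction from the spectral bound, monotonicity from nonnegativity) that make it airtight.
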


\begin{proof}[Proof of \Cref{thm:potential}]
Suppose $\beta$ and all $b_i$ values are positive, and all agents start with initial action levels $a_i = 0$. This is below the optimal action level for all agents, and each agent's utility function is concave upon fixing all other action values. When each agent best responds, they will increase their action level towards the optimum value. Because $\beta$ is positive, these increases can only improve the welfare of all other agents. Social welfare increases upon every best response, and thus best response dynamics will converge. The optimal intervention $\beta > 0$ will always be non-negative for the graphs we consider, as the social welfare is an increasing function of $b$. We can again see that Best Response Dynamics would result in asymptotic convergence to the new Nash equilibrium after updating the $b_i$ values; all updates will be non-negative, and thus the social welfare of the network remains a potential function for the game.
If agents only best-respond by more than $\epsilon$ in a given round, convergence will be polynomial in $\frac{1}{\epsilon}$ and other relevant parameters.
\end{proof}

\section{Further Preliminaries}

\subsection{Eigenvalue Transformations}
We note some observations about the eigenvalues of graphs which satisfy the assumptions from Section \ref{prelims}. For a graph $A$, eigenvalue $\lambda_i$ corresponds to eigenvalue $\frac{1}{1- \lambda_i}$ in $(I - A)^{-1}$ 
and $\frac{1}{1 - \beta\lambda_i}$ in $(I - \beta A)^{-1}$; corresponding eigenvectors will be identical for both $A$ and $(I - \beta A)^{-1}$. We will refer to this latter matrix $(I - \beta A)^{-1}$ as $M^{-1}$. Given that the spectral radius of $\beta A$ is less than $1$, $\frac{1}{1-\beta \lambda_i}$ is decreasing in $\lambda_i$, and so the eigenvalues of $M^{-1}$ are ordered according to their corresponding eigenvalues in $A$. Further, all eigenvalues of $M^{-1}$ will be positive. It follows that the spectral radius of $M^{-1}$, which we denote by $\lambda_M$, is $\frac{1}{1 - \beta \lambda_1(A)}$.

\subsection{Imported Results}
\label{sec:imports}
For completeness, we restate important results which we use in our analysis.
A key proposition from \cite{Galeotti2017TargetingII} shows that in the setting we consider, for large enough budgets, the optimal intervention for a graph is close in cosine similarity to the first eigenvector. 

\begin{prop}[Proposition 2 in \cite{Galeotti2017TargetingII}]\label{prop:GGG-opt}
    Suppose $A$ is symmetric, the spectral radius $\lambda_1(A)$ is less than $1/\beta$, and $\beta > 0$. Then for any $\epsilon > 0$, if $$C > \frac{2 \norm{b}^2}{\epsilon} \bigg(\frac{\alpha_2}{\alpha_1 - \alpha_2}\bigg)^2$$ then 
  $\rho(y^*, \sqrt{C}v_1(A)) > \sqrt{1 - \epsilon}$.
\end{prop}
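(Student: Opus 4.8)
The plan is to diagonalize the problem in the common eigenbasis of $A$ and $M^{-1} = (I-\beta A)^{-1}$, which is legitimate since these matrices commute and, because $\beta\lambda_1(A) < 1$, the largest eigenvalue $\mu_1 = 1/(1-\beta\lambda_1(A))$ of $M^{-1}$ corresponds to $v_1(A)$. Writing $c_i = \langle b, v_i(A)\rangle$ and $y_i = \langle y, v_i(A)\rangle$, the equilibrium welfare after intervention $y$ is
\[
W(y) \;=\; \tfrac12\norm{M^{-1}(b+y)}^2 \;=\; \tfrac12 \sum_i \alpha_i (c_i + y_i)^2 ,
\]
so $y^*$ maximizes a convex quadratic over the ball $\norm{y}^2 \le C$; the maximum is therefore attained on the boundary, with $\norm{y^*} = \sqrt{C}$. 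Since $\sqrt{C}\,v_1(A)$ is parallel to $v_1(A)$, we have $\rho(y^*, \sqrt{C}\,v_1(A))^2 = (y^*_1)^2/C = 1 - \tfrac1C\sum_{i\ge 2}(y^*_i)^2$, so the entire task reduces to bounding the ``off-top'' energy $\sum_{i\ge 2}(y^*_i)^2$.

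Next I would invoke the KKT/trust-region characterization of the maximizer. Stationarity gives a multiplier $\nu$ with $M^{-2}(b+y^*) = \nu\, y^*$, i.e.\ coordinatewise $y^*_i = \frac{\alpha_i c_i}{\nu - \alpha_i}$, and the defining property of the \emph{maximizer} of a convex quadratic over a sphere is that $\nu \ge \alpha_1$ (otherwise $M^{-2} - \nu I$ would have a positive direction along which welfare could be increased without leaving the sphere). For every $i \ge 2$ the map $t \mapsto t/(\nu-t)$ is increasing on $[0,\nu)$ and $\nu - \alpha_i \ge \alpha_1 - \alpha_2$, so
\[
\frac{\alpha_i}{\nu-\alpha_i} \;\le\; \frac{\alpha_2}{\alpha_1-\alpha_2},
\qquad\text{hence}\qquad
\sum_{i\ge2}(y^*_i)^2 \;\le\; \Bigl(\tfrac{\alpha_2}{\alpha_1-\alpha_2}\Bigr)^2 \sum_{i\ge2} c_i^2 \;\le\; \Bigl(\tfrac{\alpha_2}{\alpha_1-\alpha_2}\Bigr)^2 \norm{b}^2 .
\]
Combining this with the identity from the first paragraph yields $\rho(y^*,\sqrt{C}\,v_1(A))^2 \ge 1 - \tfrac{1}{C}\bigl(\tfrac{\alpha_2}{\alpha_1-\alpha_2}\bigr)^2\norm{b}^2$, and substituting the hypothesis $C > \tfrac{2\norm{b}^2}{\epsilon}\bigl(\tfrac{\alpha_2}{\alpha_1-\alpha_2}\bigr)^2$ makes the subtracted term at most $\epsilon/2 < \epsilon$, giving $\rho^2 > 1-\epsilon$ and therefore $\rho > \sqrt{1-\epsilon}$.

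The main obstacle is the degenerate ``hard case'' of the trust-region subproblem, which occurs precisely when $c_1 = 0$ (more generally when $\nu = \alpha_1$): there the formula $y^*_1 = \alpha_1 c_1/(\nu-\alpha_1)$ is $0/0$ and $y^*_1$ is instead determined as the residual budget filling the top direction. I would handle this by noting that the coordinate formula still applies verbatim for all $i \ge 2$ with $\nu = \alpha_1$, and that $\alpha_1 - \alpha_i \ge \alpha_1 - \alpha_2$ holds just as before, so the bound $\sum_{i\ge2}(y^*_i)^2 \le \bigl(\tfrac{\alpha_2}{\alpha_1-\alpha_2}\bigr)^2\norm{b}^2$ is unchanged; in fact concentration on $v_1(A)$ only improves in this case. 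The remaining care is in justifying $\nu \ge \alpha_1$ rigorously (the second-order condition for a boundary maximum of a convex objective), which is a standard fact about maximizing convex quadratics over a sphere.
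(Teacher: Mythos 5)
You have proved the statement correctly, but note that this paper never proves Proposition~\ref{prop:GGG-opt} at all: it is imported verbatim from \cite{Galeotti2017TargetingII} in the appendix of restated results, so there is no in-paper argument to compare against. Your write-up is a self-contained reconstruction of the Lagrangian characterization underlying the original source: diagonalizing $W(y)=\frac12\sum_i\alpha_i(c_i+y_i)^2$ in the common eigenbasis, placing the maximizer on the sphere, reading off $y_i^*=\alpha_i c_i/(\nu-\alpha_i)$ from stationarity, and bounding the off-top mass by $\left(\frac{\alpha_2}{\alpha_1-\alpha_2}\right)^2\|b\|^2$; your treatment of the hard case $\nu=\alpha_1$ (where $c_1=0$ is forced and $y_1^*$ absorbs the residual budget) is also right. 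Two points deserve tightening. First, your parenthetical justification of $\nu\ge\alpha_1$ is not yet a proof: the second-order condition restricted to the tangent space of the sphere only yields $\nu\ge\alpha_2$. The clean argument is the chord identity: for any $y$ with $\|y\|^2=C$, stationarity together with $\|y\|=\|y^*\|$ gives $W(y)-W(y^*)=\frac12(y-y^*)^\top(M^{-2}-\nu I)(y-y^*)$, and since every unit direction is (up to sign and a limit) a chord direction out of $y^*$ while the quadratic form is even, global optimality of $y^*$ forces $M^{-2}-\nu I\preceq 0$ on all of $\mathbb{R}^n$, i.e.\ $\nu\ge\alpha_1$; this is exactly the Mor\'e--Sorensen trust-region optimality condition you invoke as standard. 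Second, your bound controls $\rho^2$, hence only $|\rho|>\sqrt{1-\epsilon}$; to get the signed conclusion you need $y_1^*\ge 0$, which follows from $y_1^*=\alpha_1 c_1/(\nu-\alpha_1)$ once $v_1(A)$ is taken to be the entrywise nonnegative Perron vector of the nonnegative matrix $A$ and one uses the paper's standing assumption $b\ge 0$, so that $c_1=\langle b,v_1(A)\rangle\ge 0$ (and in the hard case one may simply choose the sign of $y_1^*$). Neither issue affects the structure of your argument.
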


We also state a key theorem about the spectra of random graphs.
\begin{theorem}[Theorem 1 in \cite{chungradcliffe}] \label{thm:chung-radcliffe-main}
      For a random graph with edges constructed independently according to $\bar{A}$ and maximum expected degree $\Delta \geq \frac{4}{9} \ln(2n/\delta)$, with probability $1-\delta$ for sufficiently large $n$, $$|\lambda_i (A) - \lambda_i (\bar{A})| \leq \sqrt{4\Delta \ln(2n/\delta)}$$
for all $1 \leq i \leq n$.
\end{theorem}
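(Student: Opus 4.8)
The plan is to reduce the eigenvalue bound to a spectral-norm bound on the centered random matrix $A - \bar A$ and then establish the latter via a matrix concentration (matrix Bernstein) inequality. By Weyl's inequality, for symmetric matrices one has $|\lambda_i(A) - \lambda_i(\bar A)| \leq \norm{A - \bar A}$ for every $i$, where $\norm{\cdot}$ is the spectral norm. Hence it suffices to show $\norm{A - \bar A} \leq \sqrt{4 \Delta \ln(2n/\delta)}$ with probability at least $1 - \delta$, after which the stated per-eigenvalue inequality holds for all $1 \leq i \leq n$ simultaneously.

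First I would write $A - \bar A$ as a sum of independent, mean-zero, symmetric summands, one per potential edge. For $i < j$ set $X_{ij} = (A_{ij} - \bar A_{ij})(e_i e_j^\top + e_j e_i^\top)$, and on the diagonal $X_{ii} = (A_{ii} - \bar A_{ii}) e_i e_i^\top$ (the diagonal terms simply vanish if self-loops are disallowed). Because edges are drawn independently, the $X_{ij}$ are independent; each is centered since $\E[A_{ij}] = \bar A_{ij}$; and $A - \bar A = \sum_{i \leq j} X_{ij}$. This is exactly where the independent-edge hypothesis is used.

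Next I would apply the matrix Bernstein inequality for bounded independent symmetric matrices: if $\norm{X_{ij}} \leq L$ almost surely and $v = \norm{\sum_{i \leq j} \E[X_{ij}^2]}$, then $\Pr\!\brack{\norm{\sum X_{ij}} \geq t} \leq 2n \exp\!\parens{\frac{-t^2/2}{v + Lt/3}}$. Both parameters are immediate to control. Since $|A_{ij} - \bar A_{ij}| \leq 1$ and $e_i e_j^\top + e_j e_i^\top$ has spectral norm $1$, we may take $L = 1$. For the variance proxy, a direct computation gives $X_{ij}^2 = (A_{ij} - \bar A_{ij})^2 (e_i e_i^\top + e_j e_j^\top)$, so $\E[X_{ij}^2] = \bar A_{ij}(1 - \bar A_{ij})(e_i e_i^\top + e_j e_j^\top)$; summing yields a diagonal matrix whose $(k,k)$ entry is $\sum_{j} \bar A_{kj}(1 - \bar A_{kj}) \leq \sum_j \bar A_{kj} \leq \Delta$. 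Thus $v \leq \Delta$.

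The final step is to plug in $t = \sqrt{4 \Delta \ln(2n/\delta)}$ and check that the hypothesis $\Delta \geq \frac{4}{9}\ln(2n/\delta)$ is precisely what forces the tail down to $\delta$. With these values the exponent equals $\frac{2\Delta \ln(2n/\delta)}{\Delta + t/3}$, which is at least $\ln(2n/\delta)$ exactly when $\Delta \geq t/3$, i.e.\ when $9\Delta^2 \geq 4\Delta \ln(2n/\delta)$, i.e.\ when $\Delta \geq \frac{4}{9}\ln(2n/\delta)$. Under that hypothesis the bound reads $2n \exp(-\ln(2n/\delta)) = \delta$, so $\norm{A - \bar A} \leq \sqrt{4 \Delta \ln(2n/\delta)}$ with probability $1 - \delta$, and the theorem follows from Weyl's inequality. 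The main obstacle I anticipate is reproducing the constant $\frac{4}{9}$: the whole argument hinges on the competition between the variance term $v = \Delta$ and the Bernstein correction $Lt/3$ in the denominator, so one must select the Bernstein variant and track the factor of $3$ carefully enough that the requirement $v \geq Lt/3$ becomes equivalent to the stated degree condition. A secondary bookkeeping point is pairing the off-diagonal summands symmetrically (and treating the diagonal separately) so that no entry is double-counted and the summands remain genuinely independent.
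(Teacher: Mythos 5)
Your proof is correct and follows essentially the same route as the source from which the paper imports this theorem without proof (Chung--Radcliffe): their argument is exactly your per-edge decomposition of $A - \bar A$ into independent mean-zero Hermitian summands, a matrix Bernstein-type bound (their Theorem 5) with $L = 1$ and variance proxy at most $\Delta$, and Weyl's inequality to pass from the spectral-norm deviation to all eigenvalues simultaneously. Your derivation of the constant $\frac{4}{9}$ from the balance condition $\Delta \geq t/3$ also matches the original accounting.
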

As noted in \cite{RePEc:arx:papers:1709.10402}, implicit from the proof of \Cref{thm:chung-radcliffe-main} in \cite{chungradcliffe} (where it is Theorem 1) is a deviation bound on the matrix norm. The result holds for all random graphs with independent edges, and we restate it here for the special case of $\gw$ graphs.

\begin{lemma}[Restatement of Theorem 1 in \cite{chungradcliffe}]    \label{lemma:matrix-norm-deviation}
    For a random graph with independent edges $A$ drawn from $\ba$, if the maximum expected degree $d_{\max}$ is at least $\frac{4}{9}\log(2n/\delta)$, then with probability at least $1 - \delta$ it holds that 
\begin{align*}
    \norm{A - \ba} \leq&\; \sqrt{4 d_{\max} \log(2n/\delta)}.
\end{align*}
\end{lemma}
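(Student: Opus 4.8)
The plan is to establish this as an instance of matrix concentration, following \cite{chungradcliffe}. First I would write the centered matrix as a sum of independent, mean-zero, symmetric random matrices indexed by the potential edges. Writing $e_i$ for the standard basis vectors, set $X_{ij} = (A_{ij} - \ba_{ij})(e_i e_j^\top + e_j e_i^\top)$ for $i < j$ and $X_{ii} = (A_{ii} - \ba_{ii}) e_i e_i^\top$, the latter vanishing when self-loops are disallowed. Because edges are included independently, the $X_{ij}$ are mutually independent, each has $\E[X_{ij}] = 0$, and $A - \ba = \sum_{i \leq j} X_{ij}$. Each entry $A_{ij} - \ba_{ij}$ is a centered Bernoulli deviation with magnitude at most $1$, so $\norm{X_{ij}} \leq 1$ almost surely, giving the uniform bound $L = 1$ on the summands.

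Next I would compute the variance proxy $\sigma^2 = \norm{\sum_{i \leq j} \E[X_{ij}^2]}$. For $i < j$ one has $(e_i e_j^\top + e_j e_i^\top)^2 = e_i e_i^\top + e_j e_j^\top$, hence $\E[X_{ij}^2] = \ba_{ij}(1 - \ba_{ij})(e_i e_i^\top + e_j e_j^\top)$, with the analogous (smaller) contribution on the diagonal. Summing produces a diagonal matrix whose $k$-th entry is $\sum_{j} \ba_{kj}(1 - \ba_{kj}) \leq \sum_j \ba_{kj}$, the expected degree of vertex $k$. Maximizing over $k$ yields $\sigma^2 \leq d_{\max}$.

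With these two bounds I would invoke a matrix Bernstein inequality for sums of independent symmetric matrices, of the form $\Pr[\norm{A - \ba} \geq t] \leq 2n \exp\left(-\frac{t^2/2}{\sigma^2 + Lt/3}\right)$. Substituting $\sigma^2 \leq d_{\max}$, $L = 1$, and the target $t = \sqrt{4 d_{\max} \log(2n/\delta)}$, the exponent is at least $\log(2n/\delta)$ exactly when $2 d_{\max} \geq d_{\max} + t/3$, i.e.\ when $3 d_{\max} \geq \sqrt{4 d_{\max} \log(2n/\delta)}$, which rearranges precisely to the hypothesis $d_{\max} \geq \frac{4}{9}\log(2n/\delta)$. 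Under this condition the tail is at most $2n \exp(-\log(2n/\delta)) = \delta$, which is the claim.

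The main obstacle is bookkeeping rather than conceptual: the result hinges on identifying the variance proxy as $d_{\max}$ and not something larger, and on tracking the constants so that the degree hypothesis and the target deviation line up exactly. The factor $\tfrac{4}{9}$ is dictated by forcing the linear term $Lt/3$ in the Bernstein denominator to be dominated by the variance term $\sigma^2$. One should also treat the self-loop and no-self-loop cases uniformly, observing that the diagonal terms only decrease both $L$ and $\sigma^2$ and therefore never weaken the bound.
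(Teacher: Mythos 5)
Your proof is correct, and it is essentially the argument behind the result the paper relies on: the paper does not prove this lemma itself but imports it as a restatement of Theorem 1 of \cite{chungradcliffe}, whose proof proceeds exactly as yours does --- decomposing $A - \ba$ into independent mean-zero Hermitian per-edge summands, bounding the variance proxy by $d_{\max}$, and applying a matrix Bernstein inequality so that the hypothesis $d_{\max} \geq \frac{4}{9}\log(2n/\delta)$ yields a tail of at most $\delta$ at deviation $\sqrt{4 d_{\max}\log(2n/\delta)}$. In short, you have faithfully reconstructed the cited proof, constants and all; there is no gap.
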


\end{document}